\newenvironment{socgtabular}{\begin{tabular}}{\end{tabular}}
\newcommand{\arxivonly}[1]{#1}
\newcommand{\arxivappendix}[1]{#1}
\newcommand{\submissiononly}[1]{}
\newcommand{\defn}[1]{\textbf{#1}}
\newcommand{\cC}{\mathcal{C}}
\renewcommand{\c}{\gamma}
\renewcommand{\R}{\mathbb{R}}
\newcommand{\si}[1]{\text{SI}(#1)}
\newtheorem{fact}[theorem]{Fact}
\newcommand{\plane}{\overline{\R^2}}
\definecolor{boneC}{RGB}{255,248,220}
\newcommand{\frequentedotfill}{\leavevmode\cleaders\hbox to 0.15em{\hfil.\hfil}\hfill\kern0pt}
\newcommand{\afterreview}[1]{#1}
\newcommand{\remove}[1]{}
\newcommand{\noremove}[1]{}
\title{Separating Two Points with Obstacles in the Plane: Improved Upper and Lower Bounds}
\titlerunning{Separating Two Points with Obstacles in the Plane} %
\author{Jack Spalding-Jamieson}{Independent}{jacksj@uwaterloo.ca}{https://orcid.org/0000-0002-1209-4345}{}
\author{Anurag Murty Naredla}{University of Bonn}{anaredla@uni-bonn.de}{https://orcid.org/0000-0002-3577-903X}{}
\authorrunning{J. Spalding-Jamieson, A. M. Naredla} %
\keywords{obstacle separation, point separation, geometric intersection graph, $Z_2$-homology, fine-grained lower bounds} %
\begin{document}

\maketitle

\begin{abstract}
Given two points in the plane, and a set of ``obstacles'' given as curves through the plane with assigned weights,
we consider the \defn{point-separation} problem,
which asks for a minimum-weight subset of the obstacles separating the two points.
A few computational models for this problem have been previously studied.
We give a unified approach to this problem in all models via a reduction
to a particular shortest-path problem,
and obtain improved running times in essentially all cases.
In addition, we also give fine-grained lower bounds for many cases.
\end{abstract}

\section{Introduction}
\label{sec:intro}

\begin{figure}[t]
\centering
\includegraphics[scale=0.60,page=4]{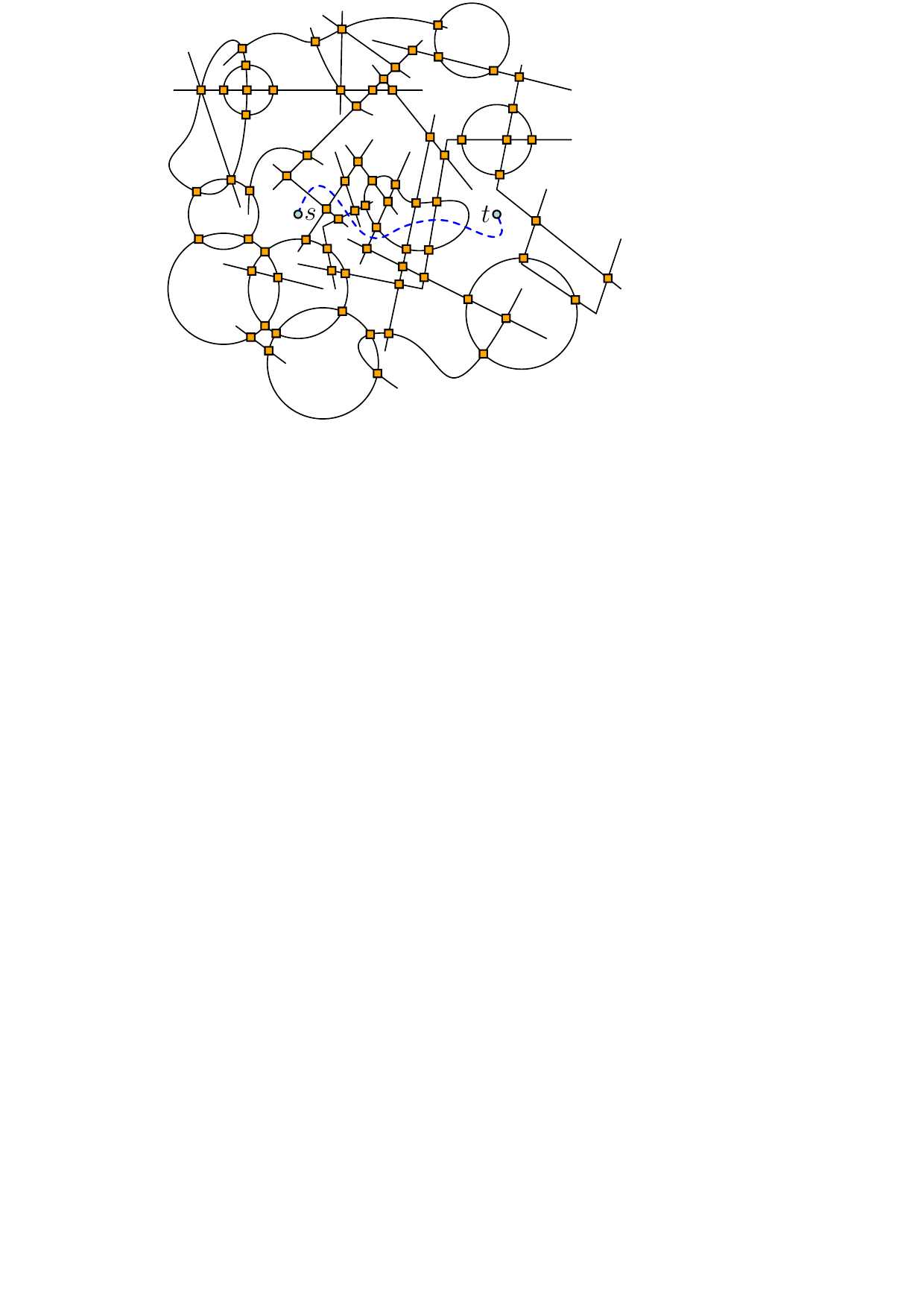}
\hspace{3em}
\includegraphics[scale=0.60,page=6]{example-curves}
\caption{An instance of the $(s,t)$ point-separation problem.
}
\label{fig:example-curves-1}
\end{figure}

Given points $s$ and $t$ in the plane, and a weighted set of \defn{obstacles} $\cC$ defined by simple closed curves (possibly also including their interiors),
the \defn{$(s,t)$ point-separation problem} asks
for a
minimum-weight subset $C$ of $\cC$ such that any path from $s$ to $t$ intersects some obstacle in $C$.
Equivalently, $s$ and $t$ are in different connected components of $\R^2\setminus(\cup_{\c\in C}\c)$.
We say such a subset \defn{separates} $s$ and $t$.
An example of this problem can be found in \cref{fig:example-curves-1}.

This is a natural problem that arises in various scenarios.
As a toy application of this problem: Suppose every night your dog runs from his doghouse to your backpack to eat your homework, taking an unpredictable route (making use of windows and doors). You have noticed that your dog will forget about your homework if it smells a treat.
You have a number of candidate locations to place treats,
and you'd like to ensure your dog is distracted from your homework every day using the fewest treats possible.
See \cref{fig:doggy-treats} for this example.
Similar applications also arise when considering
security (e.g., replacing the batteries in the fewest number of your dead security cameras to cover all possible paths from the entrance to your bank vault).
Additionally, $(s,t)$ point-separation has found an application as a tool for constant-factor approximation of the well-studied APX-hard problem ``barrier-resilience''~\cite{KumarLSS21}.

\begin{figure}
\centering
\begin{tikzpicture}[scale=0.28, every node/.style={font=\large,scale=0.56}]
\newcommand{\drawCommonBuilding}{%
  \draw[thick, fill=green!20] (-2,-2) rectangle (22,15);
  \node at (11,14) {Lawn};

  \fill[red!20]    (0,6)   rectangle (12,12);    %
  \fill[blue!20]   (0,0)   rectangle (12,6);      %
  \fill[yellow!20] (12,8)  rectangle (20,12);     %
  \fill[orange!20] (12,0)  rectangle (20,8);      %

  \draw[thick] (0,12) -- (2,12);
  \draw[thick,dotted] (2,12) -- (3,12);   %
  \draw[thick] (3,12) -- (6,12);
  \draw[thick,dotted] (6,12) -- (7,12);     %
  \draw[thick] (7,12) -- (9,12);
  \draw[thick] (11,12) -- (13,12);
  \draw[thick,dotted] (13,12) -- (14,12);   %
  \draw[thick] (14,12) -- (17,12);
  \draw[thick,dotted] (17,12) -- (18,12);   %
  \draw[thick] (18,12) -- (20,12);
  \node[above] at (2.5,12) {\small Window};
  \node[above] at (6.5,12) {\small Window};
  \node[above] at (13.5,12) {\small Window};
  \node[above] at (17.5,12) {\small Window};
  \draw (10,12)+(-1,0) arc (180:240:1);

  \draw[thick] (20,0) -- (20,2);
  \draw[thick,dotted] (20,2) -- (20,3);   %
  \draw[thick] (20,3) -- (20,5);
  \draw[thick,dotted] (20,5) -- (20,6);     %
  \draw[thick] (20,6) -- (20,8);
  \draw[thick,dotted] (20,8) -- (20,9);     %
  \draw[thick] (20,9) -- (20,11);
  \draw[thick,dotted] (20,11) -- (20,12);   %
  \node[rotate=90, anchor=center] at (20.5,2.5) {\small Window};
  \node[rotate=90, anchor=center] at (20.5,5.5) {\small Window};
  \node[rotate=90, anchor=center] at (20.5,8.5) {\small Window};
  \node[rotate=90, anchor=center] at (20.5,11.5){\small Window};

  \draw[thick] (0,0) -- (4,0);
  \draw[thick,dotted] (4,0) -- (5,0);    %
  \draw[thick] (5,0) -- (10,0);
  \draw[thick,dotted] (10,0) -- (11,0);  %
  \draw[thick] (11,0) -- (16,0);
  \draw[thick,dotted] (16,0) -- (17,0);  %
  \draw[thick] (17,0) -- (20,0);
  \node[below] at (4.5,0) {\small Window};
  \node[below] at (10.5,0) {\small Window};
  \node[below] at (16.5,0) {\small Window};

  \draw[thick] (0,0) -- (0,2);
  \draw[thick,dotted] (0,2) -- (0,3);   %
  \draw[thick] (0,3) -- (0,6);
  \draw[thick,dotted] (0,6) -- (0,7);   %
  \draw[thick] (0,7) -- (0,10);
  \draw[thick,dotted] (0,10) -- (0,11);   %
  \draw[thick] (0,11) -- (0,12);
  \node[rotate=90, anchor=center] at (-0.5,2.5) {\small Window};
  \node[rotate=90, anchor=center] at (-0.5,6.5) {\small Window};
  \node[rotate=90, anchor=center] at (-0.5,10.5){\small Window};

  \draw[thick] (12,0) -- (12,3.5);
  \draw[thick] (12,4.5) -- (12,9.5);
  \draw[thick] (12,10.5) -- (12,12);
  \draw (12,4.5)+(0,-1) arc (270:310:1);
  \draw (12,10.5)+(0,-1) arc (270:310:1);
  \draw[thick] (0,6) -- (5.5,6);
  \draw[thick] (6.5,6) -- (12,6);
  \draw (6.5,6)+(-1,0) arc (180:220:1);
  \draw[thick] (12,8) -- (15.5,8);
  \draw[thick] (16.5,8) -- (20,8);
  \draw (16.5,8)+(-1,0) arc (180:220:1);

  \node at (6,9)   {Kitchen};
  \node at (6,3)   {Living Room};
  \node at (16,10) {Bathroom};
  \node at (15.5,4) {Bedroom};

  \draw[thick] (2,1) rectangle (6,2.5);
  \node at (4,1.75) {Couch};

  \node[diamond, draw=red, fill=red, minimum size=2mm, inner sep=0pt] (doghouse) at (5,13.5) {};
  \node[above=0mm of doghouse, align=center] {\textbf{\small Doghouse}};
  \node[diamond, draw=red, fill=red, minimum size=2mm, inner sep=0pt] (backpack) at (13.7,6.5) {};
  \node[above=0mm of backpack, align=center] {\textbf{\small Backpack}};
}
  \begin{scope}[shift={(0,0)}]
    \drawCommonBuilding %
    \foreach \x/\y in {5/5,8/5,11/3,3/10.3,2.5/7.5,9/9.5,12/5.3,17/6,15/2,18/2,18.7/4,21/8.5,16.7/8,13/1} {%
      \draw[thick, color=magenta] (\x,\y) circle (1.5);
      \node[diamond, draw=magenta, fill=magenta!50, minimum size=2mm, inner sep=0pt] at (\x,\y) {};
    }
  \end{scope}
  \begin{scope}[shift={(25,0)}]
    \drawCommonBuilding %
    \foreach \x/\y in {11/3,12/5.3,17/6,15/2,18/2,18.7/4,16.7/8,13/1} {%
      \fill[thick, color=brown!50, fill opacity=0.85] (\x,\y) circle (1.5);
      \begin{scope}[shift={(\x,\y)}]
        \fill[color=brown] (-0.4,0.08) circle (0.1);
        \fill[color=brown] (-0.4,-0.08) circle (0.1);
        \fill[color=brown] (0.4,-0.08) circle (0.1);
        \fill[color=brown] (0.4,0.08) circle (0.1);
        \draw[color=brown,line width=0pt,fill=brown] (0.25,0.08)--(0.4-0.0707107,0.08+0.0707107)--(0.4-0.0707107,0.08);
        \draw[color=brown,line width=0pt,fill=brown] (-0.25,0.08)--(-0.4+0.0707107,0.08+0.0707107)--(-0.4+0.0707107,0.08);
        \draw[color=brown,line width=0pt,fill=brown] (0.25,-0.08)--(0.4-0.0707107,-0.08-0.0707107)--(0.4-0.0707107,-0.08);
        \draw[color=brown,line width=0pt,fill=brown] (-0.25,-0.08)--(-0.4+0.0707107,-0.08-0.0707107)--(-0.4+0.0707107,-0.08);
        \fill[color=brown] (-0.4,0.08) rectangle (0.4,-0.08);
      \end{scope}
    }
  \end{scope}
  \node[draw, align=left, anchor=west] at (-2,16.5) {%
    \parbox{8.5cm}{%
    \begin{tabular}{l|cccc}
      \textbf{Item} & Eating locations (\tikz[baseline=-0.5ex]\node[diamond, draw=magenta, fill=magenta!50, minimum size=2mm, inner sep=0pt]{};) & Walls & Couch & Windows \\
      \hline
      \textbf{Cost} & 1 & 0 & 0 & $\infty$ \\
    \end{tabular}
    }
  };
\end{tikzpicture}
\caption{An example of the point-separation problem applied to placing distracting doggy treats around a house, with the minimum-weight solution on the right.}
\label{fig:doggy-treats}
\end{figure}
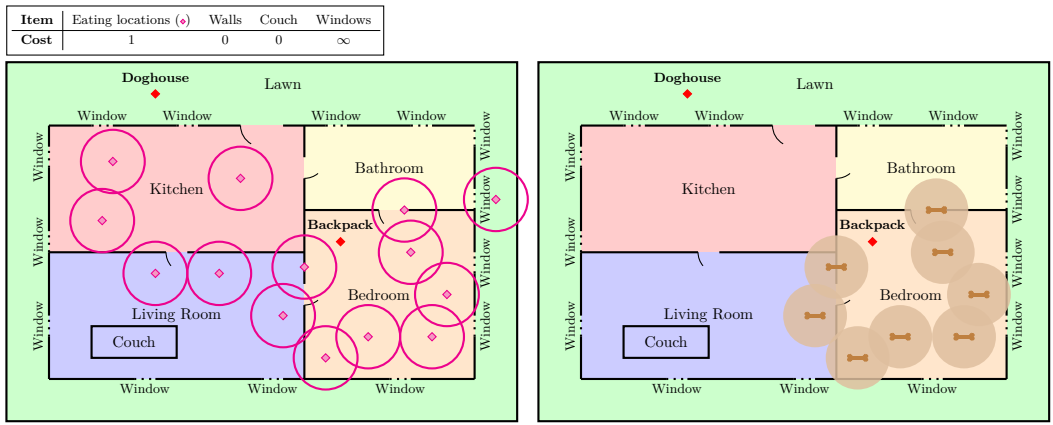

For brevity, we will henceforth say ``curve''
to mean a simple closed curve in $\R^2\setminus\{s,t\}$.
The algorithmic complexity of the $(s,t)$ point-separation problem depends on the chosen computational model of the obstacles.
Previous works use a few different models.
We categorize and name the different classes of models used as follows:
\begin{itemize}
    \item\defn{Specific Obstacle-Type Models}: Assume the set of curves $\cC$ takes on a special form with a standard representation, such as a set of disks, circles, or line segments.
    \item\defn{The Oracle-based Intersection Graph Model}: Assume the existence of several $O(1)$-time oracles that would allow the computation of the \defn{intersection graph} $G$ of $\cC$ (the graph whose vertices are exactly the curves $\cC$ and whose edges correspond to pairwise intersections of these curves),
    as well as some additional information for each edge related to $s$ and $t$ (detailed in \cref{sec:homology-informal}).
    \item\defn{The Arrangement Model}: Assume the arrangement of $\cC$ is provided as input, in the form of a plane (multi-)graph, with the faces corresponding to $s$ and $t$ labelled. This is a more graph-theoretic formulation.
    See \cref{fig:example-curves-arrangement} for an example of such an arrangement.
\end{itemize}

\begin{figure}
\centering
\includegraphics[scale=0.60,page=3,valign=M]{example-curves}
\hspace{1.5em}
\includegraphics[scale=0.60,page=3,valign=M]{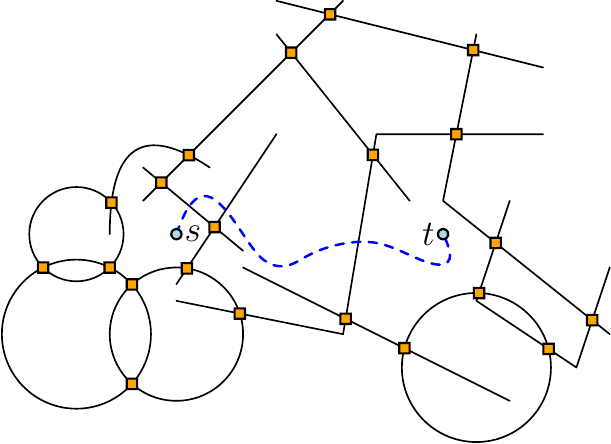}
\caption{An arrangement representation of the instance in \cref{fig:example-curves-1},
as well as an arrangement of a smaller instance.}
\label{fig:example-curves-arrangement}
\end{figure}

Prior works discussing this problem each focused on only one of these paradigms,
and the names for each paradigm are our own.
We will not assume general position,
so
no paradigm is strictly more general than the others,
since there are arrangements of $n$ obstacles
with $\Theta(n^2)$ pairwise intersections,
but only $O(n)$ unique intersection points.

In our work, our key method is to frame all models of this problem in terms
of a ``homology cover''.
Homology is a very broad topic in algebraic topology, which we will not attempt to summarize
in this paper. We aim our paper at a typical computational geometry audience, and we will not assume prior knowledge of homology.
We will present the necessary aspects in
\cref{sec:homology-informal}.

\subsection{Prior Work (Brief)}

In this subsection, we very briefly outline some key aspects of prior work.
A significantly extended form of this section is in
\arxivappendix{\cref{sec:prior-work}}.

Most importantly for our methods,
Kumar, Lokshtanov, Saurabh and Suri~\cite[Section 6]{KumarLSS21}
describe an algorithm that runs in polynomial time
in the \emph{arrangement model}.
Their algorithm implicitly makes use of the homology cover
(perhaps unintentionally),
in the same sense that we will use it.
In fact, their algorithm \afterreview{is in some ways} analogous to
an algorithm of Chambers, Erickson, Fox, and Nayyeri~\cite{ChambersEFN23}
for minimum-cuts (and maximum-flow) in surface graphs.
These two algorithms are the main inspiration for our approach to the $(s,t)$ point-separation at a high-level,
in \emph{all} models.

\subsection{Results and Organization}
\label{subsec:results}

We obtain improved algorithms for the $(s,t)$-point separation problem
in several cases, which we outline in
\cref{tab:positive-result-summary}.
As mentioned earlier, all of our positive results
make use of a reduction to a shortest-path problem
in the so-called ``homology cover''.
We discuss this formulation in \cref{sec:homology-informal},
and then again more rigorously in \arxivappendix{\cref{sec:homology-formal}}.
Using this reduction, the upper bounds are then given in \cref{sec:algorithms}.

\begin{table}[ht]
\centering
\begin{tabular}{lcrrr}
\toprule
\textbf{Model} & \textbf{Weights} &
\multicolumn{1}{c}{\textbf{Old}} &
\multicolumn{1}{c}{\textbf{New}} &
\multicolumn{1}{c}{\textbf{Cond.~L.B.}} \\
\midrule
Oracle & Yes                  & $O(n^3)$~\cite{CabelloG16} & $\widetilde O(n^{(3+\omega)/2})$ (Thm.~\ref{thm:better-than-n3}) & $\Omega^*\left(n^2\right)$\arxivonly{
 (Thm.~\ref{thm:lb-segments-clique})}\\[3pt]
Oracle & No                & $O(n^3)$~\cite{CabelloG16} & $O(n^\omega\log n)$ (Thm.~\ref{thm:general-unweighted}) & $\Omega^*(n^{3/2})$\arxivonly{ (Cor.~\ref{thm:unweighted-lb})}\\[3pt]
Arrangement & Yes             & $O(km^2\lg k)$~\cite{KumarLSS21} & $O(km+k^2\lg k)$ (Thm.~\ref{thm:not-match-cg16-arrangement}) & $\Omega^*\left(km\right)^\dag$\arxivonly{ (Cor.~\ref{cor:lb-polylines})}\\[3pt]
Segments & No      & $O(n^3)$~\cite{CabelloG16} & $O(n^{7/3}\log^{1/3}n)$ (Cor.~\ref{cor:restricted-si-actual}) & $\Omega^*(n^{3/2})$\arxivonly{ (Thm.~\ref{thm:unweighted-lb})}\\[3pt]
{\begin{socgtabular}{@{}l@{}}Axis-aligned\\segments\end{socgtabular}} & No & $O(n^3)$~\cite{CabelloG16} & $O(n^2\log\log n)$ (Cor.~\ref{cor:restricted-si-actual}) & None\\[9pt]
Unit Disks & No                 & $O(n^2\log^3 n)$~\cite{CabelloM18} & $O(n^2\log n)$ (Cor.~\ref{cor:restricted-si-actual}) & None\\[3pt]
Disks & No                 & $O(n^3)$~\cite{CabelloG16} & $O(n^2\log n)$ (Cor.~\ref{cor:restricted-si-actual}) & None\\[3pt]
\begin{socgtabular}{@{}l@{}}$O(1)$-length\\polylines\end{socgtabular}& No & $O(n^3)$~\cite{CabelloG16} &
$O(n^{7/3}\log^{1/3}n)$ (Cor.~\ref{cor:restricted-si-actual})
& $\Omega^*\left(n^{3/2}\right)$\arxivonly{ (Thm.~\ref{thm:unweighted-lb})}\\[9pt]
\begin{socgtabular}{@{}l@{}}$O(1)$-length\\rectilinear\\polylines\end{socgtabular}& No & $O(n^3)$~\cite{CabelloG16} &
$O(n^2\log\log n)$ (Cor.~\ref{cor:restricted-si-actual})
& $\Omega^*(n^{3/2})$\arxivonly{ (Thm.~\ref{thm:unweighted-lb})}\\
Segments & Yes      & $O(n^3)$~\cite{CabelloG16} & $\widetilde O(n^{7/3})$ (Cor.~\ref{cor:separation-biclique-cover-times}) & $\Omega^*\left(n^2\right)$\arxivonly{ (Thm.~\ref{thm:lb-segments-clique})}\\[3pt]
\begin{socgtabular}{@{}l@{}}Axis-aligned\\segments\end{socgtabular} & Yes & $O(n^3)$~\cite{CabelloG16} & $\widetilde O(n^2)$ (Cor.~\ref{cor:separation-biclique-cover-times}) & None\\[9pt]
\begin{socgtabular}{@{}l@{}}$O(1)$-length\\polylines\end{socgtabular}& Yes & $O(n^3)$~\cite{CabelloG16} & $\widetilde O(n^{7/3})$ (Cor.~\ref{cor:separation-biclique-cover-times}) & $\Omega^*\left(n^2\right)$\arxivonly{ (Thm.~\ref{thm:lb-segments-clique})}\\[9pt]
\begin{socgtabular}{@{}l@{}}$O(1)$-length\\rectilinear\\polylines\end{socgtabular}& Yes & $O(n^3)$~\cite{CabelloG16} & $\widetilde O(n^2)$ (Cor.~\ref{cor:separation-biclique-cover-times}) & $\Omega^*(n^2)$\arxivonly{ (Thm.~\ref{thm:lb-segments-clique})}\\
\bottomrule
\end{tabular}
\caption{The time complexities of our algorithms for various obstacle models. In all cases, $n:=|\cC|$. For the arrangement model, $k$ denotes the vertex count of the arrangement, and $m$ is the number of obstacle-vertex incidences (so $m\geq k$).
The notation $\Omega^*(\cdot)$ hides sub-polynomial factors.
The notation $^\dag$ denotes that this is only true for one particular mutual dependence of $k$, $m$, and $n$ (and is a slight abuse of notation).
$\omega$ is the matrix-multiplication exponent ($\omega<2.371339$~\cite{alman2024asymmetryyieldsfastermatrix}).
}
\label{tab:positive-result-summary}
\end{table}

We also obtain several fine-grained lower bounds in \cref{sec:lower-bounds}.
Our lower bounds also all have a shared foundation,
which will be stated
in \cref{thm:k-cycle-reduction}.
The resulting lower bounds are based on a few different hypotheses,
and we give their details in
\arxivappendix{\cref{sec:lb-details}}.

\section{Homology and Obstacles [Informal]}
\label{sec:homology-informal}

In this section, we will explain the main topological tools we will use to approach the $(s,t)$ point-separation problem.
However, this will be an information section, not requiring prior knowledge of any aspect of topology.
Rather, we will give an equivalent formulation of the important pieces using simpler tools from computational geometry.
We give a more formal treatment in \arxivappendix{\cref{sec:homology-formal}}.
Throughout this section, we will make (implicit) assumptions of general position and such,
but the deferred formal treatment does not need these.

At a high-level, there are two main steps to the constructions of our approach.
First, we will discuss what it means for a \emph{simple curve} to separate $s$ and $t$,
and what tools exist to classify such curves.
Second, we will discuss what it means for a \emph{set of obstacles} to separate $s$ and $t$.
\afterreview{That is, when does the union of the obstacles \emph{contain} a simple curve separating $s$ and $t$?}

The simplest possible case of asking whether a simple curve separates $s$ and $t$ is
characterized by the
\defn{point-in-polygon} problem, which asks:
Given a point $p$ and a (simple) polygon $P$, is $p$ inside $P$?
In fact, this is equivalent to asking if $P$ separates $p$ and the point at infinity.
There is a folklore algorithm for this problem that
takes any ray $r$ starting at $p$,
and counts the number of intersections between $r$ and $P$.
Then, $P$ contains $p$ if and only if the number of intersections is odd.
In fact,
essentially the same algorithm can be used to test whether or not $P$ separates
two points $p_1$ and $p_2$.
Rather than using a ray, take the segment $\overline{p_1p_2}$.
Count the number of intersections between $\overline{p_1p_2}$ and $P$.
The count is odd if and only if $P$ separates $p_1$ and $p_2$.
Moreover, there is in fact nothing special about rays or line segments in this problem.
\emph{Any} (closed) path between $p_1$ and $p_2$ would cross
$P$ the same number of times, modulo $2$.
See \cref{fig:point-in-polygon} for examples for these algorithms.

\begin{figure}
    \centering
    \includegraphics[scale=0.9,page=1,valign=t]{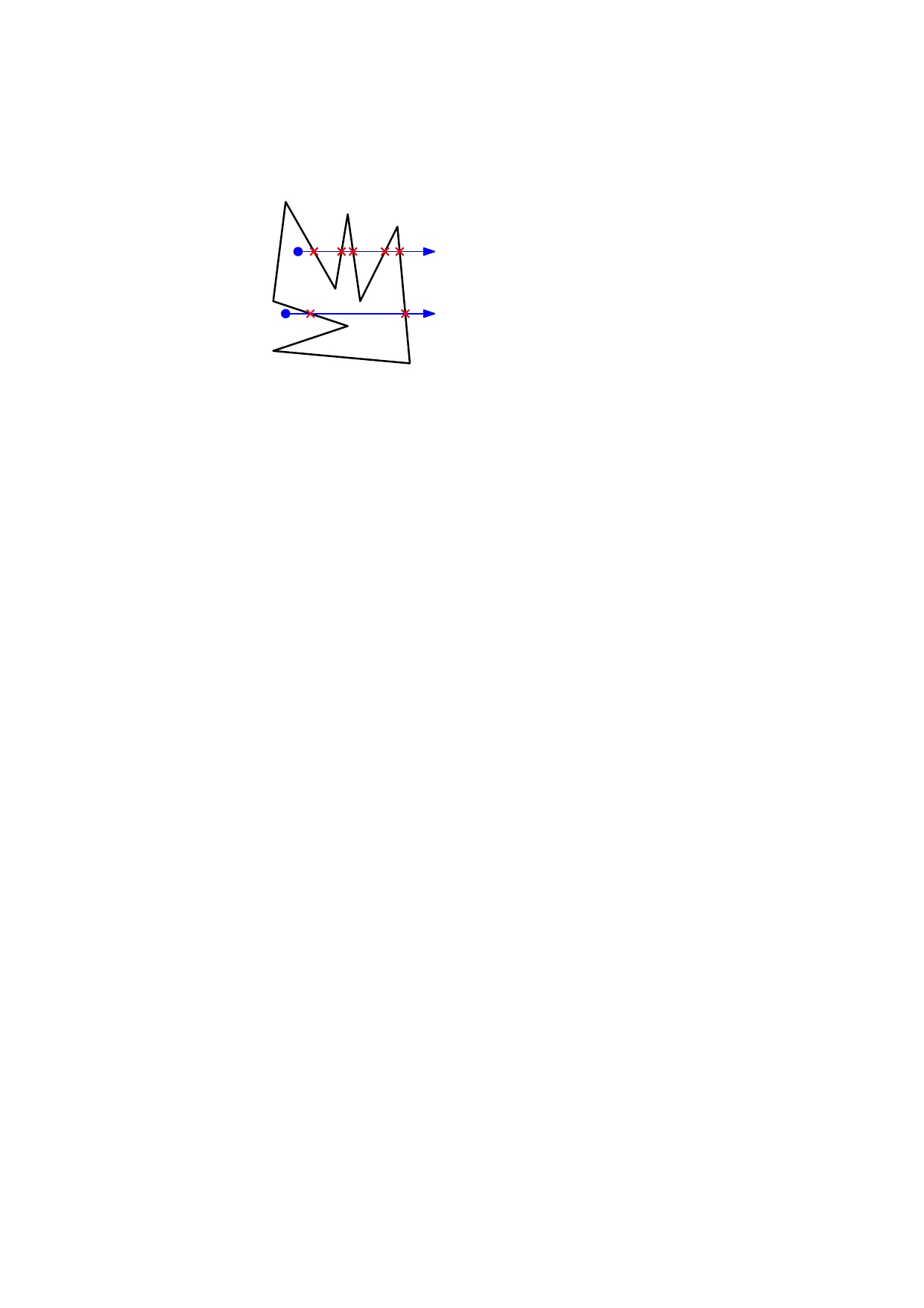}
    \hspace{3em}
    \includegraphics[scale=0.9,page=2,valign=t]{point-in-polygon-figure}
    \hspace{3em}
    \includegraphics[scale=0.9,page=3,valign=t]{point-in-polygon-figure}
    \caption{A demonstration of the algorithm for the point-in-polygon problem (left), as well as the problem of testing whether or not a polygon separates two points (middle).
    On the right, alternative paths for the separation problem are given.}
    \label{fig:point-in-polygon}
\end{figure}

All three of these algorithms are equivalent in a sense.
In fact, there is a further generalization:
Given two points $s$ and $t$ on the sphere, a simple and closed curve $C$ (not covering $s$ or $t$),
and any $s-t$ path $\pi$,
$C$ separates $s$ and $t$ if and only if $\pi$ crosses $C$ an odd number of times
(regardless of the choice of $\pi$).
Since the extended plane is homeomorphic to the sphere,
a ray in the plane corresponds to a (simple) path in the sphere.
See \cref{fig:point-pair-curve-separation} for an example.

\begin{figure}
    \centering
    \includegraphics[scale=0.9,page=4]{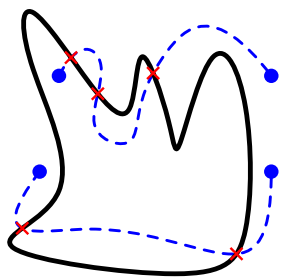}
    \caption{Examples of point-pairs separated or not separated by a given closed path between the pair, and the corresponding intersections of those curves.}
    \label{fig:point-pair-curve-separation}
\end{figure}

The problems we have discussed so far are all completely static,
and the methods do not provide much structure for solving more difficult problems.
One of the most common ways to extend the point-in-polygon problem
is to fix the polygon (or curve) $P$,
and aim to support fast queries of points.
This problem is known as ``point-location'',
and it is well-studied in computational geometry~\cite{BergCKO08}.
However, we want a different sort of structure:
We have a fixed pair of points $s$ and $t$,
and we wish to classify the curves that separate them.
Since we have fixed $s$ and $t$, we can also fix the path $\pi$
between them -- in most cases, we will use the line segment $\overline{st}$.
Then, the problem of classifying curves that separate $s$ and $t$
becomes the problem of classifying curves that cross $\overline{st}$ an odd number of times.

It will be helpful for demonstration to make some transformations to the space we work in.
That is, we will perform a sequence of homeomorphisms.
We start with the extended plane with the two marked points $s$ and $t$,
and the line segment $\overline{st}$
(see \cref{fig:annulus-construction-a}).
No curves we will be classifying cross $s$ or $t$,
so we may assume there are punctures
at $s$ and $t$.
We can enlarge these punctures into holes with a homeomorphism.
Next, since we have the extended plane,
we can make one of the punctures the outer face,
obtaining an annulus
(see \cref{fig:annulus-construction-b} and \cref{fig:annulus-construction-c}).
In performing these steps, the line segment $\overline{st}$ becomes a path between the inner and outer boundaries of the annulus.

\begin{figure}
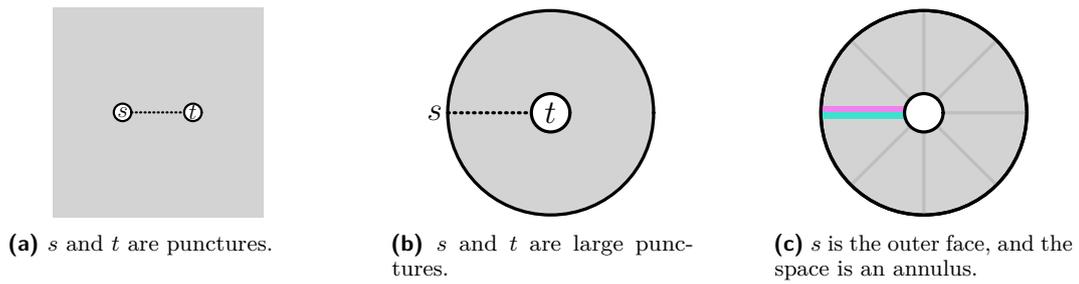

    \centering
    \begin{subfigure}[t]{0.28\textwidth}
        \centering
        \includegraphics[scale=0.6,page=7]{point-in-polygon-figure}
        \caption{$s$ and $t$ are punctures, $\pi$ is a path between them.}
        \label{fig:annulus-construction-a}
    \end{subfigure}
    \hfill
    \begin{subfigure}[t]{0.28\textwidth}
        \centering
        \includegraphics[scale=0.6,page=8]{point-in-polygon-figure}
        \caption{$s$ and $t$ are enlarged to become holes, and the space is an annulus.}
        \label{fig:annulus-construction-b}
    \end{subfigure}
    \hfill
    \begin{subfigure}[t]{0.28\textwidth}
        \centering
        \includegraphics[scale=0.6,page=9]{point-in-polygon-figure}
        \caption{
        The two sides of $\pi$ can be coloured to aid later notation.
        }
        \label{fig:annulus-construction-c}
    \end{subfigure}
    \caption{A demonstration of how the (extended) plane with two punctures is homeomorphic to the annulus.}
    \label{fig:annulus-construction}
\end{figure}

We now present a method for constructing an important
space called
the \defn{homology cover}\footnote{
We use the term ``homology cover'' to refer to the one-dimensional $\mathbb{Z}_2$-homology cover
of the annulus.
}.
Take the specified path $\pi$ between the two boundaries
(see \cref{fig:homology-cover-construction-a})
and slice it open
(see \cref{fig:homology-cover-construction-b}).
Then, create a second copy of the sliced annulus,
and glue them together
in a way that matches up the orientations of the sliced ends
(see \cref{fig:homology-cover-construction-c}).

\begin{figure}
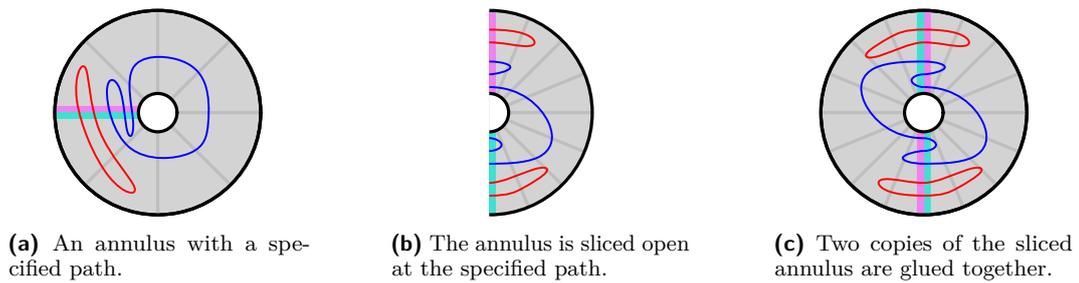

    \centering
    \begin{subfigure}[t]{0.28\textwidth}
        \centering
        \includegraphics[scale=0.6,page=13]{point-in-polygon-figure}
        \caption{An annulus with a specified path $\pi$.}
        \label{fig:homology-cover-construction-a}
    \end{subfigure}
    \hfill
    \begin{subfigure}[t]{0.28\textwidth}
        \centering
        \includegraphics[scale=0.6,page=14]{point-in-polygon-figure}
        \caption{The annulus is sliced open along the specified path $\pi$.}
        \label{fig:homology-cover-construction-b}
    \end{subfigure}
    \hfill
    \begin{subfigure}[t]{0.28\textwidth}
        \centering
        \includegraphics[scale=0.6,page=15]{point-in-polygon-figure}
        \caption{Two copies of the sliced annulus are glued together.}
        \label{fig:homology-cover-construction-c}
    \end{subfigure}
    \caption{The ``cut and glue'' construction of the homology cover,
    as well as how it maps a closed curve that separates the two boundaries (blue)
    and one that does not (red).}
    \label{fig:homology-cover-construction}
\end{figure}

Alternatively, an equivalent construction is to
create two copies of the original space
(whether that be the extended plane or the annulus),
and use each side of the path from $s$ to $t$
as a (separate) ``portal'' between the two copies.
A more general form of this ``portal'' idea has been studied
in the form of ``portalgons''~\cite{loffler_et_al,ophelders2024shortest},
of which the homology cover is essentially a special case.
See \cref{fig:homology-cover-construction-portal} for an example of this construction.

\begin{figure}
    \centering
    \includegraphics[scale=0.6,page=16]{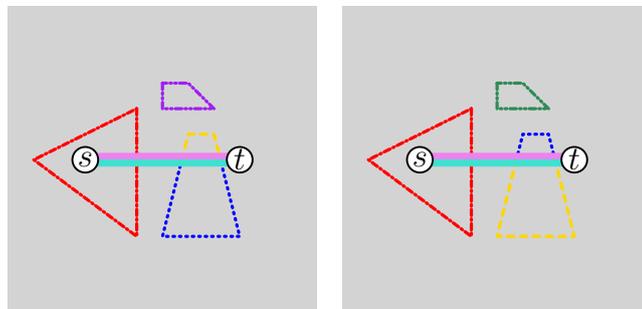}
    \caption{The ``portal'' construction of the homology cover. Each colour (or dot/dash pattern) is a single closed curve in the homology cover. The two solid lines are the portals.}
    \label{fig:homology-cover-construction-portal}
\end{figure}

One important aspect of this construction is that
it involves the connection of two identical copies
of the annulus (i.e., the extended plane with punctures $s$ and $t$).
In \cref{fig:homology-cover-construction},
we also show how this construction transforms
two curves -- one separating the two boundaries, and one not separating them.
The structure we will make use of
to study curves separating $s$ and $t$ in the plane ultimately stems from an important set of facts:
\begin{fact}
\label{fact:homology-cover-utility-informal}
For a simple curve $C$ in the annulus (or the plane)
that gets mapped to the set $C'$ in the homology cover,
and a point $p$ along $C$ that gets mapped to corresponding points $p_1$ and $p_2$ in the homology cover,
the following are all equivalent:
\begin{itemize}
    \item $C$ separates $s$ and $t$.
    \item $C'$ separates the two boundaries in the homology cover.
    \item $C'$ has one connected component (i.e., it is one closed curve instead of two).
    \item The points $p_1$ and $p_2$ are connected by a path through $C'$.
\end{itemize}
\end{fact}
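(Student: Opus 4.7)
The plan is to reduce all four statements to a single structural dichotomy: whether $C$, viewed as a simple closed curve in the annulus, is contractible or non-contractible. First, I would use the homeomorphism from $\plane\setminus\{s,t\}$ to the open annulus discussed in the excerpt, which turns ``$C$ separates $s$ and $t$'' into ``$C$ separates the two boundary components''. A standard Jordan-type argument in the annulus says that a simple closed curve $C$ separates the two boundaries if and only if it is non-contractible, if and only if it meets the portal arc $\pi$ (the image of $\overline{st}$) in an odd number of points. This establishes statement (1) in terms of the dichotomy.

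Next, I would analyze the lift $C'$ using the explicit portal construction of the double cover. In that picture, two copies of the annulus are stitched together along the two sides of $\pi$ in a ``crossed'' way, so any arc that crosses a portal must switch sheets. Parametrize $C$ starting at $p$ and lift starting at $p_1$; after one full traversal of $C$, the lift has switched sheets exactly $|C\cap\pi|$ times, so it closes up at $p_1$ iff $|C\cap\pi|$ is even. Therefore, if $|C\cap\pi|$ is even then $C'$ consists of two disjoint simple closed curves with $p_1$ and $p_2$ on different components, while if $|C\cap\pi|$ is odd then $C'$ is a single simple closed curve containing both $p_1$ and $p_2$. This yields the equivalence of (3) and (4) with the parity of $|C\cap\pi|$, hence with (1).

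For statement (2), in the contractible case $C$ bounds a disk $D$ in the annulus missing both $s$ and $t$, whose preimage is two disjoint disks in the cover bounded by the two components of $C'$; hence neither component of $C'$ separates the cover's two boundaries. In the non-contractible case, take any path in the cover $\tilde{A}$ connecting its two boundaries and project it down to an $s$--$t$ path in the base, which must meet $C$ an odd number of times (by the equivalent characterization above); each base crossing lifts to a crossing of $C'$, so the original path crosses $C'$ an odd, hence nonzero, number of times, proving that $C'$ separates the two boundaries of $\tilde{A}$.

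The main obstacle is making the lifting argument precise without invoking covering-space theory explicitly, since this section is deliberately informal. I would address this by appealing directly to the cut-and-glue and portal pictures in \cref{fig:homology-cover-construction,fig:homology-cover-construction-portal}, from which ``crossing a portal switches sheets'' is self-evident, reducing everything to elementary parity counts in the spirit of the point-in-polygon argument already used in the section. A rigorous covering-space justification is deferred to \cref{sec:homology-formal}.
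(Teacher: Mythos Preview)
Your argument is correct, but note that the paper does not actually prove this fact. In \cref{sec:homology-informal} it is stated as a \emph{Fact}, justified only by the cut-and-glue and portal pictures in \cref{fig:homology-cover-construction,fig:homology-cover-construction-portal}, and the formal counterpart (\cref{lemma:homology-cover-correctness}) is simply attributed to standard homology and to Chambers et al.~\cite{ChambersEFN23}. So you are supplying strictly more than the paper does: an explicit reduction to the contractible/non-contractible dichotomy via crossing parity with $\pi$, a lifting argument for (3) and (4), and a separate disk/projection argument for (2). That is a reasonable elementary proof in the spirit of the section; the paper instead treats the statement as imported background.
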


It is this last characterization that will be the most critical for obstacles.
In particular, a consequence of this characterization is that
if two corresponding points $p_1$ and $p_2$ in the homology cover (corresponding in the sense that they are identical points in different copies of the annulus/plane)
have a simple path $\pi$ between them,
then that path can be mapped to a closed curve separating $s$ and $t$.

\subsection{Obstacles and the Homology Cover}

We now have tools for characterizing \emph{closed curves} that separate $s$ and $t$.
We'd now like to characterize \emph{sets of obstacles} that separate $s$ and $t$.
In other words, we'd like to characterize unions of closed curves (obstacles)
that contain a closed curve separating $s$ and $t$.
We'll give two different tools for this,
first for the arrangement model, then for the oracle model.
For simplicity, we will work exclusively with obstacles that do not \emph{individually} separate
$s$ and $t$ (that is, obstacles that get mapped to \emph{two} separate closed curves in the homology cover).
We will reduce to this case algorithmically at the start of
\cref{sec:algorithms}.

In the arrangement model of the $(s,t)$ point-separation problem,
we are given a plane graph $D$ (the arrangement)
with specified faces $s$ and $t$ (can be obtained from points via point-location),
and a set of obstacles $\sigma$, each given as a connected subgraph.
The homology cover has a simple graph-theoretic interpretation in this framework:
Take any (simple) dual-path $\pi$ of faces from $s$ to $t$ -- this is will serve as the ``portal''.
Next, create two copies $D_1,D_2$ of $D$, each with the faces $s$ and $t$ removed.
We will create a combined graph $\overline D$ starting from the union of $D_1$ and $D_2$.
For each edge $e=uv$ used in the dual-path $\pi$,
delete $e$ from both $D_1$ and $D_2$ (inside $\overline D$).
Denote the corresponding vertices of $u$ and $v$ in each of $D_1$ and $D_2$ as $u_1,v_1$ and $u_2,v_2$, respectively.
Then create new edges $u_1v_2$ and $u_2v_1$ in $\overline D$.
This form of the homology cover is visualized in \cref{fig:example-curves-arrangement-cover}.

\begin{figure}[ht]
    \centering
    \begin{subfigure}[t]{0.30\textwidth}
        \centering
        \includegraphics[scale=0.40,page=1,valign=M]{example-curves-small}
        \caption{The arrangement with a specified path $\pi$.}
    \end{subfigure}
    \hfill
    \begin{subfigure}[t]{0.30\textwidth}
        \centering
        \includegraphics[scale=0.40,page=5,valign=M]{example-curves-small}
        \caption{A simple separating cycle in the arrangement.}
        \label{fig:example-curves-arrangement-cover:sep}
    \end{subfigure}
    \hfill
    \begin{subfigure}[t]{0.30\textwidth}
        \centering
        \includegraphics[scale=0.40,page=9,valign=M]{example-curves-small}
        \caption{The edges of the arrangement crossing $\pi$ are deleted.}
    \end{subfigure}\\
    \begin{subfigure}[t]{0.47\textwidth}
        \centering
        \includegraphics[scale=0.50,page=11,valign=M]{example-curves-small}
        \caption{Two copies of are created, and the deleted edges are added as ``crossing'' edges (in purple).}
    \end{subfigure}
    \hfill
    \begin{subfigure}[t]{0.47\textwidth}
        \centering
        \includegraphics[scale=0.50,page=13,valign=M]{example-curves-small}
        \caption{A path between two copies of the same original vertex that would project onto the simple separating cycle in the original space from
        \cref{fig:example-curves-arrangement-cover:sep}.}
    \end{subfigure}
    \caption{The construction of the homology cover for the second arrangement given in \cref{fig:example-curves-arrangement}.}
    \label{fig:example-curves-arrangement-cover}
\end{figure}

With this particular arrangement structure
Kumar, Lokshtanov, Saurabh and Suri~\cite[Section 6]{KumarLSS21}
built a graph
using one vertex
per obstacle-vertex incidence
(in each copy of the plane graph).
We will
build a smaller graph of similar form to theirs.
First, for each obstacle $C\in\sigma$ (which induces a subgraph of $D$),
pick some arbitrary \defn{canonical point} $x\in C$.
Since this choice is arbitrary, it will sometimes
be useful to assume it is a specific point -- this will primarily be useful for specific obstacle types.
Assume that $x$ is also a vertex in the subgraph of $D$ induced by $C$
(and if it is not, modify $D$ so that it is, with either a \afterreview{subdivision} or a new degree-$1$ vertex).
Note that since $C$ is connected in $D$, every other vertex $x'\in C$
is connected to $x$ by only edges in $C$.
Given a plane graph $D$ with faces $s$ and $t$\afterreview{,} a dual-path $\pi$,
obstacles $\sigma$, and canonical points for each obstacle, the \defn{auxiliary graph} $H$
is a bipartite graph constructed as follows:
\begin{itemize}
    \item The first set of vertices are the copies of arrangement vertices in the homology cover.
    For a vertex $v\in V(D)$, we denote these two copies $v^+$ and $v^-$.
    \item The second set of vertices are the copies of the obstacles/canonical points in the homology cover.
    For an obstacle $\gamma\in\sigma$, we denote these two copies as $(\gamma,-)$ and $(\gamma,+)$.
    \item The values $-$ and $+$ in each of the above are \defn{indicator bits}, and can be stored as $0$ and $1$, respectively.
    \item A vertex copy $v^{b_1}$ has an edge connecting it to a canonical point copy $(\gamma,b_2)$ when:
    \begin{itemize}
        \item $v\in\gamma$ (that is, the point $v$ \afterreview{belongs to} the obstacle $\gamma$ in the arrangement), and\dots
        \item The canonical point $x$ of $\gamma$ has a path through the edges of $\gamma$ to $v$
        whose intersection with the edges of $\pi$ has size $b_1+b_2$ modulo $2$
        (i.e., if $b_1=b_2$, the size must be even, and if $b_1\neq b_2$, the size must be odd).
        Equivalently, the projected connected component of $\gamma$ containing the canonical point copy $(\gamma,b_2)$
        also contains $v^{b_1}$.
    \end{itemize}
\end{itemize}

We visualize an example of the auxiliary graph in \cref{fig:example-curves-auxiliary-graph}.

\begin{figure}[t]
    \centering
    \begin{subfigure}[t]{0.47\textwidth}
        \centering
        \includegraphics[scale=0.50,page=19,valign=M]{example-curves-small}
        \caption{A canonical point is added along each obstacle, and edges are created from the canonical point to each arrangement vertex along the obstacle, subject to the parity conditions.}
    \end{subfigure}
    \hfill
    \begin{subfigure}[t]{0.47\textwidth}
        \centering
        \includegraphics[scale=0.50,page=20,valign=M]{example-curves-small}
        \caption{A shortest-path between the two copies of an obstacle's canonical point. The obstacles whose canonical points lie in this path form a set of obstacles separating $s$ and $t$.}
        \label{fig:example-curves-auxiliary-graph:sp}
    \end{subfigure}
    \begin{subfigure}[t]{0.9\textwidth}
        \centering
        \includegraphics[scale=0.50,page=6,valign=M]{example-curves-small}
        \caption{The set of obstacles induced by the shortest path in \cref{fig:example-curves-auxiliary-graph:sp}.}
    \end{subfigure}
    \caption{The construction of the auxiliary graph for the homology cover given in 
    \cref{fig:example-curves-arrangement-cover}.}
    \label{fig:example-curves-auxiliary-graph}
\end{figure}

The auxiliary graph is useful because the \afterreview{shortest path} between any two corresponding vertex copies
or (separately) any two corresponding canonical point copies both correspond exactly to the solution to the $(s,t)$ point-separation problem.
The high-level construction to prove this fact is as follows:
\begin{itemize}
    \item Suppose there is a set of obstacles $C$. We'd like to determine when $C$ separates $s$ and $t$.
    \item Denote the set of copies of the obstacles in $C$ in the homology cover as $C'$,
    so that every element of $C'$ is a closed curve representing one of the two connected components
    of an element of $C$ projected into the homology cover.
    \item By \cref{fact:homology-cover-utility-informal}, we deduce that $C$ separates $s$ and $t$
    if and only if there is a path from $v^+$ to $v^-$ along the union of obstacle copies in $C'$,
    for some arrangement vertex $v$ among pairs of elements in $C$ only.
    \item Moreover, since each obstacle is connected, we may further assume that any such path visits each obstacle copy at most once,
    and moreover that it only visits each \emph{obstacle} at most once.
    \item We may further assume that any such path visits the canonical point copy of each such obstacle copy,
    by inserting a path to and then from the canonical point copy while the obstacle copy is visited.
    Note that the resulting path may not be simple in the plane.
    \item Such paths also directly correspond to paths in the auxiliary graph, using only ``obstacle vertices'' corresponding to elements of $C$ and the ``arrangement vertices'' incident to pairs of elements in $C$.
    \item With appropriate weights, the problem of finding the minimum-weight set $C$ with this property
    then reduces to the problem of finding the pair $v^+,v^-$ with the shortest distance in the auxiliary graph,
    so this is a shortest-path problem!
    \item A similar argument can show that an alternative algorithmic formulation is to find the pair $(\gamma,-),(\gamma,+)$ with the shortest-distance in the auxiliary graph.
\end{itemize}
This essentially completes the set of tools necessary for the arrangement model.
These arguments are given in more detail in \arxivappendix{\cref{sec:homology-formal}}.

\begin{figure}[ht]
    \centering
    \begin{subfigure}[t]{0.47\textwidth}
        \centering
        \includegraphics[scale=0.5,page=14,valign=c]{example-curves-small}
        \caption{The intersection graph with the path $\pi$.}
    \end{subfigure}
    \hfill
    \begin{subfigure}[t]{0.47\textwidth}
        \centering
        \includegraphics[scale=0.5,page=15,valign=c]{example-curves-small}
        \caption{The intersection graph after deleting the edges
            whose underlying paths cross $\pi$ an odd number of times.}
    \end{subfigure}\\
    \begin{subfigure}[t]{0.47\textwidth}
        \centering
        \includegraphics[scale=0.5,page=17,valign=c]{example-curves-small}
        \caption{Two copies of the intersection graph are connected with ``crossing'' edges (in purple).}
    \end{subfigure}
    \hfill
    \begin{subfigure}[t]{0.47\textwidth}
        \centering
        \includegraphics[scale=0.5,page=18,valign=c]{example-curves-small}
        \caption{A shortest-path in the homology cover between two copies of an obstacle.}
    \end{subfigure}
    \caption{How the intersection graph can be transformed into the intersection graph in the homology cover.}
    \label{fig:example-curves-isect-cover}
\end{figure}

For the oracle model, the purpose of the oracles will be to construct
the geometric intersection graph of the obstacle \emph{copies} in the homology cover
(which we will denote as $\overline G$).
We will call $\overline G$ the \defn{intersection graph in the homology cover}.
See \cref{fig:example-curves-isect-cover} for an example.
We will present two different constructions of this graph.
They are equivalent, and each of them is quite simple,
but they will serve two different purposes:
The first will clarify which types of oracle queries are necessary,
and provide an algorithmic construction.
The second will instead build on the tools we have for the arrangement model,
proving the correctness of another shortest-paths approach.

Let $\cC$ denote the set of obstacles in the plane.
We assume the canonical points for each obstacle are given (or implied) -- they will be used by the oracle.
We also assume some simple $s-t$ path $\pi$ is fixed -- this will also be used by the oracle.
In most cases, $\pi$ will be $\overline{st}$.
The first construction of the graph is as follows:
\begin{itemize}
    \item For each obstacle $\gamma\in\cC$, create two vertices $(\gamma,-)$ and $(\gamma,+)$.
    \item For each pair of obstacles $\gamma_1,\gamma_2$
    with canonical points $x_1,x_2$,
    and values $b_1,b_2\in\{-,+\}$,
    we connect vertices $(\gamma_1,b_1)$ and $(\gamma_2,b_2)$ by an edge if
    there is a path from $x_1$ to $x_2$ in $\gamma_1\cup\gamma_2$
    crossing $\pi$ exactly $k$ times, for some $k$ where $k\equiv b_1+b_2\pmod 2$.
\end{itemize}
Testing for this condition is actually the \emph{only} type of oracle query needed
(although we will also require support for the case when $\gamma_1=\gamma_2$ later).
Cabello and Giannopoulos~\cite{CabelloG16} used a larger set of query types,
but together they can be used to perform this type
by carefully choosing the canonical points and fixing $\pi$ to $\overline{st}$,
so our variant of the oracle model \afterreview{is} slightly more general
(although practically equivalent).

We'd like a similar algorithmic property for the intersection graph in the homology cover $\overline G$
that we have for the auxiliary graph $H$.
The second construction will be what gives us that property,
by constructing $\overline G$ \emph{from} $H$:
\begin{itemize}
    \item For each arrangement vertex $v^b$ in $H$,
    let its adjacent vertices be denoted $x_1,\dots,x_k$.
    Delete $v^b$ and create a clique over $\{x_1,\dots,x_k\}$.
    \item After performing this for every arrangement vertex $v^b$,
    the result is \emph{exactly} the intersection graph in the homology cover $\overline G$.
    \item Therefore, the set of vertices visited by the shortest path from some $(\gamma,-)$ to $(\gamma,+)$ (over all possible $\gamma\in\cC$, breaking ties arbitrarily) corresponds to a minimum-weight set of obstacles separating $s$ and $t$.
\end{itemize}

For specific obstacle types,
we will usually work with $\overline G$,
although we will do it implicitly.
Hence, with the tools from this section, we can now discuss algorithms in all three model types
as solutions to a certain form of shortest-path problem.

\section{Algorithmic Results}
\label{sec:algorithms}

In this section, we will devise algorithms for the $(s,t)$ point-separation problem
based on our homology cover structures.
We will devise algorithms for all three model types:
Some that work with the arrangements of curves,
some that work with the ``oracle model'' (the intersection graph in the homology cover),
and some that work with specific types of obstacles.
Most of our algorithms are fairly simple reductions to various known shortest-path
algorithms, greatly simplifying some of the previous approaches to the $(s,t)$ point-separation problem.
However, some of them are more involved.

In the previous section, we assumed
no obstacle individually separated $s$ and $t$.
Before moving on, we quickly show this is enough:
\begin{observation}
Let $\cC$ be a set of obstacles,
and let $s$ and $t$ be points.
Suppose an oracle exists that,
for an obstacle $\c\in\cC$,
determines whether $\c$
itself separates $s$ and $t$,
all in $O(1)$ time.
Let $\cC_0$ be the set of all obstacles that do.
Then an instance of the weighted (unweighted) $(s,t)$ point-separation problem over $\cC$
can be reduced to an instance of the weighted (unweighted) $(s,t)$ point-separation
problem over $\cC\setminus\cC_0$ in $O(|\cC|)$ time.
In particular, such a reduction returns the best solution out of the solved subproblem (if any),
and each of the individual obstacles forming solutions (if any).
\end{observation}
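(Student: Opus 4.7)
The plan is to show that there is essentially no interaction between the ``trivial'' single-obstacle solutions (those from $\cC_0$) and all other potential solutions, so they can be handled separately in linear time.

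First, I would compute $\cC_0$ explicitly: loop through each $\c \in \cC$ and invoke the given $O(1)$-time oracle, collecting those that individually separate $s$ and $t$. This takes $O(|\cC|)$ time. Then output the (multi-)list of all $\c \in \cC_0$ as candidate ``singleton'' solutions, and output the instance on $\cC \setminus \cC_0$ as the reduced subproblem.

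The correctness argument is the key (but easy) step. Let $C^* \subseteq \cC$ be any optimal separating subset for the original instance. Split into cases. If $C^* \cap \cC_0 = \emptyset$, then $C^*$ is a feasible solution of the reduced instance over $\cC \setminus \cC_0$, and since every solution of the reduced instance is also a solution of the original instance, the optimum of the reduced instance equals the weight of $C^*$. Otherwise pick any $\c_0 \in C^* \cap \cC_0$; since $\c_0$ alone separates $s$ and $t$, the single-obstacle set $\{\c_0\}$ is already feasible for the original instance, and under the standard assumption that obstacle weights are non-negative (and are simply $1$ in the unweighted case) we have $w(\{\c_0\}) \le w(C^*)$, so by optimality $w(\{\c_0\}) = w(C^*)$ and the minimum-weight element of $\cC_0$ matches the optimum. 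Thus the better of (i) the optimum of the reduced problem and (ii) the minimum-weight element of $\cC_0$ equals the optimum of the original problem, and both candidates are produced by the reduction.

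There is no real obstacle here; the only thing to be careful about is the tacit non-negativity of weights (or equivalently, that adding obstacles to a separating set never strictly decreases its total weight), which is implicit throughout the paper. The running time is $O(|\cC|)$ as claimed, dominated by the single pass through $\cC$ invoking the oracle and by outputting the at most $|\cC|$ singleton candidates. From this point onward we may therefore assume, without loss of generality, that no obstacle individually separates $s$ and $t$, matching the standing assumption made in \cref{sec:homology-informal}.
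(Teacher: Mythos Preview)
Your proof is correct and matches the paper's approach exactly: the paper does not give a separate proof of this observation, but its last sentence already spells out the intended reduction (return the better of the reduced subproblem's solution and the best singleton from $\cC_0$), and your case analysis simply makes this explicit. The only detail you add is the tacit non-negativity of weights, which is indeed assumed throughout.
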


\subsection{General Weighted Obstacles in Oracle Model}
With no significant further work,
we already obtain some na\"ive algorithms by using
\arxivappendix{\cref{lemma:sp}}
(or the constructions in \cref{sec:homology-informal}):
\begin{theorem}
\label{thm:match-cg16}
For points $s$ and $t$,
and a weighted set of obstacles $\cC$
that have $k$ pairwise intersections,
the $(s,t)$ point-separation problem
can be solved in $O(|\cC|\cdot k+|\cC|^2\log|\cC|)$ time.
\end{theorem}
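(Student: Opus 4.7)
The plan is to reduce the problem to $|\cC|$ single-source shortest-path computations in the intersection graph $\overline G$ in the homology cover, as constructed at the end of \cref{sec:homology-informal}. Recall that $\overline G$ has $2|\cC|$ vertices, one pair $(\c,-),(\c,+)$ per obstacle $\c\in\cC$, and that the argument there identifies a minimum-weight separating set with a shortest path from $(\c_0,-)$ to $(\c_0,+)$, minimised over the choice of $\c_0\in\cC$, under a suitable vertex-weight convention.

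First I would construct $\overline G$ using the oracle. For each of the $\binom{|\cC|}{2}$ unordered obstacle pairs and each of the four sign combinations $b_1,b_2\in\{-,+\}$, a single oracle query decides whether the corresponding edge lies in $\overline G$; this totals $O(|\cC|^2)$ time. Since an edge can appear only between obstacles that actually intersect, and the number of intersecting pairs is at most the number $k$ of pairwise intersection points, we obtain $|E(\overline G)|=O(k)$. I would then realise each vertex weight $w(\c)$ on $(\c,b)$ as an edge weight via the standard split-vertex trick, preserving $|V(\overline G)|=O(|\cC|)$ and $|E(\overline G)|=O(k)$.

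Next, for each $\c\in\cC$ I would run Dijkstra's algorithm from $(\c,-)$ using a Fibonacci heap and read off the distance to $(\c,+)$. A single run costs $O(|\cC|\log|\cC|+k)$, so the $|\cC|$ runs together cost $O(|\cC|^2\log|\cC|+|\cC|\cdot k)$. After a constant-time correction per source (subtracting $w(\c)$ to avoid double-counting the source/target obstacle) and taking the overall minimum, the result is the weight of a minimum-weight separating set, comfortably within the claimed bound.

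There is no genuinely hard step here: the reduction to shortest paths is already in place from \cref{sec:homology-informal}, the edge-count bound $|E(\overline G)|=O(k)$ is essentially by definition, and the remaining weight bookkeeping is a standard trick. The mildest source of care is making sure the vertex-to-edge weight conversion accounts each obstacle's weight exactly once along the target path, which is why the per-source correction by $w(\c)$ is built into the algorithm.
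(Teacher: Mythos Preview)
Your proposal is correct and follows essentially the same approach as the paper: construct $\overline G$, then run Dijkstra with a Fibonacci heap from each $(\c,-)$ to $(\c,+)$ and take the minimum, yielding the stated bound via \cref{lemma:sp}. The paper's proof is terser (it does not spell out the oracle-based construction of $\overline G$, the vertex-to-edge weight conversion, or the per-source correction), but the algorithmic content is identical.
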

\begin{proof}
Compute the intersection graph in the homology cover $\overline G$ of $\cC$.
For each $\c\in\cC$,
compute the \afterreview{shortest path} from $(\c,-)$ to $(\c,+)$ in $\overline G$.
The smallest such path induces the solution by
\arxivappendix{\cref{lemma:sp}}.
Each \afterreview{shortest path} can be computed in $O(k+|\cC|\log|\cC|)$ time by Dijkstra's algorithm with a Fibonacci heap~\cite{clrs},
and there are $O(|\cC|)$ total such paths.
\end{proof}
Note that \cref{thm:match-cg16} matches the result of \cite{CabelloG16}, although
the algorithm is much simpler.
When $k\in\Theta(|\cC|^2)$ (i.e., dense intersection graphs),
this algorithm runs in $O(|\cC|^3)$ time,
which could also be obtained by running Floyd-Warshall for APSP instead of Dijkstra.
It is not known \afterreview{if} the general form of weighted
APSP can be solved in ``truly'' subcubic time (see \arxivappendix{\cref{subsec:apsp}}).
However, since the weights are applied to vertices, not edges,
there is a known faster algorithm
(also discussed in \arxivappendix{\cref{subsec:apsp}}):

\begin{theorem}
\label{thm:better-than-n3}
For points $s$ and $t$,
and a weighted set of obstacles $\cC$,
the $(s,t)$ point-separation problem
can be solved in
$\widetilde{O}(|\cC|^{(3+\omega)/2})$
time in the oracle model,
where $2\leq\omega<2.371339$ is the matrix multiplication exponent.
\end{theorem}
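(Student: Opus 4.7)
The plan is to reduce the problem to a vertex-weighted shortest-path computation in the intersection graph in the homology cover $\overline{G}$, and then invoke the matrix-multiplication-based APSP subroutine referenced in \cref{subsec:apsp}.

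First I would build $\overline{G}$ from the oracle. By the first construction of $\overline{G}$ in \cref{sec:homology-informal}, a single $O(1)$-time oracle query per ordered pair of obstacles and per choice of parities $b_1,b_2\in\{-,+\}$ determines whether the corresponding edge exists, so this phase takes $O(|\cC|^2)$ time and fits comfortably within the target bound. The resulting graph has $2|\cC|$ vertices, with each copy $(\c,b)$ inheriting the weight of its underlying obstacle $\c$, and $O(|\cC|^2)$ edges. By \cref{lemma:sp}, the optimum of the $(s,t)$ point-separation problem is exactly $\min_{\c\in\cC} d_{\overline{G}}((\c,-),(\c,+))$, where $d_{\overline{G}}$ is measured with respect to the vertex weights.

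Running Dijkstra from each source gives only the $O(|\cC|^3)$ bound already established in \cref{thm:match-cg16}, so to beat cubic time I would instead invoke the vertex-weighted APSP routine from \cref{subsec:apsp}, which solves APSP on any $N$-vertex graph with non-negative vertex weights in $\widetilde{O}(N^{(3+\omega)/2})$ time. Setting $N=2|\cC|$, reading off the $|\cC|$ entries of the form $d_{\overline{G}}((\c,-),(\c,+))$, and taking their minimum then yields the claimed $\widetilde{O}(|\cC|^{(3+\omega)/2})$ running time.

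The only real obstacle is the APSP subroutine itself, since the subcubic improvement is obtained entirely by exploiting that the weights live on the vertices rather than the edges — this is precisely the structural feature that enables the matrix-multiplication speedup discussed in \cref{subsec:apsp}. Given that subroutine, the remainder is bookkeeping: constructing $\overline{G}$ from the oracle, running vertex-weighted APSP on it, and minimising over the $|\cC|$ endpoint pairs. One small sanity check to confirm is that the target pairs $(\c,-),(\c,+)$ fall within the regime handled by the subroutine (non-negative vertex weights, an undirected graph on $\Theta(|\cC|)$ vertices), both of which hold by construction.
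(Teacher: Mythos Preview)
Your proposal is correct and follows essentially the same route as the paper: reduce via \cref{lemma:sp} to shortest paths in the vertex-weighted graph $\overline{G}$ on $2|\cC|$ vertices, then apply the $\widetilde{O}(N^{(3+\omega)/2})$ vertex-weighted APSP algorithm of Abboud, Fischer, Jin, Williams, and Xi cited in \cref{subsec:apsp}. The paper's proof is a two-line pointer to exactly this combination, so your expanded version matches it faithfully.
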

\begin{proof}
Apply the same reduction to APSP as before, but use
the algorithm of
Abboud, Fischer, Jin, Williams, and Xi~\cite{abboud2025allpairs}
for APSP with real vertex weights.
\end{proof}

\subsection{General Unweighted Obstacles in Oracle Model}

We are also interested in the $(s,t)$ point-separation problem for unit weights,
which corresponds to \afterreview{shortest path}s for unit weights.
In this case, we can obtain faster algorithms:

\begin{theorem}
\label{thm:general-unweighted}
For points $s$ and $t$,
and an \emph{unweighted} set of obstacles $\cC$,
the $(s,t)$ point-separation problem
can be solved in $O(|\cC|^\omega\log|\cC|)$ time,
where $2\leq\omega<2.371339$ is the matrix multiplication exponent.
\end{theorem}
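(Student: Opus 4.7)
The plan is to reduce the problem to unweighted undirected all-pairs shortest paths (APSP) in the intersection graph $\overline G$ in the homology cover, and invoke Seidel's algorithm. By the framework of \cref{sec:homology-informal}, the minimum-weight separating set corresponds to the minimum over $\c \in \cC$ of the shortest-path distance in $\overline G$ from $(\c,-)$ to $(\c,+)$. In the unit-weight case, each obstacle copy contributes weight $1$; by the shortcutting argument in \cref{sec:homology-informal}, any minimum such path visits each obstacle at most once, so a $(\c,-)$-to-$(\c,+)$ BFS path with $d$ edges visits $d$ pairwise distinct obstacles and induces a separating set of size exactly $d$.

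The algorithm would then proceed in three steps: first, construct $\overline G$ in $O(|\cC|^2)$ time by invoking the homology-cover intersection oracle on the $O(|\cC|^2)$ ordered pairs of obstacle copies, yielding an undirected unweighted graph on $2|\cC|$ vertices; second, run Seidel's algorithm for unweighted undirected APSP, which computes all pairwise distances on an $n$-vertex graph in $O(n^\omega \log n)$ time, giving here a total of $O(|\cC|^\omega \log |\cC|)$ time; third, look up the $|\cC|$ relevant distances $d_{\overline G}((\c,-),(\c,+))$ and return the minimum. To output the separating set itself, run one additional BFS from the winning copy $(\c^*,-)$ and trace back the shortest path; this takes $O(|\cC|^2)$ time and is dominated by the APSP step.

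The main (and fairly minor) obstacle is checking that the reduction is faithful: we need $\overline G$ to be undirected (which holds since obstacle intersection is symmetric), and we need the edge-count distance produced by a matrix-multiplication-based APSP algorithm to coincide with the vertex-weighted path length appearing in the formulation of \cref{sec:homology-informal}. The latter follows directly from the shortcutting argument, which forces a minimum-length $(\c,-)$-to-$(\c,+)$ walk to visit distinct obstacles, so edge count equals obstacle count. No subtlety arises from applying Seidel's algorithm itself.
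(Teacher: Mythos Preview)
Your proposal is correct and matches the paper's own proof essentially line-for-line: compute the intersection graph $\overline G$ in the homology cover, invoke Seidel's $O(n^\omega\log n)$ unweighted APSP, and read off $\min_{\c}d_{\overline G}((\c,-),(\c,+))$ via \cref{lemma:sp}. Your additional remarks on symmetry, on recovering the actual separating set by a final BFS, and on the edge-count/obstacle-count correspondence are all sound elaborations that the paper leaves implicit.
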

\begin{proof}
Compute the intersection graph in some homology cover $\overline G$ of $\cC$,
and then run unweighted undirected APSP~\cite{Seidel95}
on $\overline G$,
for $O(|\cC|^\omega\log|\cC|)$ total time.
\end{proof}

\subsection{General Weighted Obstacles in Arrangement Model}

We can also obtain results using the arrangement instead of the intersection graph.
The simplest of these is a slight modification of \cref{thm:match-cg16}:
\begin{theorem}
\label{thm:not-match-cg16-arrangement}
For points $s$ and $t$,
a weighted set of obstacles $\cC$,
and an arrangement $D,\sigma$ of $\cC$,
where $\sigma$ is given as lists of vertices,
let $m=\sum_{c\in\sigma}|c|$ be the total number of obstacle-vertex incidences.
Then, the $(s,t)$ point-separation problem
can be solved in $O(\min(|\cC|,|V(D)|)\cdot m+\min(|\cC|,|V(D)|)^2\log\min(|\cC|,|V(D)|))$ time.
\end{theorem}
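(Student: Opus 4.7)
The plan is to instantiate the auxiliary-graph reduction described in Section~\ref{sec:homology-informal} and then run a modified Dijkstra, from only $N := \min(|\cC|, |V(D)|)$ carefully chosen sources, that exploits the fact that all weights lie on one side of the bipartite graph $H$. I first build $H$: a BFS in the dual of $D$ produces a simple dual $s$-to-$t$ path $\pi$ in $O(|V(D)|)$ time; for each obstacle $\gamma$ I fix a canonical vertex and DFS the induced subgraph of $\gamma$ from it, labelling every vertex of $\gamma$ with its $\pi$-crossing parity in $O(|\gamma|)$ time. Assembling $H$ from these labels (exactly as in Section~\ref{sec:homology-informal}) costs $O(|V(D)| + m)$ in total and yields $|V(H)| = O(|V(D)| + |\cC|)$ with $|E(H)| = 2m$, with the weights residing entirely on the $2|\cC|$ obstacle-copies and the arrangement-copies weighted zero.

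In the case $N = |\cC|$, I run Dijkstra $|\cC|$ times, each from a source $(\gamma,-)$ targeting $(\gamma,+)$, keeping only the $O(N)$ obstacle-copies in the priority queue and handling arrangement-copies \emph{lazily}. When an obstacle $(\gamma',b)$ is extracted with distance $d$, I scan each of its arrangement-neighbours $v^{b'}$; if $v^{b'}$ is unmarked, I set its distance to $d$, mark it, and relax each of its obstacle-neighbours $(\gamma'',b'')$ by the candidate $d + w_{\gamma''}$. The case $N = |V(D)|$ is symmetric: sources are $v^+$ (targets $v^-$), only arrangement-copies are queued, and when $v^b$ is popped at distance $d$, each unmarked obstacle-neighbour $(\gamma,b')$ is marked with distance $d + w_\gamma$ and its arrangement-neighbours are relaxed. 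Correctness of the lazy step holds because all weights are concentrated on one side of the bipartition: the first time a lazy vertex is exposed by a popped queued vertex is, by Dijkstra's extraction order, exactly when its minimum distance is realized, so no further updates are ever needed. By the reduction in Section~\ref{sec:homology-informal} (together with the Observation at the start of Section~\ref{sec:algorithms} to dispose of obstacles that individually separate), the $(s,t)$-separation optimum equals both $\min_{\gamma \in \cC}$ of the $H$-distance from $(\gamma,-)$ to $(\gamma,+)$ and $\min_{v \in V(D)}$ of the $H$-distance from $v^+$ to $v^-$, so I simply use whichever of the two formulations has the smaller queued side.

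Per Dijkstra run, each lazy vertex is processed at most once at cost its $H$-degree, summing to $O(m)$; the priority queue holds only $O(N)$ vertices, so with a Fibonacci heap the extract-mins cost $O(N \log N)$ and all decrease-keys summed over a run are bounded by $|E(H)| = O(m)$. Thus each run runs in $O(m + N \log N)$, and the $N$ runs together give the stated $O(Nm + N^2 \log N)$ bound, absorbing the $O(|V(D)| + m)$ preprocessing. The main delicate point is precisely the lazy-Dijkstra correctness argument outlined above; everything else is routine bookkeeping against the auxiliary-graph reduction.
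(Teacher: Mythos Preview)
Your proof is correct and follows essentially the same approach as the paper: build the auxiliary graph $H$ and run SSSP from whichever of the two bipartition sides is smaller, taking the minimum over the $N=\min(|\cC|,|V(D)|)$ source--target pairs. Your lazy-Dijkstra analysis (keeping only the size-$N$ side in the heap and processing the other side on first exposure, which is valid because every incoming edge to a lazy vertex carries the same weight) actually supplies the justification for the $O(m+N\log N)$-per-run bound that the paper simply asserts.
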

\begin{proof}
Let $\overline H$ be the auxiliary graph in the homology cover,
which has $O(m)$ vertices and edges.
By \arxivappendix{\cref{lemma:sp}},
it suffices
to do one of the following:
\begin{itemize}
    \item Compute the single-source shortest-path (SSSP) tree from each obstacle in $\overline H$.
    \item Compute the SSSP tree from each arrangement vertex in $\overline H$.
\end{itemize}
Let $k=\min(|\cC|,|V(D)|)$.
Each SSSP computation can be performed in $O(m+k\log k)$ time,
and $k$ such computations suffice to solve the problem.
\end{proof}
This theorem in particular is of note because as we will see later,
it is essentially optimal assuming the APSP conjecture,
at least in the case when $|\cC|=\Theta(|V(D)|^2)$ and $m=\Theta(|\cC|)$,
as we will see in \cref{sec:lower-bounds}.

\subsection{Restricted Obstacle Classes without Weights}

As we will discuss in \arxivappendix{\cref{subsec:apsp}},
Chan and Skrepetos~\cite{Chan07,Chan10}
studied APSP for several forms of unweighted/undirected geometric intersection graphs.
Their intersection graphs are in the plane,
but with some extra work it is possible to study intersection graphs
in the homology cover, and consequently
the unweighted $(s,t)$ point-separation problem.
\begin{theorem}
\label{thm:restricted-si}
For points $s$ and $t$,
and an \emph{unweighted} set of obstacles $\cC$.
Let $\overline{\cC}$ be the set of $2|\cC|$ ``mapped'' obstacles in the homology cover.
Let $\si{n,m}$ (``static intersection'')
be the time complexity for checking if each of $n$ different
objects in $\overline{\cC}$ intersects any object in some subset $C\subset\overline{\cC}$
of size $m=|C|$.
Assume $\si{n,m}$ is super-additive, so that $\si{n_1,m_1}+\si{n_2,m_2}\leq\si{n_1+n_2,m_1+m_2}$.
Then $(s,t)$ point-separation over $\cC$ can be solved in $O(n\si{n,n})$ time.
\end{theorem}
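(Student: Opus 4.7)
The plan is to reduce the unweighted $(s,t)$ point-separation problem to BFS in the intersection graph $\overline G$ of the homology cover, then implement that BFS using the static intersection primitive as a black box so that $\overline G$ (potentially dense) is never materialized. By the reduction from \cref{sec:homology-informal}, the optimum equals $\min_{\gamma\in\cC}d_{\overline G}((\gamma,-),(\gamma,+))$, so it suffices to run a BFS in $\overline G$ from each source $(\gamma,-)$ and record the distance to its paired copy $(\gamma,+)$.

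For a single source, I would maintain an unvisited set $U\subseteq\overline{\cC}$ and a current frontier $F$, initialized to $\overline{\cC}\setminus\{(\gamma,-)\}$ and $\{(\gamma,-)\}$ respectively. At each BFS step I invoke $\si{|U|,|F|}$ with $U$ as the query objects and $F$ as the static subset, obtaining the set $F'\subseteq U$ of vertices adjacent in $\overline G$ to some element of $F$; I then update $F\leftarrow F'$ and $U\leftarrow U\setminus F'$, continuing until $(\gamma,+)$ appears in $F$ or $F$ empties. The cost of this BFS is $\sum_i\si{|U_i|,|F_i|}$, and since each vertex belongs to at most one frontier, $\sum_i|F_i|\le 2n$.

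The main obstacle will be bounding a single BFS by $O(\si{n,n})$, so that the $n$ sources together cost $O(n\si{n,n})$. A direct application of super-additivity gives $\sum_i\si{|U_i|,|F_i|}\le\si{\sum_i|U_i|,\sum_i|F_i|}$; while $\sum_i|F_i|\le 2n$ is under control, the quantity $\sum_i|U_i|$ can be as large as $\Theta(n\cdot D)$ for BFS depth $D$. I would sharpen the accounting using a Chan--Skrepetos--style amortization~\cite{Chan10}: at each step restrict the query set to those unvisited objects that can plausibly intersect $F_i$ (using an auxiliary spatial filter that itself reduces to $\si$), so that the effective first arguments also telescope to $O(n)$ across the BFS; equivalently, one may dynamize $\si$ via Bentley--Saxe so that the amortized per-step cost is governed by $|F_i|$ alone. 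Either refinement, combined with super-additivity, yields a single-BFS cost of $O(\si{n,n})$, and summing over the $n$ sources establishes the claimed $O(n\si{n,n})$ bound.
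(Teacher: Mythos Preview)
Your high-level reduction is right: by \cref{lemma:sp} (equivalently the discussion in \cref{sec:homology-informal}), the unweighted problem reduces to computing $\min_{\gamma\in\cC}d_{\overline G}\bigl((\gamma,-),(\gamma,+)\bigr)$, and $\overline G$ is precisely the intersection graph of the $2n$ mapped obstacles $\overline{\cC}$. The paper's proof, however, is a single line: since $\overline G$ is an intersection graph whose static-intersection complexity is $\si{\cdot,\cdot}$, \cref{thm:unweighted-intersection-apsp} (Chan--Skrepetos) gives APSP on $\overline G$ in $O(n^2+n\,\si{n,n})=O(n\,\si{n,n})$ time, and the desired minimum is read off the APSP table. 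There is no need to re-derive the per-source BFS bound.

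Your attempt to re-derive that bound has a genuine gap at the amortization step. Super-additivity alone does not force $\sum_i\si{|U_i|,|F_i|}=O(\si{n,n})$ for a single BFS: take $\si{n,m}=n+m$ (which satisfies the hypothesis with equality) and a path on $n$ vertices; each level has $|F_i|=1$ and $|U_i|=n-i$, so the naive sum is $\Theta(n^2)$ while $\si{n,n}=2n$. Your two proposed remedies do not close this gap as written. The ``auxiliary spatial filter that itself reduces to $\si$'' is circular: $\si$ is the only primitive the theorem grants you, so using it to prune the query set incurs exactly the cost you are trying to avoid. And Bentley--Saxe dynamizes for \emph{insertions}; here the unvisited set $U$ only shrinks, so you would need decremental support, which Bentley--Saxe does not provide. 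The actual argument behind \cref{thm:unweighted-intersection-apsp} is more delicate than either sketch, and there is no benefit to reproving it here---just invoke the theorem directly. (Also, your citation \cite{Chan10} points to Chan's vertex-weighted APSP paper, not the relevant intersection-graph BFS framework of \cite{ChanS17,ChanS19}.)
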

\begin{proof}
Run APSP for the intersection graph in the homology cover via
the algorithm of Chan~\cite[Theorem 2]{ChanS19}
(see further discussion in
\arxivappendix{\cref{subsec:apsp}}).
\end{proof}

To use this result, we need efficient algorithms for static intersection
in the homology cover:

\begin{lemma}
\label{lemma:si-values-homology-cover}
For points $s$ and $t$,
the following values of $\si{n,n}$ hold for restricted obstacle types in the homology cover:
\begin{center}
    \begin{tabular}{l|c|l}
        Obstacle Class & $\si{n,n}$ \\
        \hline
        General Disks & $O(n \log n)$ \\
        Axis-Aligned Line Segments & $O(n \log \log n)$ \\
        Arbitrary Line Segments & $O(n^{4/3} \log^{1/3} n)$ \\
    \end{tabular}
\end{center}
\end{lemma}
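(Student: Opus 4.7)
The plan is to reduce a static-intersection query in the homology cover to a constant number of static-intersection queries in the plane for the same obstacle class, and then appeal to known planar batched-intersection bounds. I fix the portal path $\pi=\overline{st}$ so that each of the two sheets of the homology cover can be viewed as a copy of the plane with obstacle copies carrying ``sheet labels'' in $\{+,-\}$. The key geometric fact is that every obstacle in the classes under consideration crosses $\pi$ in $O(1)$ points: disks (bounded by circles) cross $\pi$ at most twice, and line segments (axis-aligned or arbitrary) at most once. Hence $\pi$ partitions each obstacle $\gamma\in\cC$ into $O(1)$ planar pieces, and each of the two copies of $\gamma$ in $\overline{\cC}$ is identified with a choice of sheet label for each such piece.

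Next, I use the defining property of the homology cover: points in different sheets are identified only along $\pi$ itself, where the sheets coincide by construction. From this, two copies $\gamma_a,\gamma'_b\in\overline{\cC}$ intersect in the homology cover if and only if some piece of $\gamma_a$ with sheet label $c\in\{+,-\}$ meets some piece of $\gamma'_b$ carrying the same label $c$ as planar sets. Partitioning the pieces by their sheet labels therefore reduces the batched test ``does each of $n$ objects in $\overline{\cC}$ meet some object in a given subset of size $n$?'' to $O(1)$ batched planar intersection tests, each over $O(n)$ pieces of the same class: cutting line segments yields line segments, cutting axis-aligned segments yields axis-aligned segments, and cutting disks yields disk caps (convex regions bounded by a chord and a circular arc). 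By the assumed super-additivity of $\si{\cdot,\cdot}$, the total cost is $O(\si{n,n})$ in the plane for the appropriate class.

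Finally, I plug in the planar bounds cited via Chan and Skrepetos in \cref{subsec:apsp}: $O(n^{4/3}\log^{1/3}n)$ for arbitrary segments, $O(n\log\log n)$ for axis-aligned segments, and $O(n\log n)$ for disks. For the two segment classes, the bounds transfer immediately because cutting preserves the class. The one subtle step, which I expect to be the main obstacle, is the disk case: cut pieces are disk caps rather than full disks. However, each cap has constant-complexity boundary (one chord and one circular arc), and the standard $O(n\log n)$ batched disk-intersection algorithms, which typically proceed via a sweep line or an arrangement of the $O(n)$ boundary curves, extend to caps with only a constant-factor overhead, since the sweep still encounters only $O(n)$ boundary segments and $O(n)$ boundary arcs of total constant complexity.
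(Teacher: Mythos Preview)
Your reduction for the two line-segment classes is exactly what the paper does: slice each segment at $\overline{st}$, flip the sheet label on the lower piece, partition by label, and run the planar batched-intersection routine of Chan and Skrepetos on each side. That part is fine.

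The disk case, however, has a real gap, and it is precisely the point the paper singles out as nontrivial. You write that ``the standard $O(n\log n)$ batched disk-intersection algorithms, which typically proceed via a sweep line or an arrangement of the $O(n)$ boundary curves, extend to caps with only a constant-factor overhead.'' But the Chan--Skrepetos $O(n\log n)$ bound for disks is \emph{not} a sweep or arrangement computation; it builds an additively-weighted Voronoi diagram (equivalently, a Voronoi diagram of disks) and answers each query by point location. That approach does not carry over to disk caps: a Voronoi diagram of the cap boundaries would be a Voronoi diagram of circular arcs, and the known $O(k\log k)$ algorithms for that require the $k$ arcs to be pairwise disjoint. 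Here the disks in $A'$ are allowed to intersect, so the number of disjoint boundary arcs can be $\Theta(n^2)$, and you lose the bound. A sweep-based alternative runs into the same quadratic blowup.

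The paper's fix is genuinely different from what you sketch. It further cuts every full disk along the entire line $\ell$ through $s$ and $t$, so that all pieces lie strictly above or strictly below $\ell$, and handles the two half-planes separately. On each side it computes the \emph{union} of the full (unclipped) disks via power diagrams in $O(n\log n)$ time, then clips that union at $\ell$; the boundary has linear complexity. For a query cap with center $c$ and radius $r$, one point-location query at $c$ suffices: either $c$ lies inside the clipped union, or the distance from $c$ to it is at most $r$. This exploits the fact that every cap comes from a disk whose center and radius are still available, something your ``treat caps as generic constant-complexity curved objects'' argument does not use and cannot replace.
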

The proof of the lemma is left to \arxivappendix{\cref{sec:static-intersection-hc}}.
At a high-level,
all cases are first reduced to the planar static intersection problem.
In the case of line segments, this becomes the standard planar static intersection problem
for line segments.
For disks, this is not the case, and the algorithm is more involved.

The combination of these two results give an important corollary:
\begin{corollary}
\label{cor:restricted-si-actual}
For points $s$ and $t$,
the unweighted $(s,t)$ point-separation problem
can be solved in $O(n^2\log\log n)$ time for axis-aligned line segments (or $O(1)$-length rectilinear polylines),
$O(n^{7/3}\log^{1/3}n)$ time for line segments (or $O(1)$-length polylines),
and $O(n^2\log n)$ time for general disks
or circles that do not contain both $s$ and $t$.
\end{corollary}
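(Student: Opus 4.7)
The plan is to combine \cref{thm:restricted-si} with \cref{lemma:si-values-homology-cover} for the three ``base'' obstacle classes (axis-aligned segments, arbitrary segments, and disks), and to reduce the remaining cases to one of those three. Substituting the lemma's values of $\si{n,n}$ into the running time $O(n\cdot\si{n,n})$ from \cref{thm:restricted-si} immediately yields $O(n^2\log\log n)$ for axis-aligned segments, $O(n\cdot n^{4/3}\log^{1/3}n)=O(n^{7/3}\log^{1/3}n)$ for arbitrary segments, and $O(n^2\log n)$ for disks. The only sanity check is the super-additivity hypothesis on $\si{\cdot,\cdot}$ required by \cref{thm:restricted-si}, which holds because each listed bound is an increasing, asymptotically concave function of the input size.

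To extend the segment bounds to $O(1)$-length polylines (rectilinear or general), I would decompose every polyline obstacle into its $O(1)$ constituent line segments, producing $O(n)$ segments overall. Two polylines intersect in the plane (and hence their images in the homology cover) if and only if two of their constituent segments do, so invoking the segment static-intersection procedure as a black box on the decomposed collection recovers exactly the edges of the intersection graph in the homology cover needed by \cref{thm:restricted-si}. The constant-factor blow-up from decomposition is absorbed into the asymptotics, and the rectilinear case inherits the axis-aligned segment bound of $O(n^2\log\log n)$ in the same way.

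The final case is that of circles. Using the preprocessing observation at the start of \cref{sec:algorithms}, I may assume that no circle individually separates $s$ and $t$, so in combination with the hypothesis ``does not contain both $s$ and $t$'' every remaining circle bounds a disk whose interior contains neither $s$ nor $t$. I expect the main (but still minor) obstacle to be the discrepancy between \emph{curve} intersection of two circles and \emph{region} intersection of the two disks they bound: these differ only when one disk is strictly nested inside the other. I would resolve this by running the disk static-intersection routine of \cref{lemma:si-values-homology-cover} and then correcting for nesting with an auxiliary $O(n\log n)$ pass that sorts circles by radius and point-locates centres within the resulting family; the total running time remains $O(n^2\log n)$.
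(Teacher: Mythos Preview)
Your core approach is correct and matches the paper's: the corollary is simply the combination of \cref{thm:restricted-si} and \cref{lemma:si-values-homology-cover}, with the polyline cases handled by decomposing each polyline into its $O(1)$ constituent segments (exactly the remark the paper makes inside the proof of \cref{lemma:si-values-homology-cover}).

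For circles, however, you are overcomplicating the argument, and the proposed fix is not quite right. Once you have observed that every remaining circle bounds a disk containing neither $s$ nor $t$, you can simply solve the \emph{disk} separation problem on those disks and return its answer unchanged: a set of such disks separates $s$ and $t$ if and only if the corresponding set of circles does, because any $s$--$t$ path that enters a disk must cross its bounding circle (both $s$ and $t$ lie outside every disk). The optimal values therefore coincide, and no nesting correction is needed at all. This is the equivalence the paper alludes to in \cref{sec:prior-work} when it calls unit-disk obstacles ``equivalently, unit circle obstacles that do not contain $s$ or $t$ in their interior''. Your proposed correction is also not obviously sound as a static-intersection subroutine: a ``yes'' from disk SI for a query $b$ may be caused by nesting with one element of $C$ while a genuine boundary intersection with a \emph{different} element of $C$ also exists, so merely detecting and filtering nested pairs does not recover the circle-intersection answers.
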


\subsection{Restricted Obstacle Classes with Weights}

We will now present a method for solving the weighted $(s,t)$ point-separation problem
using a tool called ``biclique covers'',
which are essentially a tool for a type of graph sparsification.

For a graph $G=(V,E)$,
a \defn{biclique} in $G$ is a complete bipartite subgraph ($A\times B\subset E$, where $A,B\subset V$ are disjoint).
The \defn{size} of a biclique is the number of vertices it contains ($|A|+|B|$).
A \defn{biclique cover} is a collection of bicliques in $G$ covering the edges $E$,
and its \defn{size} is the the sum of all sizes in its bicliques.
Biclique covers of geometric intersection graphs in two-dimensions are well-studied.
We summarize known results in \cref{tab:biclique-cover-results}.

\begin{table}[h]
\centering
\begin{tabular}{|l|l|l|l|}
\hline
\textbf{Graph Type} & \textbf{Cover Size} & \textbf{Construction Time} \\ %
\hline
line segment intersection & $\widetilde{O}(n^{4/3})$ & $\widetilde{O}(n^{4/3})$ \\ %
\hline
Axis-aligned line seg.~intersection & $\widetilde{O}(n)$ & $\widetilde{O}(n)$ \\ %
\hline
$k$-clique-free line seg.~intersection & $\widetilde{O}_k(n)$ & $\widetilde{O}_k(n)$ \\ %
\hline
\end{tabular}
\caption{Known results for biclique covers of 2D geometric intersection graphs.
All of these results are stated and proven by Chan~\cite{chan2023finding},
although essentially all of the steps have appeared in a number of prior works.
The notation $\widetilde{O}$ hides logarithmic factors. The notation $\widetilde{O}_k$ further assumes that $k$ is constant. Note that axis-aligned line segment intersection graphs are $K_3$-free.
}
\label{tab:biclique-cover-results}
\end{table}

Biclique covers are useful in our case because they can be used for faster \emph{vertex-weighted} \afterreview{shortest paths}.
In particular:
\begin{lemma}
\label{lemma:biclique-cover-to-vw-apsp}
Let $G=(V,E)$ be an $n$-vertex (undirected) graph with vertex-weights
admitting a biclique cover of size $S(n)$ that can be constructed in $T(n)$ time.
Then APSP over $G$ can be solved in $O(T(n)+n\cdot S(n)\log(n))$ time.
\end{lemma}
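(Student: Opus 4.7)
The plan is to reduce vertex-weighted APSP on $G$ to a standard shortest-path problem on a sparsified auxiliary directed graph $G'$ derived from the biclique cover. First, I would construct the biclique cover $\{A_i \times C_i\}_{i=1}^{k}$ in time $T(n)$. Then I would build $G'$ whose vertex set consists of $V$ (with original weights preserved) together with two Steiner vertices $z_i^A$ and $z_i^C$ per biclique, each of weight $0$. For each biclique $A_i \times C_i$, I would add the directed edges $a \to z_i^A$ (for $a \in A_i$) and $z_i^A \to c$ (for $c \in C_i$), modeling $A$-to-$C$ traversals, together with the symmetric set $c \to z_i^C$ and $z_i^C \to a$ for the reverse direction. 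This construction yields $|V(G')| = O(n + S(n))$ vertices and $|E(G')| = O(S(n))$ edges.

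The next step is to show that the directed vertex-weighted shortest-path distance from any $u \in V$ to any $v \in V$ in $G'$ agrees with the undirected vertex-weighted distance in $G$. For the upper bound, each edge $(u,v) \in E$ is covered by some biclique, so the corresponding step of any path in $G$ can be realized via a Steiner detour in $G'$ at no extra cost (the Steiner vertex contributes weight $0$). For the lower bound, any directed path in $G'$ alternates original and Steiner vertices, and the directedness forces each subpath $x \to z \to y$ to use opposite sides of the corresponding biclique, so $(x,y)$ is a genuine edge of $G$.

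Finally, I would run directed vertex-weighted Dijkstra (via the standard in/out vertex-splitting reduction) from each of the $n$ original source vertices on $G'$. With Fibonacci heaps, each run costs $O(S(n) + (n + S(n))\log n)$, and summing over the $n$ sources gives $O(n \cdot S(n)\log n + n^2 \log n)$. Since any biclique cover of a graph without isolated vertices satisfies $S(n) \geq n$ (each non-isolated vertex must appear in at least one biclique), the $n^2 \log n$ term is absorbed, and combining with $T(n)$ for the cover construction yields the claimed $O(T(n) + n \cdot S(n) \log n)$ bound.

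The main obstacle will be the correctness of the reduction, specifically the avoidance of \emph{false shortcuts}. The naive undirected Steiner-vertex construction with a single $z_i$ per biclique would permit a path $a_1 \to z_i \to a_2$ with $a_1, a_2 \in A_i$ at cost $w(a_1) + w(a_2)$, underestimating the true cost $w(a_1) + w(c) + w(a_2)$ that must be paid by routing through some $c \in C_i$ in $G$. The directedness trick, with two Steiner vertices per biclique — one per traversal direction — is precisely what blocks these same-side shortcuts while still capturing every legitimate cross-side traversal, and verifying this carefully is the heart of the proof.
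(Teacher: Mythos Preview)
Your proposal is correct and is essentially identical to the paper's proof: both introduce two weight-$0$ Steiner vertices per biclique with directed edges (one hub for $A\to B$ traversals, one for $B\to A$), then run Dijkstra from each of the $n$ original vertices on the resulting $O(S(n))$-edge graph. Your write-up is in fact more careful than the paper's, since you explicitly address the false-shortcut issue and the absorption of the $n^2\log n$ term via $S(n)\ge n$.
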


We leave the proof to \arxivappendix{\cref{sec:appendix-biclique}}.

These results aren't quite enough for the $(s,t)$ point-separation problem,
since what we need is actually a biclique cover \emph{in the homology cover}.
Fortunately, we can construct this:

\begin{lemma}
\label{lemma:biclique-covers-homology-cover}
Let $s$ and $t$ be designated points in the plane,
and let $\cC$ be a set of line segments with $n=|\cC|$.
Let $\overline G$ be the intersection graph in the homology cover,
and let $G$ be the intersection graph in the plane.
Then $\overline G$ has a homology cover of size $\widetilde{O}(n^{4/3})$
that can be found in $\widetilde{O}(n^{4/3})$ time.
Moreover, if $G$ contains no $k$-clique
(including if $\cC$ is a set of rectilinear segments,
in which case $G$ contains no $3$-clique),
then $\overline G$ has a homology cover of size $\widetilde{O}_k(n)$
that can be found in $\widetilde{O}_k(n)$ time.
\end{lemma}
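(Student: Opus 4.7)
The plan is to reduce the construction of a biclique cover of $\overline G$ to two instances of the known biclique-cover problem for \emph{planar} line-segment intersection graphs, and then invoke the results of \cref{tab:biclique-cover-results} as a black box.

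First, I would make the structure of the lifted obstacles explicit. Under the standing assumption from the start of \cref{sec:algorithms} that no individual $\c\in\cC$ separates $s$ and $t$, every segment lifts to exactly two obstacle components $(\c,-)$ and $(\c,+)$ in the cover. Fixing the portal $\pi=\overline{st}$, each such component is the union of at most two planar line segments, one in each copy of the plane: a full copy of $\c$ if $\c$ misses $\pi$, or a pair of half-segments in opposite copies joined through $\pi$ if $\c$ crosses $\pi$ (which happens in at most one point, since two line segments meet in at most one point). This produces a collection of at most $4n$ planar segments, which I partition into $S_1$ and $S_2$ by plane copy, with the key property that each of the $2n$ obstacle components contributes \emph{at most one segment} to each of $S_1$ and $S_2$.

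Next, I would compute $S_1,S_2$ in $O(n)$ time and apply Chan's biclique-cover constructions (\cref{tab:biclique-cover-results}) separately to the planar segment-intersection graphs $G_1$ and $G_2$, obtaining a combined cover of size $\widetilde{O}(n^{4/3})$ in $\widetilde{O}(n^{4/3})$ time. I then lift each biclique $(A,B)$ of $G_i$ to a biclique $(A',B')$ of $\overline G$ by mapping every segment to the unique obstacle component containing it. Disjointness and sizes are preserved because each component has at most one segment in $S_i$, so $|A'|=|A|$ and $|B'|=|B|$. Every pair in $A'\times B'$ is an edge of $\overline G$ since the witnessing planar intersection inside copy $i$ directly witnesses an intersection of the two lifted components; conversely, every edge of $\overline G$ arises from a planar intersection inside one of the two copies (intersections landing exactly on $\pi$ may be attributed to either), so it is covered by some biclique of $G_1$ or $G_2$.

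For the clique-free strengthening, I would observe that any $k$-clique in $G_i$ consists of $k$ pairwise-intersecting segments in a single copy, each contained in some original $\c\in\cC$; the corresponding original segments pairwise intersect, giving a $k$-clique in $G$. Hence $G_i$ inherits the $k$-clique-free property from $G$, licensing the tighter $\widetilde{O}_k(n)$ bound of \cref{tab:biclique-cover-results} applied to each copy. The rectilinear case is the instance $k=3$, since axis-aligned segments have triangle-free intersection graphs. The main obstacle is the bookkeeping in the explicit decomposition of each lifted obstacle into at most one segment per copy, and the verification that intersections realized on the portal $\pi$ are still covered; both are immediate from the portal construction of \cref{sec:homology-informal}.
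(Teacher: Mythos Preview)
Your approach is essentially the same as the paper's: slice each segment at the portal $\pi=\overline{st}$, partition the resulting sub-segments into the two plane copies, apply the planar biclique-cover results of \cref{tab:biclique-cover-results} to each copy, and take the union of the two covers. Your bookkeeping (one sub-segment per obstacle component per copy, and the explicit biclique lift) is in fact more detailed than the paper's sketch.

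There is one small gap in your clique-free step. You assert that a $k$-clique in $G_i$ yields a $k$-clique in $G$, but two segments of $S_i$ can come from the \emph{same} original $\c\in\cC$---namely the top and bottom halves of a segment that crosses $\pi$, contributed by $(\c,+)$ and $(\c,-)$ respectively---and these share an endpoint on $\pi$. If your planar intersection graph treats them as adjacent, a $k$-clique in $G_i$ only guarantees $\lceil k/2\rceil$ distinct originals in $G$. The paper sidesteps this by arguing instead that $\overline G$ (and hence each $G_i$) is $2k$-clique-free whenever $G$ is $k$-clique-free, which is enough since $\widetilde{O}_{2k}(n)=\widetilde{O}_k(n)$. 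Your route can be repaired just as easily by retracting the sub-segment endpoints off $\pi$ (or using half-open segments there), after which distinct sub-segments of the same $\c$ are disjoint in each copy and your direct $k$-clique argument goes through.
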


We leave the proof of this to
\arxivappendix{\cref{sec:appendix-biclique}}
as well.
The proof is similar to that of
\cref{thm:restricted-si}.

The combination of these results gives the following theorem:

\begin{theorem}
\label{cor:separation-biclique-cover-times}
For a weighted set of obstacles $\cC$,
and points $s,t$,
the $(s,t)$ point-separation can be solved in $\widetilde{O}(n^{7/3})$ time
if $\cC$ is a set of line segments,
$\widetilde{O}(n^2)$ time if $\cC$ is a set of axis-aligned line segments,
and $\widetilde{O}_k(n^2)$ time if $\cC$ is a set of line segments whose
intersection graph has no $k$-clique.
The same bounds hold if each obstacle is an $O(1)$-length polyline
among lines of the same properties.
\end{theorem}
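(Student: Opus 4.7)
The plan is to chain together three ingredients that have already been developed in the paper: \cref{lemma:biclique-covers-homology-cover} to produce a small biclique cover of the intersection graph in the homology cover, \cref{lemma:biclique-cover-to-vw-apsp} to convert that cover into a vertex-weighted APSP algorithm, and the shortest-path characterization at the end of \cref{sec:homology-informal} (formalized in the forthcoming \cref{lemma:sp}) to reduce the $(s,t)$ point-separation problem to APSP on the intersection graph in the homology cover $\overline G$.

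First, I would preprocess to delete any obstacle that individually separates $s$ and $t$, using the observation at the start of \cref{sec:algorithms}, so that every remaining obstacle maps to exactly two components in the homology cover. Next, I would invoke \cref{lemma:biclique-covers-homology-cover} to compute a biclique cover of $\overline G$ of size $\widetilde O(n^{4/3})$ for arbitrary line segments, size $\widetilde O(n)$ for axis-aligned segments (which are $K_3$-free), and size $\widetilde O_k(n)$ for line segments whose intersection graph is $k$-clique-free. Feeding each of these covers into \cref{lemma:biclique-cover-to-vw-apsp} (with vertex weights equal to the obstacle weights, duplicated across both copies of each obstacle in the homology cover) yields vertex-weighted APSP on $\overline G$ in time $\widetilde O(n^{7/3})$, $\widetilde O(n^2)$, and $\widetilde O_k(n^2)$, respectively. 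Finally, by \cref{lemma:sp} the optimum $(s,t)$ point-separation weight equals $\min_{\c \in \cC} \mathrm{dist}_{\overline G}((\c,-),(\c,+))$, which can be read out of the APSP table in $O(n)$ additional time, and the witnessing shortest path directly reconstructs the optimal separating set.

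The remaining work is the extension from line segments to $O(1)$-length polylines. The plan here is to decompose each polyline into its $O(1)$ constituent segments, obtaining $O(n)$ segments in total, and observe that two polylines intersect if and only if some pair of their segments does, so the polyline intersection graph is the quotient of the segment intersection graph under the partition by polyline. Any biclique $A \times B$ in the segment intersection graph lifts to a biclique in the polyline intersection graph by replacing each segment with its containing polyline; since each polyline appears at most $O(1)$ times in this replacement, the cover size grows by only a constant factor, and the same lifting works for the homology-cover version from \cref{lemma:biclique-covers-homology-cover}. For the $k$-clique-free hypothesis on polylines, any $k$-clique of distinct polylines at the segment level would produce a $k$-clique of polylines, and segments belonging to the same polyline contribute only a bounded blowup, so the $\widetilde O_k(n)$ bound transfers with a constant-in-$k$ factor.

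The main subtlety I would expect is getting the polyline lifting right: I want the biclique-cover construction of \cref{lemma:biclique-covers-homology-cover} carried out on the $O(n)$ segments and then aggregated to polylines \emph{without} losing the $\widetilde O(n^{4/3})$ or $\widetilde O_k(n)$ bound, and I need the vertex weights in the APSP call of \cref{lemma:biclique-cover-to-vw-apsp} to be attached to polyline vertices (one weight per polyline) rather than per segment so that the resulting shortest-path distances genuinely correspond to minimum total obstacle weight. Once this correspondence is set up carefully, the stated running times follow immediately by plugging cover sizes into \cref{lemma:biclique-cover-to-vw-apsp}.
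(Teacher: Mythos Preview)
Your proposal is correct and follows essentially the same approach as the paper: chain \cref{lemma:biclique-covers-homology-cover}, \cref{lemma:biclique-cover-to-vw-apsp}, and \cref{lemma:sp} to reduce weighted $(s,t)$ point-separation to vertex-weighted APSP on $\overline G$, and handle $O(1)$-length polylines by decomposing into segments and lifting the biclique cover back. The paper's own proof is in fact just the sentence ``the combination of these results gives the following theorem'' together with a one-line remark about polylines, so your write-up is a faithful (and more detailed) expansion of it; the only minor bookkeeping point to watch is that vertex-weighted APSP typically counts both endpoints, so when reading off $\min_{\c}\mathrm{dist}_{\overline G}((\c,-),(\c,+))$ you may need to subtract $w(\c)$ once to match the head-weighted convention of \cref{lemma:sp}.
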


Note that a biclique cover for $O(1)$-length polylines
can be recovered from biclique covers over the individual line segments
(adjusted slightly so that the line segments in the same polyline do not overlap).

\section{Lower Bounds}
\label{sec:lower-bounds}

In this section, we present several related fine-grained lower bounds
for specific cases of the $(s,t)$ point-separation problem.
The main intermediate tool is the problem of finding a minimum-weight walk of length $k$ in a directed graph,
for a given $k$ (or detecting if any walk of length $k$ exists).
All of our lower bounds are based on the following unified theorem:
\begin{theorem}
\label{thm:k-cycle-reduction}
For a positively edge-weighted directed graph $G=(V,E)$ with $n$ vertices and $m$ edges with maximum edge-weight $W$, and an integer $k$,
there exists three sets $\cC_1,\cC_2,\cC_3$ of $2km+6m$ obstacles,
each with a weight equal to that of some edge in $G$, so that each of the following properties holds for one set:
\begin{itemize}
    \item All obstacles are line segments.
    \item All obstacles are length-$2$ polylines, and the total number of unique intersection points of the obstacles is $k+2(k+1)m+4m=\Theta(km)$.
    \item All obstacles are length-$3$ rectilinear polylines.
\end{itemize}
Moreover, there are points $s$ and $t$ in the plane so that
$G$ has a walk of length $k$ with weight at most $w$
if and only if there is some subset of $\cC_i$ (for any $i\in\{1,2,3\}$)
that separates $s$ and $t$ and has weight at most $w+(k+6)W$,
so long as $w\leq kW$.
Furthermore, each of $\cC_1,\cC_2,\cC_3$ can be constructed in time proportional to their sizes.
\end{theorem}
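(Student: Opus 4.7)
The plan is to construct a layered planar gadget that encodes walks of length $k$ in $G$ as closed separating curves between two fixed points $s$ and $t$. I would place $s$ above and $t$ below a rectangular region, subdivided horizontally into $k$ ``router levels'', one per step of a walk. In each level I place $2m$ obstacles, one pair per edge of $G$: for an edge $e=(u,v)\in E$, the ``entry'' obstacle marks an input slot at a horizontal position dedicated to vertex $u$, and the ``exit'' obstacle marks an output slot at the position dedicated to vertex $v$. Each such obstacle is assigned weight equal to the weight of $e$. The levels are aligned so that an exit slot at vertex $v$ in level $i$ coincides geometrically with an entry slot at vertex $v$ in level $i+1$, so two consecutive ``edge gadgets'' connect precisely when the corresponding edges of $G$ share the expected endpoint. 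Finally, I add a constant-per-level number of ``framing'' obstacles, each with weight $W$, forming the outer boundary of the gadget and the horizontal separators between adjacent levels; their total count is $6m$, bringing the obstacle count to $2km+6m$ as claimed.

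Using Fact~\ref{fact:homology-cover-utility-informal}, a set of obstacles separates $s$ and $t$ iff its lift to the homology cover contains a path between the two copies of some point on the cut. With the framing in place, I would argue that any such path must traverse all $k$ levels in sequence, selecting exactly one entry--exit pair per level, and that consecutive selections stitch into a connected path in the cover only when the chosen edges share endpoints in $G$. This sets up a bijection: walks $(e_1,\ldots,e_k)$ of weight $w$ correspond to separating subsets of weight $w+(k+6)W$, where the $(k+6)W$ offset tallies the mandatory framing obstacles (one per level separator plus a constant number for the outer boundary). The hypothesis $w\leq kW$ is used exactly to rule out degenerate ``cheating'' solutions that skip a level by replacing an edge gadget with extra framing, since the cost of any such substitution would exceed $W$ per level. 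The three variants reuse the same skeleton: $\cC_1$ uses tilted line segments directly; $\cC_2$ merges same-port segments into length-$2$ polylines sharing bends at the router junctions, yielding exactly $k+2(k+1)m+4m$ unique intersection points (one per junction point); $\cC_3$ snaps everything to an integer grid and replaces each tilted segment by an axis-aligned staircase of three segments that preserves the intersection pattern.

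The main obstacle will be the reverse correctness direction: showing that \emph{every} separating subset of weight at most $w+(k+6)W$ (under $w\leq kW$) must encode a genuine length-$k$ walk. This requires a careful analysis of closed curves in the homology cover of the arrangement, establishing (a) all framing obstacles are forced into any separating set, locking in the $(k+6)W$ offset; (b) in each level exactly one entry--exit pair is selected, since pairs from distinct edges of the same level fail to create a single covered crossing of the cut; and (c) the selections across consecutive levels must match endpoints, since a mismatch leaves an unbridged gap in the lifted arrangement and thus does not produce a closed separating curve. Property (c) is the delicate one and depends on the precise slot alignment used at the level boundaries. Once this case analysis is complete, the three obstacle-type constructions follow by direct geometric substitution, and the construction time bound is immediate since each of $\cC_1,\cC_2,\cC_3$ is described by $O(km)$ coordinates computable in constant time each from $G$.
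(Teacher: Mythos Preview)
Your construction misses the central mechanism of the reduction: the target is a \emph{closed} $k$-walk (starting and ending at the same vertex), and the geometry must force this. In the paper's construction, $s$ and $t$ sit to the left and right of a grid of ``edge segments'' (one segment per edge per layer, going from column $i$ at row $r$ to column $j$ at row $r{+}1$ for each edge $(v_i,v_j)$). The role played by the extra $W$-weight obstacles is not ``framing'' that is globally forced; rather, for \emph{each vertex} $v_i$ there is a bundle of $k{+}6$ ``vertex segments'' wrapping from the top of column $i$ around the outside of $s$ back to the bottom of column $i$. A separating curve must use \emph{exactly one} such bundle (the budget $w\le kW$ forbids two, and a finite family of $s$--$t$ test paths forces at least one), and together with a chain of edge segments from $(i,0)$ to $(i,k)$ it closes into a loop around $s$. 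This is precisely what encodes ``the walk returns to $v_i$''.

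Your outline lacks any analogue of this closing gadget. With $s$ above and $t$ below a layered rectangle and ``framing'' as outer boundary plus level separators, a horizontal barrier through a single level already separates $s$ from $t$, so there is no reason a separating set must traverse all $k$ levels, let alone return to its starting column. Your accounting is also internally inconsistent: you introduce $6m$ framing obstacles of weight $W$ and then claim they are all forced at total cost $(k{+}6)W$; these cannot both hold. Finally, splitting each edge into an entry and an exit obstacle, each carrying the full edge weight, doubles the cost of traversing an edge and breaks the weight correspondence. The fix is to drop the two-obstacles-per-edge idea and the global framing, and instead use one diagonal obstacle per edge per layer together with per-vertex wrap-around bundles whose size ($k{+}6$) is calibrated so that at most one bundle fits in the budget.
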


\begin{figure}
\centering
\includegraphics[scale=0.4,page=8]{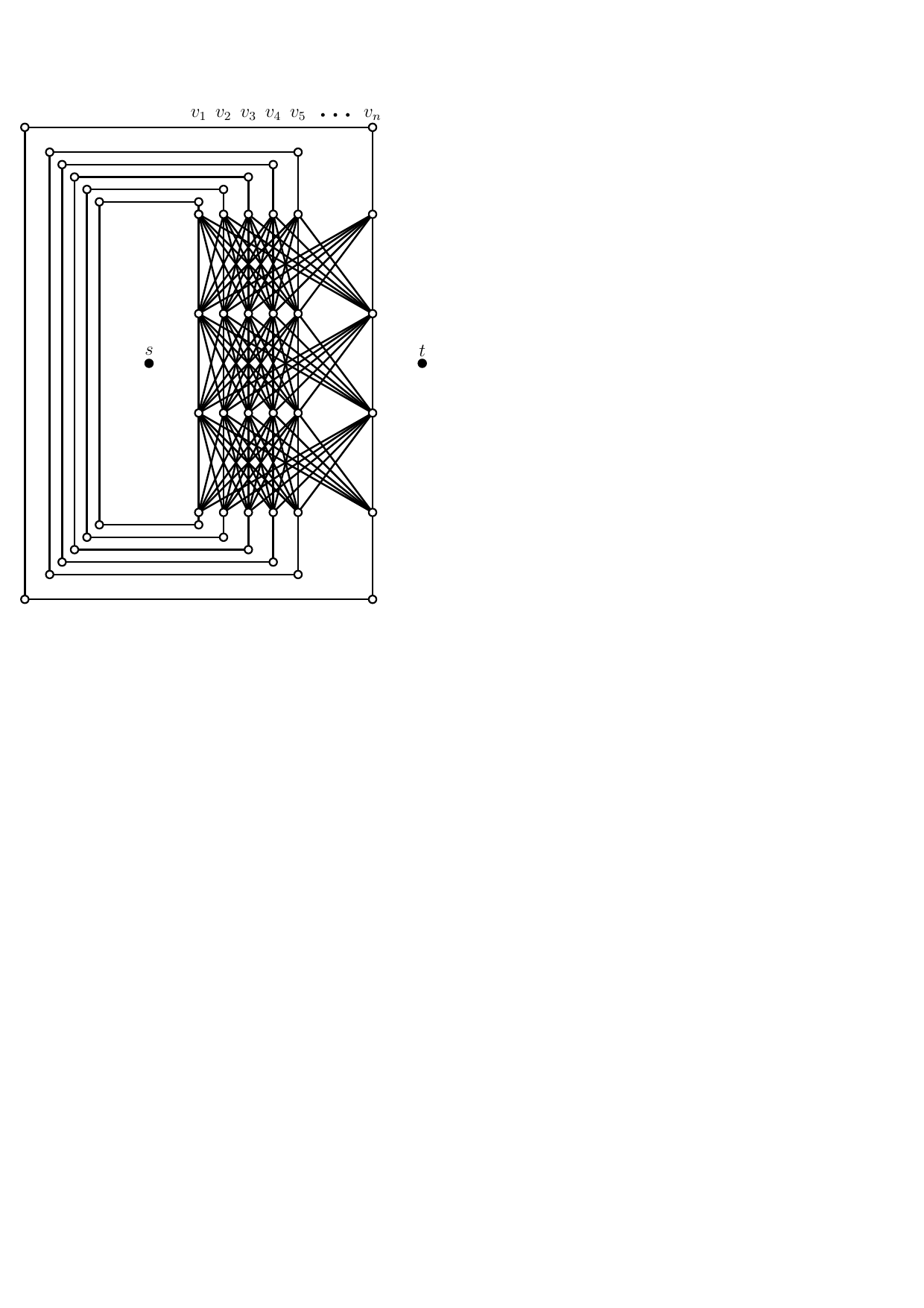}
\hspace{4em}
\includegraphics[scale=0.4,page=9]{lower-bounds}
\caption{A demonstration of the sub-cubic reduction from min-weight $k$-walk to the $(s,t)$ point-separation problem with line segments. Empty circles are used to annotate the ends segments.}
\label{fig:lower-bound-segments}
\end{figure}

\begin{proof}
For each property, we will construct a reduction
from min-weight $k$-walk (for fixed $k$)
in a directed edge-weighted graph to point separation,
with the stated guarantees.
The constructions for each property are essentially the same.
We will focus on the line segment case, and explain the modifications afterwards.

The construction is visualized in \cref{fig:lower-bound-segments}.
For each vertex $v_i$ ($i\in\{1,\dots,n\}$),
assign the column $x=i$.
For each value $r\in[k+1]$,
assign the row $y=r$.
For each directed edge $(v_i,v_j)\in E$, with weight $w_{v_i,v_j}$
create $k$ line segments,
each also with weight $w_{v_i,v_j}$.
Each should go from the $i$th column to the $j$th column,
and the $r$th row to the $(r+1)$th row (for each $r\in[k+1]$).
We call these the \defn{edge segments}.
Pick $s$ to be some point to the left of the columns, and $t$ to be some point to the right.
For the $i$th column, create a set of $k+6$ rectilinear line segments
connecting the top point of the $i$th column to the bottom,
so that the path formed by these line segments lies to the left of $s$ at its $x$ coordinate,
ensuring that these line segments do not intersect the corresponding line segments generated for other columns,
nor do they cross the minimal orthogonal rectangle containing the previous set of line segments.
We call these the \defn{vertex segments}.
These two segment types form the entirety of the obstacles,
so the stated time complexity follows.
We need only prove correctness of the reduction.

A $k$-walk through $G$ starting and ending at a vertex $v_i\in V$
corresponds exactly to a connected set of edge segments from the point $(i,0)$ to $(i,k)$,
and both also have the same weight.
The forward direction of the reduction follows: Given such a set of edge segments,
a set of vertex segments of weight exactly $(k+6)W$ exists (those for the $i$th column)
so that the union of the two separates $s$ and $t$,
and has the specified weight.
For the backwards direction of the reduction:
For a set $C\subset\cC$ with weight at most $w+(k+6)W$, if $w\leq kW$,
then it is only possible to have one maximal connected set of vertex segments
in $C$.
Each of the sets of (potential) obstacles crossed by paths in
\cref{fig:lower-bound-segments-paths}
must have at least one obstacle chosen from them for the whole set to separate $s$ and $t$,
so at least one such maximal connected set of vertex segments must be in $C$.
Hence, the total weight of the remaining obstacles in $C$ is $w$,
and some subset of these obstacles must themselves be a connected set of edge segments
(one per row) corresponding exactly to a $k$-walk in $G$.

The two modified constructions with polylines are obtained by replacing the edge segments with polylines
that have the desired properties.
These cases are visualized in
\cref{fig:lower-bound-polylines}.
\end{proof}

\begin{figure}
\centering
\includegraphics[scale=0.35,page=12]{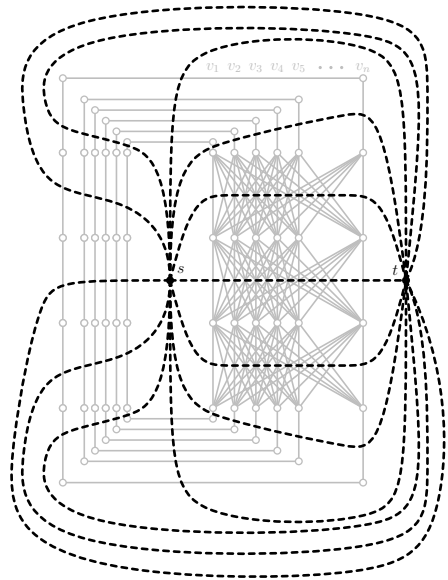}
\caption{The finite set of $(s,t)$ paths that need to be considered for the reduction of min-weight $k$-walk to the $(s,t)$ point-separation problem with line segments.}
\label{fig:lower-bound-segments-paths}
\end{figure}

\begin{figure}
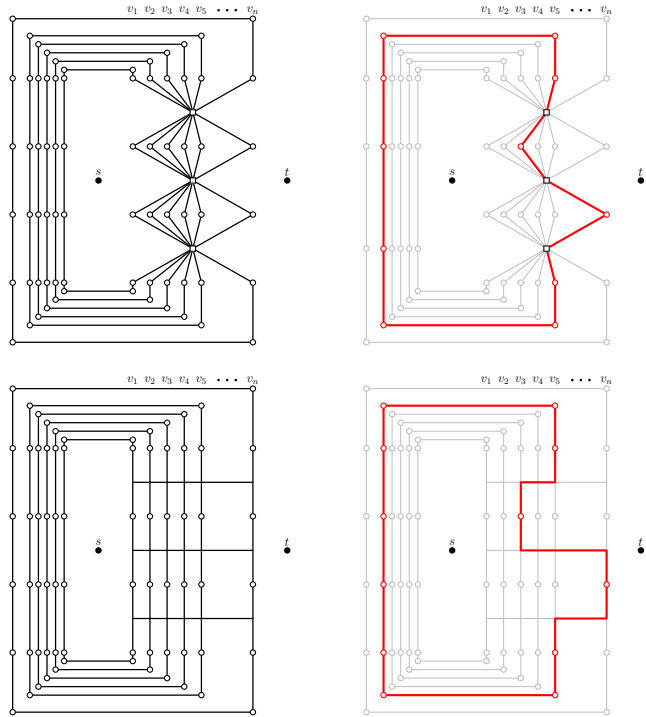

\centering
\includegraphics[scale=0.4,page=10]{lower-bounds}
\hspace{2em}
\includegraphics[scale=0.4,page=11]{lower-bounds}\\
\vspace{1em}
\includegraphics[scale=0.4,page=13]{lower-bounds}
\hspace{2em}
\includegraphics[scale=0.4,page=14]{lower-bounds}
\caption{A demonstration of the reduction from min-weight $k$-walk to the $(s,t)$ point-separation problem for
length-$2$ polylines with few total intersection points,
or length-$3$-rectilinear polylines.}
\label{fig:lower-bound-polylines}
\end{figure}

There are a number of fine-grained lower bounds for min-weight $k$-walk and $k$-walk detection.
In particular, all such bounds hold for \emph{fixed} $k$ and graphs where $k$-walks and $k$-cycles coincide
(which is itself always true for $k=3$ in graphs with no self-loops, and is otherwise implied by a property called ``$k$-circle-layered'').
In \arxivappendix{\cref{sec:lb-details}},
we discuss the following results that follow from known fine-grained bounds and 
\cref{thm:k-cycle-reduction}:
\begin{tabular}{lll}
\toprule
\begin{socgtabular}{@{}l@{}}
If $(s,t)$ point-separation\\ can be solved in time\dots
\end{socgtabular}
&
\begin{socgtabular}{@{}l@{}}
for $n$ obstacles that are\dots
\end{socgtabular}
&
\begin{socgtabular}{@{}l@{}}
then it would\\ imply a new \afterreview{state-of-the-art}\\ algorithm for\dots
\end{socgtabular}
\\
\midrule
$O(n^{3/2-o(1)})$ & weighted line segments & edge-weighted APSP\\
$O(n^{2-o(1)})$ & weighted line segments & minimum-weight $k$-clique\\
$O(n^{2-o(1)})$ & weighted length-$3$ rectilinear polylines & minimum-weight $k$-clique\\
$O(n^{3/2-o(1)})$ & unweighted line segments & max-$3$-SAT\\
$O(n^{3/2-o(1)})$ & unweighted length-$3$ rectilinear polylines & max-$3$-SAT\\[4pt]
$O(n^{3/2-o(1)})$ &
\begin{socgtabular}{@{}l@{}}
weighted length-$2$ polylines with $O(\sqrt{n})$\\ unique intersection points and $3$\\ intersection points per obstacle
\end{socgtabular}
& edge-weighted APSP\\
\bottomrule
\end{tabular}

\section{Conclusion}

We have discussed several upper and lower bounds for the $(s,t)$ point-separation problem in various cases,
and we conclude by briefly listing some of the most interesting open problems:
\begin{itemize}
    \item Is there a near-quadratic algorithm for general weighted obstacles? Can a conditional lower bound (based on any popular hypotheses) excluding this possibility be formulated?
    \item Can a (truly) sub-quadratic algorithm be devised for disks, unit disks, or axis-aligned segments? Is there a non-trivial fine-grained lower bound in any of these cases?
    \item Are there faster algorithms or stronger lower bounds for the \emph{unweighted} versions of the problem?
\end{itemize}
We note that it seems very likely that unit disks would admit a \emph{slightly} sub-quadratic algorithm,
since APSP is known to be solvable for unit disk graphs in (essentially) slightly sub-quadratic time~\cite{chan2016all},
and the methods seem as though this method could plausibly be extended to the homology cover.

\bibliography{references}

\begin{thebibliography}{10}

\bibitem{abboud2025allpairs}
Amir Abboud, Nick Fischer, Ce~Jin, Virginia~Vassilevska Williams, and Zoe Xi.
\newblock All-pairs shortest paths with few weights per node.
\newblock In {\em Proceedings of the 57th Annual ACM Symposium on Theory of Computing}, pages 1956--1964, 2025.

\bibitem{alman2024asymmetryyieldsfastermatrix}
Josh Alman, Ran Duan, Virginia {Vassilevska Williams}, Yinzhan Xu, Zixuan Xu, and Renfei Zhou.
\newblock More asymmetry yields faster matrix multiplication.
\newblock In Yossi Azar and Debmalya Panigrahi, editors, {\em Proceedings of the 2025 Annual {ACM-SIAM} Symposium on Discrete Algorithms, {SODA} 2025, New Orleans, LA, USA, January 12-15, 2025}, pages 2005--2039. {SIAM}, 2025.
\newblock \href {https://doi.org/10.1137/1.9781611978322.63} {\path{doi:10.1137/1.9781611978322.63}}.

\bibitem{alt2005voronoi}
Helmut Alt, Otfried Cheong, and Antoine Vigneron.
\newblock The {V}oronoi diagram of curved objects.
\newblock {\em Discrete \& Computational Geometry}, 34:439--453, 2005.

\bibitem{aurenhammer1988improved}
Franz Aurenhammer.
\newblock Improved algorithms for discs and balls using power diagrams.
\newblock {\em Journal of Algorithms}, 9(2):151--161, 1988.

\bibitem{CabelloG16}
Sergio Cabello and Panos Giannopoulos.
\newblock The complexity of separating points in the plane.
\newblock {\em Algorithmica}, 74(2):643--663, 2016.
\newblock \href {https://doi.org/10.1007/s00453-014-9965-6} {\path{doi:10.1007/s00453-014-9965-6}}.

\bibitem{CabelloM18}
Sergio Cabello and Lazar Milinkovi\'{c}.
\newblock Two optimization problems for unit disks.
\newblock {\em Comput. Geom.}, 70-71:1--12, 2018.
\newblock \href {https://doi.org/10.1016/j.comgeo.2017.12.001} {\path{doi:10.1016/j.comgeo.2017.12.001}}.

\bibitem{ChambersEFN23}
Erin~W. Chambers, Jeff Erickson, Kyle Fox, and Amir Nayyeri.
\newblock Minimum cuts in surface graphs.
\newblock {\em {SIAM} J. Comput.}, 52(1):156--195, 2023.
\newblock URL: \url{https://doi.org/10.1137/19m1291820}, \href {https://doi.org/10.1137/19M1291820} {\path{doi:10.1137/19M1291820}}.

\bibitem{ChambersEN09_2}
Erin~W. Chambers, Jeff Erickson, and Amir Nayyeri.
\newblock Homology flows, cohomology cuts.
\newblock In Michael Mitzenmacher, editor, {\em Proceedings of the 41st Annual {ACM} Symposium on Theory of Computing, {STOC} 2009, Bethesda, MD, USA, May 31 - June 2, 2009}, pages 273--282. {ACM}, 2009.
\newblock \href {https://doi.org/10.1145/1536414.1536453} {\path{doi:10.1145/1536414.1536453}}.

\bibitem{ChambersEN09}
Erin~W. Chambers, Jeff Erickson, and Amir Nayyeri.
\newblock Minimum cuts and shortest homologous cycles.
\newblock In John Hershberger and Efi Fogel, editors, {\em Proceedings of the 25th {ACM} Symposium on Computational Geometry, Aarhus, Denmark, June 8-10, 2009}, pages 377--385. {ACM}, 2009.
\newblock \href {https://doi.org/10.1145/1542362.1542426} {\path{doi:10.1145/1542362.1542426}}.

\bibitem{ChambersEN12}
Erin~W. Chambers, Jeff Erickson, and Amir Nayyeri.
\newblock Homology flows, cohomology cuts.
\newblock {\em {SIAM} J. Comput.}, 41(6):1605--1634, 2012.
\newblock \href {https://doi.org/10.1137/090766863} {\path{doi:10.1137/090766863}}.

\bibitem{Chan07}
Timothy~M. Chan.
\newblock More algorithms for all-pairs shortest paths in weighted graphs.
\newblock In David~S. Johnson and Uriel Feige, editors, {\em Proceedings of the 39th Annual {ACM} Symposium on Theory of Computing, San Diego, California, USA, June 11-13, 2007}, pages 590--598. {ACM}, 2007.
\newblock \href {https://doi.org/10.1145/1250790.1250877} {\path{doi:10.1145/1250790.1250877}}.

\bibitem{Chan10}
Timothy~M. Chan.
\newblock More algorithms for all-pairs shortest paths in weighted graphs.
\newblock {\em {SIAM} J. Comput.}, 39(5):2075--2089, 2010.
\newblock \href {https://doi.org/10.1137/08071990X} {\path{doi:10.1137/08071990X}}.

\bibitem{chan2023finding}
Timothy~M Chan.
\newblock Finding triangles and other small subgraphs in geometric intersection graphs.
\newblock In {\em Proceedings of the 2023 Annual ACM-SIAM Symposium on Discrete Algorithms (SODA)}, pages 1777--1805. SIAM, 2023.

\bibitem{ChanH024}
Timothy~M. Chan, Qizheng He, and Jie Xue.
\newblock Enclosing points with geometric objects.
\newblock In Wolfgang Mulzer and Jeff~M. Phillips, editors, {\em 40th International Symposium on Computational Geometry, SoCG 2024, June 11-14, 2024, Athens, Greece}, volume 293 of {\em LIPIcs}, pages 35:1--35:15. Schloss Dagstuhl - Leibniz-Zentrum f{\"{u}}r Informatik, 2024.
\newblock URL: \url{https://doi.org/10.4230/LIPIcs.SoCG.2024.35}, \href {https://doi.org/10.4230/LIPICS.SOCG.2024.35} {\path{doi:10.4230/LIPICS.SOCG.2024.35}}.

\bibitem{chan2016all}
Timothy~M Chan and Dimitrios Skrepetos.
\newblock All-pairs shortest paths in unit-disk graphs in slightly subquadratic time.
\newblock In {\em 27th International Symposium on Algorithms and Computation (ISAAC 2016)}, pages 24--1. Schloss Dagstuhl--Leibniz-Zentrum f{\"u}r Informatik, 2016.

\bibitem{ChanS17}
Timothy~M. Chan and Dimitrios Skrepetos.
\newblock All-pairs shortest paths in geometric intersection graphs.
\newblock In Faith Ellen, Antonina Kolokolova, and J{\"{o}}rg{-}R{\"{u}}diger Sack, editors, {\em Algorithms and Data Structures - 15th International Symposium, {WADS} 2017, St. John's, NL, Canada, July 31 - August 2, 2017, Proceedings}, volume 10389 of {\em Lecture Notes in Computer Science}, pages 253--264. Springer, 2017.
\newblock \href {https://doi.org/10.1007/978-3-319-62127-2\_22} {\path{doi:10.1007/978-3-319-62127-2\_22}}.

\bibitem{ChanS19}
Timothy~M. Chan and Dimitrios Skrepetos.
\newblock All-pairs shortest paths in geometric intersection graphs.
\newblock {\em J. Comput. Geom.}, 10(1):27--41, 2019.
\newblock URL: \url{https://doi.org/10.20382/jocg.v10i1a2}, \href {https://doi.org/10.20382/JOCG.V10I1A2} {\path{doi:10.20382/JOCG.V10I1A2}}.

\bibitem{ChanW21}
Timothy~M. Chan and R.~Ryan Williams.
\newblock Deterministic apsp, orthogonal vectors, and more: Quickly derandomizing razborov-smolensky.
\newblock {\em {ACM} Trans. Algorithms}, 17(1):2:1--2:14, 2021.
\newblock \href {https://doi.org/10.1145/3402926} {\path{doi:10.1145/3402926}}.

\bibitem{ChanW16}
Timothy~M. Chan and Ryan Williams.
\newblock Deterministic apsp, orthogonal vectors, and more: Quickly derandomizing razborov-smolensky.
\newblock In Robert Krauthgamer, editor, {\em Proceedings of the Twenty-Seventh Annual {ACM-SIAM} Symposium on Discrete Algorithms, {SODA} 2016, Arlington, VA, USA, January 10-12, 2016}, pages 1246--1255. {SIAM}, 2016.
\newblock URL: \url{https://doi.org/10.1137/1.9781611974331.ch87}, \href {https://doi.org/10.1137/1.9781611974331.CH87} {\path{doi:10.1137/1.9781611974331.CH87}}.

\bibitem{Chazelle93a}
Bernard Chazelle.
\newblock Cutting hyperplanes for divide-and-conquer.
\newblock {\em Discret. Comput. Geom.}, 9:145--158, 1993.
\newblock \href {https://doi.org/10.1007/BF02189314} {\path{doi:10.1007/BF02189314}}.

\bibitem{clrs}
Thomas~H Cormen, Charles~E Leiserson, Ronald~L Rivest, and Clifford Stein.
\newblock {\em Introduction to algorithms}.
\newblock MIT press, 2022.

\bibitem{BergCKO08}
Mark de~Berg, Otfried Cheong, Marc~J. van Kreveld, and Mark~H. Overmars.
\newblock {\em Computational geometry: algorithms and applications, 3rd Edition}.
\newblock Springer, 2008.

\bibitem{edelsbrunner1986optimal}
Herbert Edelsbrunner, Leonidas~J Guibas, and Jorge Stolfi.
\newblock Optimal point location in a monotone subdivision.
\newblock {\em SIAM Journal on Computing}, 15(2):317--340, 1986.

\bibitem{EricksonN11a}
Jeff Erickson and Amir Nayyeri.
\newblock Minimum cuts and shortest non-separating cycles via homology covers.
\newblock In Dana Randall, editor, {\em Proceedings of the Twenty-Second Annual {ACM-SIAM} Symposium on Discrete Algorithms, {SODA} 2011, San Francisco, California, USA, January 23-25, 2011}, pages 1166--1176. {SIAM}, 2011.
\newblock \href {https://doi.org/10.1137/1.9781611973082.88} {\path{doi:10.1137/1.9781611973082.88}}.

\bibitem{Frederickson87}
Greg~N. Frederickson.
\newblock Fast algorithms for shortest paths in planar graphs, with applications.
\newblock {\em {SIAM} J. Comput.}, 16(6):1004--1022, 1987.
\newblock \href {https://doi.org/10.1137/0216064} {\path{doi:10.1137/0216064}}.

\bibitem{GibsonKPVV16}
Matt Gibson, Gaurav Kanade, Rainer Penninger, Kasturi~R. Varadarajan, and Ivo Vigan.
\newblock On isolating points using unit disks.
\newblock {\em J. Comput. Geom.}, 7(1):540--557, 2016.
\newblock URL: \url{https://doi.org/10.20382/jocg.v7i1a22}, \href {https://doi.org/10.20382/JOCG.V7I1A22} {\path{doi:10.20382/JOCG.V7I1A22}}.

\bibitem{GibsonKV11}
Matt Gibson, Gaurav Kanade, and Kasturi~R. Varadarajan.
\newblock On isolating points using disks.
\newblock In Camil Demetrescu and Magn{\'{u}}s~M. Halld{\'{o}}rsson, editors, {\em Algorithms - {ESA} 2011 - 19th Annual European Symposium, Saarbr{\"{u}}cken, Germany, September 5-9, 2011. Proceedings}, volume 6942 of {\em Lecture Notes in Computer Science}, pages 61--69. Springer, 2011.
\newblock \href {https://doi.org/10.1007/978-3-642-23719-5\_6} {\path{doi:10.1007/978-3-642-23719-5\_6}}.

\bibitem{ImaiI90}
Hiroshi Imai and Kazuo Iwano.
\newblock Efficient sequential and parallel algorithms for planar minimum cost flow.
\newblock In Tetsuo Asano, Toshihide Ibaraki, Hiroshi Imai, and Takao Nishizeki, editors, {\em Algorithms, International Symposium {SIGAL} '90, Tokyo, Japan, August 16-18, 1990, Proceedings}, volume 450 of {\em Lecture Notes in Computer Science}, pages 21--30. Springer, 1990.
\newblock \href {https://doi.org/10.1007/3-540-52921-7\_52} {\path{doi:10.1007/3-540-52921-7\_52}}.

\bibitem{itai1979maximum}
Alon Itai and Yossi Shiloach.
\newblock Maximum flow in planar networks.
\newblock {\em SIAM Journal on Computing}, 8(2):135--150, 1979.

\bibitem{ItalianoNSW11}
Giuseppe~F. Italiano, Yahav Nussbaum, Piotr Sankowski, and Christian Wulff{-}Nilsen.
\newblock Improved algorithms for min cut and max flow in undirected planar graphs.
\newblock In Lance Fortnow and Salil~P. Vadhan, editors, {\em Proceedings of the 43rd {ACM} Symposium on Theory of Computing, {STOC} 2011, San Jose, CA, USA, 6-8 June 2011}, pages 313--322. {ACM}, 2011.
\newblock \href {https://doi.org/10.1145/1993636.1993679} {\path{doi:10.1145/1993636.1993679}}.

\bibitem{kedem1986union}
Klara Kedem, Ron Livne, J{\'a}nos Pach, and Micha Sharir.
\newblock On the union of {J}ordan regions and collision-free translational motion amidst polygonal obstacles.
\newblock {\em Discrete \& Computational Geometry}, 1:59--71, 1986.

\bibitem{KumarLSS21}
Neeraj Kumar, Daniel Lokshtanov, Saket Saurabh, and Subhash Suri.
\newblock A constant factor approximation for navigating through connected obstacles in the plane.
\newblock In D{\'{a}}niel Marx, editor, {\em Proceedings of the 2021 {ACM-SIAM} Symposium on Discrete Algorithms, {SODA} 2021, Virtual Conference, January 10 - 13, 2021}, pages 822--839. {SIAM}, 2021.
\newblock \href {https://doi.org/10.1137/1.9781611976465.52} {\path{doi:10.1137/1.9781611976465.52}}.

\bibitem{KumarSoCG2022}
Neeraj Kumar, Daniel Lokshtanov, Saket Saurabh, Subhash Suri, and Jie Xue.
\newblock Point separation and obstacle removal by finding and hitting odd cycles.
\newblock In Xavier Goaoc and Michael Kerber, editors, {\em 38th International Symposium on Computational Geometry, SoCG 2022, June 7-10, 2022, Berlin, Germany}, volume 224 of {\em LIPIcs}, pages 52:1--52:14. Schloss Dagstuhl - Leibniz-Zentrum f{\"{u}}r Informatik, 2022.
\newblock URL: \url{https://doi.org/10.4230/LIPIcs.SoCG.2022.52}, \href {https://doi.org/10.4230/LIPICS.SOCG.2022.52} {\path{doi:10.4230/LIPICS.SOCG.2022.52}}.

\bibitem{lincoln2018tight}
Andrea Lincoln, Virginia~Vassilevska Williams, and Ryan Williams.
\newblock Tight hardness for shortest cycles and paths in sparse graphs.
\newblock In {\em Proceedings of the Twenty-Ninth Annual ACM-SIAM Symposium on Discrete Algorithms}, pages 1236--1252. SIAM, 2018.

\bibitem{loffler_et_al}
Maarten L\"{o}ffler, Tim Ophelders, Rodrigo~I. Silveira, and Frank Staals.
\newblock {Shortest Paths in Portalgons}.
\newblock In Erin~W. Chambers and Joachim Gudmundsson, editors, {\em 39th International Symposium on Computational Geometry (SoCG 2023)}, volume 258 of {\em Leibniz International Proceedings in Informatics (LIPIcs)}, pages 48:1--48:16, Dagstuhl, Germany, 2023. Schloss Dagstuhl -- Leibniz-Zentrum f{\"u}r Informatik.
\newblock URL: \url{https://drops.dagstuhl.de/entities/document/10.4230/LIPIcs.SoCG.2023.48}, \href {https://doi.org/10.4230/LIPIcs.SoCG.2023.48} {\path{doi:10.4230/LIPIcs.SoCG.2023.48}}.

\bibitem{ophelders2024shortest}
Tim Ophelders, Maarten L{\"o}ffler, Rodrigo~I Silveira, and Frank Staals.
\newblock Shortest paths in portalgons.
\newblock {\em Journal of Computational Geometry}, 15(2):174--221, 2024.

\bibitem{reif1983minimum}
John~H Reif.
\newblock Minimum s-t cut of a planar undirected network in o(n$\backslash$log\^{}2(n)) time.
\newblock {\em SIAM Journal on Computing}, 12(1):71--81, 1983.

\bibitem{Seidel95}
Raimund Seidel.
\newblock On the all-pairs-shortest-path problem in unweighted undirected graphs.
\newblock {\em J. Comput. Syst. Sci.}, 51(3):400--403, 1995.
\newblock URL: \url{https://doi.org/10.1006/jcss.1995.1078}, \href {https://doi.org/10.1006/JCSS.1995.1078} {\path{doi:10.1006/JCSS.1995.1078}}.

\bibitem{WilliamsW18}
Virginia {Vassilevska Williams} and R.~Ryan Williams.
\newblock Subcubic equivalences between path, matrix, and triangle problems.
\newblock {\em J. {ACM}}, 65(5):27:1--27:38, 2018.
\newblock \href {https://doi.org/10.1145/3186893} {\path{doi:10.1145/3186893}}.

\bibitem{WilliamsW10}
Virginia {Vassilevska Williams} and Ryan Williams.
\newblock Subcubic equivalences between path, matrix and triangle problems.
\newblock In {\em 51th Annual {IEEE} Symposium on Foundations of Computer Science, {FOCS} 2010, October 23-26, 2010, Las Vegas, Nevada, {USA}}, pages 645--654. {IEEE} Computer Society, 2010.
\newblock \href {https://doi.org/10.1109/FOCS.2010.67} {\path{doi:10.1109/FOCS.2010.67}}.

\bibitem{Williams18}
R.~Ryan Williams.
\newblock Faster all-pairs shortest paths via circuit complexity.
\newblock {\em {SIAM} J. Comput.}, 47(5):1965--1985, 2018.
\newblock \href {https://doi.org/10.1137/15M1024524} {\path{doi:10.1137/15M1024524}}.

\bibitem{Williams14a}
Ryan Williams.
\newblock Faster all-pairs shortest paths via circuit complexity.
\newblock In David~B. Shmoys, editor, {\em Symposium on Theory of Computing, {STOC} 2014, New York, NY, USA, May 31 - June 03, 2014}, pages 664--673. {ACM}, 2014.
\newblock \href {https://doi.org/10.1145/2591796.2591811} {\path{doi:10.1145/2591796.2591811}}.

\bibitem{yap1987n}
Chee~K Yap.
\newblock An o (n log n) algorithm for the voronoi diagram of a set of simple curve segments.
\newblock {\em Discrete \& Computational Geometry}, 2(4):365--393, 1987.

\bibitem{Yuster09}
Raphael Yuster.
\newblock Efficient algorithms on sets of permutations, dominance, and real-weighted {APSP}.
\newblock In Claire Mathieu, editor, {\em Proceedings of the Twentieth Annual {ACM-SIAM} Symposium on Discrete Algorithms, {SODA} 2009, New York, NY, USA, January 4-6, 2009}, pages 950--957. {SIAM}, 2009.
\newblock \href {https://doi.org/10.1137/1.9781611973068.103} {\path{doi:10.1137/1.9781611973068.103}}.

\end{thebibliography}

\arxivonly{
\appendix

\section{Prior Work}
\label{sec:prior-work}

Several works have studied the $(s,t)$ point-separation problem.
In addition, a few generalizations of the $(s,t)$ point separation have been studied.
The $(s,t)$ point separation problem
also itself generalizes the well-studied problem of maximum flow in planar graphs.
We review all these results in this section, as well as results for all-pairs \afterreview{shortest paths}, which will be quite relevant to our methods.

\subsection{$(s,t)$ Point-Separation}

Various results are known in each model.
The first model considered was
the restriction to unit disk obstacles
(equivalently, unit circle obstacles that do not contain $s$ or $t$ in their interior).
For the special case of (unweighted) unit disk obstacles,
Gibson, Kanade, and Varadarajan~\cite{GibsonKV11}
gave a polynomial-time $2$-approximation for this problem.
Cabello and Milinkovi\'{c}~\cite{CabelloM18} later described an algorithm
running in $O(|\cC|^2\log^3|\cC|)$ time solving this case exactly.

Cabello and Giannopoulos~\cite{CabelloG16}
proposed the oracle-based intersection graph model.
Their model encapsulates many important classes of curves (with and without interiors),
including disks and circles (of any radii), line segments, certain algebraic curve segments,
as well as various combinations of these classes.
In particular, the purpose of their model is to allow them to compute an
intersection graph
along with some additional information related to $\overline{st}$.
The additional information they store is inherently \emph{topological},
as are important aspects of all algorithms (existing and new)
that we will discuss in detail throughout this work.

In the oracle-based intersection graph model,
Cabello and Giannopoulos~\cite{CabelloG16}
present an algorithm that solves the $(s,t)$ point-separation problem in $O(|\cC|^3)$ time
for general weights.
To be more precise, when the intersection graph has exactly $r$ edges (that is, $r$ pairs of obstacles intersect),
their algorithm runs in
$O(|\cC|\cdot r+|\cC|^2\log|\cC|)$ time.
The algorithm of Cabello and Milinkovi\'{c}~\cite{CabelloM18} for unit disks builds upon this approach
by leveraging the structure of unit disks.

Kumar, Lokshtanov, Saurabh and Suri~\cite[Section 6]{KumarLSS21}
later described an algorithm for the arrangement-based model.
We will use the following formulation and notation for the arrangement-based model throughout the paper:
Given an embedded plane graph $D$, a (weighted) set of connected subgraphs $\sigma$,
and two faces $s$ and $t$,
the \defn{$(s,t)$ face-separation problem} asks for a minimum-weight
subset of $\sigma$ whose edges
form an $(s,t)$ cut in the dual graph to $D$.
It is well-known that this corresponds to a
subgraph of $D$
containing a simple cycle separating $s$ and $t$~\cite{itai1979maximum}
(in the same sense as before).
Note that
the $(s,t)$ point-separation problem
can be reduced to the $(s,t)$ face-separation problem by
taking the \defn{arrangement} of the obstacle set $\cC$.
That is, the plane graph $D$ of minimal vertex count
for which the union of all obstacles in $\cC$ is a planar drawing,
and the set $\sigma$ of edge sets corresponding to arcs of each obstacle in $\cC$.
We carefully define it in this way because we have made \emph{no assumptions} of general position:
Pairs of obstacles may share more than a finite number of points.
For many important classes of closed curves
(same list as before)
the arrangement has $O(|\cC|^2)$ vertices,
and possibly far fewer for certain instance sets.

By assuming the arrangement $D$ and its corresponding connected subgraphs $\sigma$ are given as input,
Kumar, Lokshtanov, Saurabh and Suri~\cite[Section 6]{KumarLSS21}
describe an algorithm that runs in polynomial time.
Narrowing down the exact time complexity of their algorithm actually requires defining another parameter set.
For each vertex $v\in V(D)$, let $p_v$ denote
the number of connected subgraphs in $\sigma$ that
contain $v$.
Then, their algorithm runs in
$O(|V(D)|\cdot(\sum_{v\in V}p_v^2+\sum_{S\in\sigma}|S|)\cdot\log|V(D)|)
=O(|V(D)|\cdot(\sum_{v\in V}p_v^2)\cdot\log|V(D)|)$
time.
In less precise terms, if the total number of vertex-obstacle incidences
is $m$,
the total runtime is bounded by
$O(|V(D)|\cdot m^2\cdot\log|V(D)|)$.
We will not be giving a full summary of their algorithm,
but we note that
their algorithm implicitly makes use of one key topological construct (perhaps unintentionally)
that turns out to be quite helpful.
Specifically, they implicitly work with a structure known as a ``homology cover'',
although they do not refer to it as such.
Homology covers (of this particular flavour)
have been previously applied to the maximum flow problem in so-called surface graphs~\cite{EricksonN11a,ChambersEFN23}.
In our work, we will also (explicitly) make use of the homology cover.

\subsection{Maximum Flow}
Consider the special case of the $(s,t)$ point-separation problem
where the obstacles are a set of non-crossing line segments
that may share endpoints.
In such a case, the line segments form the edges of a planar graph,
and each of $s$ and $t$ belong to some face,
and the $(s,t)$ point-separation problem is equivalent to the min $(s,t)$-cut problem in the dual graph.
By linear programming duality,
this is equivalent to the maximum flow problem,
and hence can be solved in polynomial time.
In particular, since the graph is planar,
a line of work~\cite{itai1979maximum,reif1983minimum,Frederickson87,ItalianoNSW11}
has given particularly fast algorithms for maximum flow in planar graphs,
resulting in an $O(n\log\log n)$ time for a graph of $n$ vertices~\cite{ItalianoNSW11},
assuming the graph itself is given.
This entire line of work primarily focuses directly on the min $(s,t)$-cut problem,
since it turns out to be easier to work with.
In particular, the edges of a min $(s,t)$-cut in a planar graph always form a \emph{simple cycle} in the dual graph.

A related sequence of work has also obtained similar results
for maximum flow in graphs embedded on a surface of
bounded genus~\cite{ImaiI90,ChambersEN09,ChambersEN09_2,ChambersEN12,EricksonN11a,ChambersEFN23},
the most recent paper of which obtains algorithms running in $O(c^{O(c)}n\log\log n)$ and $O(2^{O(c)}n\log n)$ time
for $c$ which is the sum of the genus and the number of boundaries of the surface.
In particular, the latter algorithm
works with the ``homology cover'',
and is quite similar to the algorithm of \cite{KumarLSS21}.
We will not discuss in further detail how to approach such problems on surfaces
(\cite{ChambersEFN23} presents a fantastic introduction for the curious reader),
but they are quite related to the topological methods used for the $(s,t)$-point separation problem.

\subsection{Generalizations of $(s,t)$ Point-Separation}

There is a natural generalization of the $(s,t)$ point-separation problem
that has been explored:
Instead of just one pair $(s,t)$,
in the \defn{generalized point-separation problem}
we are given pairs $(s_1,t_1),\dots,(s_p,t_p)$,
all of which must be separated by the subset of obstacles.
In the model of \cite{CabelloG16},
Kumar, Lokshtanov, Saurabh, Suri, and Xue~\cite{KumarSoCG2022} gave an algorithm for this problem
running in $O(2^{(O(p)}n^{O(|\cC|)})$ time.
Assuming the exponential time hypothesis,
they also showed that
this problem cannot be solved
in time $O(f(p)n^{o(|\cC|/\log|\cC|})$,
so their algorithm is essentially optimal.
It should be noted that this problem can also be framed in terms of a (general) graph $G$,
where $V(G)=\{s_1,\dots,s_p,t_1,\dots,t_p\}$ (note that some of these may coincide)
and $E(G)=\{s_1t_1,\dots,s_pt_p\}$.
Their lower bound also holds even in the case that such a graph $G$ is a complete graph.
Chan, He, and Xue~\cite{ChanH024}
gave two different polynomial-time approximation algorithms for the case when
$G$ is a star whose centre is the point at infinity (equivalently, a sufficiently distant point).

Before the results on the general case,
Gibson, Kanade, Penninger, Varadarajan, and Vigan~\cite{GibsonKPVV16}
considered the generalized point-separation problem
with $G$ as the complete graph
and (both general and unit) disk obstacles.
They showed that the problem is NP-hard even with just unit disks,
and that it admits a polynomial-time $(9+\varepsilon)$-approximation algorithm
for any $\varepsilon$ and general disks.

\subsection{All-Pairs Shortest Paths}
\label{subsec:apsp}

The all-pairs shortest paths problem (APSP)
is a fundamental problem in graph algorithms
with extensive literature,
and (as we will see) it is quite related to the $(s,t)$ point-separation problem.
In fact, there are three distinct variants that will be of importance.

For a weighted directed graph $G$ with $n$ vertices and real-valued edge weights,
APSP is known to be solvable in $O(n^3)$
time by the Floyd-Warshall algorihm~\cite{clrs}.
It is also known to be solvable in $O(n^3/2^{\Omega(\sqrt{\log n})})$ time~\cite{Williams14a, ChanW16, Williams18, ChanW21}.
However, there is no known algorithm solving APSP for real-valued weights in $O(n^{3-\varepsilon})$ time for any $\varepsilon>0$, and the existence of one has become a long-standing open problem.
This has lead to the so-called \defn{APSP-conjecture}, which states that no such algorithm exists.
It turns out that, similarly to \textsc{3-SAT}, \textsc{3-SUM}, and the orthogonal vectors problem,
there is a whole family of problems for which this conjecture has an equivalent form~\cite{WilliamsW10, WilliamsW18}.
This family uses \defn{subcubic reductions}:
A proof that if a particular problem \textsc{A} can be solved in $O(n^{3-\varepsilon})$ time for any value $\varepsilon>0$,
then there exists some value $\varepsilon'>0$ such that a problem \textsc{B} can be solved in $O(n^{3-\varepsilon'})$ time.
We will use the typical definitions of \defn{APSP-complete} and \defn{APSP-hard}
for problems in this family, and problems at least as hard as any problem in this family, repsectively.
APSP-complete problems typically have some form of matrix representation,
such as the (weighted) adjacency matrix of a graph.
For this reason, the term ``cubic'' here is a bit of a misnomer.
The size of the input for a typical problem in this family is $N=\Theta(n^2)$,
and so a ``subcubic'' algorithm in this case is one that runs in $N^{3/2-\varepsilon}$ time
for a value $\varepsilon>0$.
This is not a concern for most APSP-complete problems,
since a matrix representation is usually the simplest.
However, this will be important to us
while studying lower bounds for the $(s,t)$ point-separation problem.

In contrast to APSP over directed and edge-weighted graphs,
APSP on an undirected and unweighted graph is known to be solvable in $O(n^\omega\log n)$ time~\cite{Seidel95},
where $\omega<2.371339$ is the matrix multiplication exponent~\cite{alman2024asymmetryyieldsfastermatrix}.
Additionally, even faster algorithms are known for shortest paths in some intersection graphs:
Chan and Skrepetos~\cite{ChanS17, ChanS19}
give a framework reducing APSP in (unweighted) intersection graphs
to an offline intersection detection problem.
Importantly, their framework is \emph{not} restricted to intersection graphs in the plane.
Their framework uses the following theorem:
\begin{theorem}[{\cite[Theorem 2]{ChanS19}}]
\label{thm:unweighted-intersection-apsp}
Let $\cC$ be a set of geometric objects in some space.
Let $\si{n,m}$ (``static intersection'') be the time complexity for
checking if each of $n$ different
objects in $\cC$ intersect
any object in some subset $C\subset\cC$ of size $m=|C|$.
Assume $\si{n,m}$ is super-additive, so that $\si{n_1,m_1}+\si{n_2,m_2}\leq\si{n_1+n_2,m_1+m_2}$.
Then APSP in the unweighted intersection graph
of $\cC$
can be solved in $O(n^2+n\si{n,n})$ time.
\end{theorem}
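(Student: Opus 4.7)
The plan is to run a separate breadth-first search (BFS) from each object in $\cC$, using the static-intersection procedure to expand BFS frontiers without ever materialising any edges of the intersection graph explicitly.

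Fix a source $s \in \cC$. Maintain the set $U \subseteq \cC$ of not-yet-visited objects (initialised to $\cC \setminus \{s\}$) together with a frontier $F$ (initialised to $\{s\}$). At each BFS step, issue a single call to the static-intersection procedure on the universe $U$ against the subset $F$, identifying exactly those $v \in U$ that intersect at least one object of $F$. These objects become the next BFS layer, receive distance one more than the current level, are removed from $U$, and collectively form the new $F$. Repeat until $F = \emptyset$. Each source thus produces a complete distance vector, and concatenating them yields APSP.

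The routine bookkeeping per BFS (creating and updating the $n$-length distance array, updating $U$) is $O(n)$, contributing $O(n^2)$ across all $n$ sources and explaining the first term of the claimed bound. The substantive cost is the total time of the intersection queries: for a single BFS it is $\sum_i \si{|U_i|,|F_i|}$, and because the layers $F_0, F_1, \ldots$ are pairwise disjoint subsets of $\cC$, one has $\sum_i |F_i| \le n$. The goal is to show this sum is $O(\si{n,n})$, so that $n$ BFSes use $O(n \cdot \si{n,n})$ intersection work in aggregate.

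The main obstacle is establishing this per-BFS bound. A direct appeal to superadditivity gives the inequality $\sum_i \si{|U_i|,|F_i|} \le \si{\sum_i |U_i|,\, \sum_i |F_i|}$, but in the worst case $\sum_i |U_i|$ can reach $\Theta(n^2)$ (e.g.\ a long BFS in which each layer contains a single new object), so this naive bound is too loose. The fix is to apply superadditivity on a carefully chosen grouping of the BFS steps, so that within each group the universe $U$ can be replaced by a single shared universe of size $O(n)$ and across groups the universes and frontiers together telescope to $O(n)$ in each argument. Carrying out this amortisation---so that the many per-level calls are effectively consolidated into a single $\si{n,n}$ call (up to constants)---is the technical heart of the proof, and together with the straightforward bookkeeping yields the stated $O(n^2 + n \cdot \si{n,n})$ total running time.
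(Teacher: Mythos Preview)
The paper does not prove this theorem; it is quoted verbatim from Chan and Skrepetos~\cite{ChanS19} as background, with no accompanying argument. So there is no in-paper proof to compare against.

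On its own merits, your high-level plan---run a separate BFS from each source and use the static-intersection primitive to expand frontiers---is indeed the right framework. But the proposal has a genuine gap at exactly the point you flag as ``the technical heart'': you assert that a ``carefully chosen grouping'' collapses $\sum_i \si{|U_i|,|F_i|}$ to $O(\si{n,n})$ per BFS, yet you never exhibit the grouping, and for the query pattern you describe no such grouping can exist. Take $n$ unit disks arranged in a path and run your BFS from one endpoint. Every layer has $|F_i|=1$ and $|U_i|=n-1-i$; since each call must output $|U_i|$ yes/no bits, it costs $\Omega(|U_i|)$ regardless of how you group or re-apply superadditivity afterward. Hence one BFS already costs $\sum_i |U_i|=\Theta(n^2)$, while for disks $\si{n,n}=O(n\log n)=o(n^2)$. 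Summed over all $n$ sources this is $\Theta(n^3)$, not the claimed $O(n^2\log n)$.

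The missing idea is that you must avoid re-examining every still-unvisited object at every layer. Issuing one $\si{|U_i|,|F_i|}$ call per level is too wasteful; the work has to be organised so that the \emph{first} arguments of the calls---not only the frontier sizes---telescope to $O(n)$ across a single BFS. That requires changing the query pattern itself (so that each object is charged $O(1)$ times rather than once per layer it remains unvisited), not merely regrouping the same calls after the fact. Your write-up stops just before this step, which is where the actual content of the Chan--Skrepetos argument lies.
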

In particular, Chan and Skrepetos~\cite{ChanS17, ChanS19}
also give or cite such data structures for several classes of obstacles.
We summarize relevant values of $\si{n,n}$ in \cref{fig:si-values}.
In particular, note that the result for arbitrary line segments
also extends to polylines of length $O(1)$,
and the result for axis-aligned line segments
extends to orthogonal polylines of length $O(1)$.

\begin{table}[h]
    \centering
    \begin{tabular}{l|c|l}
        Obstacle Class & $\si{n,n}$ & Citation \\
        \hline
        General Disks & $O(n \log n)$ & \cite{ChanS17, ChanS19} \\
        Axis-Aligned Line Segments & $O(n \log \log n)$ & \cite{ChanS17, ChanS19} \\
        Arbitrary Line Segments & $O(n^{4/3} \log^{1/3} n)$ & \cite[Theorem 4.4]{Chazelle93a} \\
    \end{tabular}
    \caption{Summary of $\si{n,n}$ values for various obstacle classes in the plane.}
    \label{fig:si-values}
\end{table}

Lastly, there are some faster algorithms known for APSP over \emph{vertex}-weighted graphs,
which surprisingly seems to be easier than APSP over edge-weighted graphs.
Specifically,
it is known that APSP with vertex weights can be solved in
$\widetilde O(n^{3-(3-\omega)/4})$ time,
where the $\widetilde O$ notation hides logarithmic factors, and $\omega$ is the matrix-multiplication exponent~\cite{Chan07,Chan10}.
Let $\omega(a,b,c)$ be the value so that an
$n^a\times n^b$ matrix and a $n^b\times n^c$ matrix
can be multiplied in $O(n^\omega(a,b,c))$ time.
It is also known that APSP with vertex weights can be solved in
a similar (and slightly faster) time complexity using \emph{rectangular} matrix multiplication~\cite{Yuster09}.
Very recently, the running time for vertex-weighted APSP has been improved to $\widetilde{O}(n^{(3+\omega)/2})$ time~\cite{abboud2025allpairs}.

\section{Homology and Obstacles}
\label{sec:homology-formal}

In this section, we primarily review aspects of homology, and their interactions with obstacle curves.
Throughout this section, we will assume that no single obstacle $\c\in\cC$
separates $s$ and $t$.
We will briefly revisit and justify this assumption later while describing algorithms.

\subsection{Homology in the Plane}
\label{subsec:homology}

Homology is a broad field of study in algebraic topology.
Fortunately, we need only limit ourselves to a very small special case.
Our overview here is greatly simplified and intended to be approachable for non-experts.%

For two points $s$ and $t$, consider simple closed curves $c,c'$ in $\plane\setminus\{s,t\}$,
where $\plane$ denotes the \emph{extended} plane with a point at infinity (so that $\plane$ is homeomorphic to a sphere),
so that $\plane\setminus\{s,t\}$ is homeomorphic to an annulus.
We will often refer to curves in $\R^2\setminus\{s,t\}$, but for topological purposes
these should be considered to be curves in $\plane\setminus\{s,t\}$ (a superset).
In the $(s,t)$ point-separation problem, we are essentially studying covers of closed curves
that separate $s$ and $t$.
We will say $c$ and $c'$ are \defn{homologous}
if either they \emph{both} separate $s$ and $t$,
or they \emph{both} do not separate $s$ and $t$.
This is an equivalence relation,
and hence defines equivalence classes that we refer to as \defn{homology classes}.
These definitions coincides with the topological notion of ``one-dimensional $Z_2$-homology classes in the annulus''.
This is the \emph{only} notion of homology we will use in this work, so we use the simplified terms.
\cite{ChambersEFN23} give a more detailed (and still approachable) explanation overview of one-dimensional $Z_2$-homology in other surfaces.
We can obtain a very important fact from the field of
homology:
\begin{fact}
\label{fact:homology-path-crossings}
Let $\pi$ be a fixed simple path in the plane from $s$ to $t$.
Let $c$ be a simple closed curve in $\plane\setminus\{s,t\}$.
Then the homology class of $c$
is defined by its number of crossings%
\footnote{Some care must be taken for the definition of ``crossings'' in this context:
An orientation is applied to the curve $\pi$, so that it has a ``right'' and ''left'' side,
and any open set $S$ of $\plane\setminus\{s,t\}$ where $S\setminus\pi$ has exactly two connected
components can have ``left'' and ''right'' assigned to those components.
The curve itself is ``assigned'' to the right side.
That is, a crossing is a point $x$ in the intersection of $c$ and $\pi$,
and a ``direction'' along $c$,
so that
for every $\varepsilon>0$,
there is some open interval along $c$ whose left-endpoint is $x$ (resp. right-endpoint, depending on ``direction'')
contained in some open set $S$ such that $S\setminus\pi$ has two connected components,
and $S$ itself is contained in $N_{\varepsilon}(x)$.
Note that a point $x$ may be part of either $0$, $1$, or $2$ crossings.}
with $\pi$, modulo $2$.
Moreover, the homology class of simple closed curves separating $s$ and $t$
is exactly the one with $1$ crossing of $\pi$, modulo $2$.
Furthermore, the resulting classes are independent of the choice of $\pi$.
\end{fact}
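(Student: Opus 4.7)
The plan is to reduce everything to the Jordan curve theorem in the sphere. Since $\plane$ is homeomorphic to $S^2$ and $c$ is a simple closed curve in $\plane\setminus\{s,t\}$, the Jordan curve theorem yields that $\plane\setminus c$ has exactly two connected components. I would arbitrarily label these components $0$ and $1$, giving a locally constant function $f:\plane\setminus c\to\Zt$. Directly from the definition of separation, $c$ separates $s$ and $t$ if and only if $f(s)\neq f(t)$.

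Next I would show that for every simple path $\pi$ from $s$ to $t$ in $\plane$, the number $k_\pi$ of crossings of $\pi$ with $c$ (in the sense of the footnote) satisfies $k_\pi\equiv f(s)+f(t)\pmod 2$. Informally, the value of $f$ along the traversal of $\pi$ can only change when $\pi$ actually crosses from one side of $c$ to the other, and the footnote's definition is precisely tailored so that each such transversal contributes exactly one crossing. Once this is established, the parity $k_\pi\bmod 2$ depends only on $f(s)+f(t)$, which is intrinsic to $c$, $s$, $t$, and \emph{independent of $\pi$}. Combined with the preceding paragraph, a closed curve separates $s$ and $t$ iff $k_\pi\equiv 1\pmod 2$, giving the ``moreover'' clause and the independence-of-$\pi$ clause. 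For the first assertion, two curves $c,c'$ are homologous iff they agree on whether they separate $s$ and $t$, iff they have the same parity $k_\pi\bmod 2$ for any fixed $\pi$; so parity with respect to $\pi$ is a complete invariant of the two homology classes.

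The main obstacle is the crossing-count argument itself, since $\pi$ and $c$ need not meet transversally: they may coincide on arcs, touch tangentially without crossing, or meet at isolated points in complicated ways. My plan here is a perturbation/local argument. For each connected component $I$ of $\pi\cap c$, one uses the footnote's local characterization to count how many ``crossings'' are attributed to that component: one considers a small disk-neighbourhood $N_\varepsilon(I)$, observes that $\pi$ enters and leaves $N_\varepsilon(I)$ on one of the two sides of $c$ locally, and computes whether the pair (incoming side, outgoing side) is equal (zero crossings contributed) or different (one crossing contributed). Summing these local contributions along $\pi$ gives $k_\pi\bmod 2=f(s)+f(t)\bmod 2$, because the local side-labels chain together to record $f$ along $\pi$. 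Equivalently one may perturb $\pi$ slightly off $c$ away from $s$ and $t$ to obtain a path $\pi'$ meeting $c$ transversally at finitely many points; then $k_{\pi'}\bmod 2$ is trivially $f(s)+f(t)\bmod 2$ by the elementary Jordan-curve side-switch argument, and a careful bookkeeping of the footnote's definition shows $k_\pi\equiv k_{\pi'}\pmod 2$.

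With those two ingredients in hand, the three bullets of the statement (parity determines the homology class, separation corresponds to odd parity, and the result is independent of the choice of $\pi$) follow immediately. The only nontrivial technical burden is the non-transverse case, which is the sole reason the footnote gives such a careful definition of crossing; everything else is a direct application of the Jordan curve theorem.
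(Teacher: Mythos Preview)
The paper does not actually prove this statement: it is labelled as a \emph{Fact}, introduced with ``We can obtain a very important fact from the field of homology,'' and then simply used. In other words, the paper treats it as a standard result from algebraic topology (one-dimensional $\Zt$-homology of the annulus) and gives no argument at all. So there is no ``paper's own proof'' to compare against.

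Your proposal is sound and in fact supplies what the paper omits. Reducing to the Jordan curve theorem on $\plane\cong S^2$, defining the $\Zt$-valued side function $f$, and then arguing that the crossing count with any $\pi$ computes $f(s)+f(t)\pmod 2$ is exactly the elementary route one takes to establish this fact without invoking the machinery of homology. You have also correctly isolated the only genuine technical issue---non-transverse intersections of $\pi$ with $c$---and your two suggested remedies (local side-bookkeeping at each component of $\pi\cap c$, or perturbation of $\pi$ to a transverse path) are the standard ways to handle it; either works, and the footnote's crossing definition is designed precisely so that the bookkeeping comes out right. One small point worth making explicit in a full write-up: you should check (or cite) that $\pi\cap c$ has only finitely many connected components so that the local contributions can be summed; for general simple curves this is not automatic, though in every computational setting the paper cares about it is immediate.
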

See \cref{fig:homology-examples-1} for examples.

\begin{figure}
\centering
\includegraphics[scale=1.0,page=1]{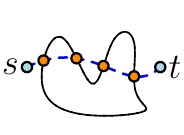}
\hspace{3em}
\includegraphics[scale=1.0,page=5]{homology-examples}
\caption{Examples of two curves with different homology classes.
}
\label{fig:homology-examples-1}
\end{figure}

This is a very important result for our algorithmic problem.
With this, we have a straightforward way of testing
if a curve separates $s$ and $t$.
In fact, the resulting algorithm (counting crossings)
is quite similar to a folklore point-in-polygon algorithm
that counts crossings of the polygon with any ray originating from the query point.

This allows us to define a very helpful structure:
Create an incision (boundary) in the plane along a simple path $\pi$ from $s$ to $t$
(i.e. delete the entire path).
Then, create a copy of the plane with $\pi$ removed.
Create two new copies of $\pi$ (with the endpoints $s$ and $t$ removed).
For each copy of $\pi$, connect its top side to one plane, and its bottom side to the other.%
\footnote{We omit a formalization this construction, but in short:
One could define additional generating open sets from neighbourhoods in the original plane
that are sufficiently small so that each has exactly one entry/exit point of $\pi$,
and hence can be lifted into the new space using $\pi$ as a reference.
Simpler specialized forms like those used by \cite{ChambersEFN23} would also suffice
for our purposes later.}
The result is a space we call the \defn{homology cover} of the plane
w.r.t. the path $\pi$.%
A point in the homology cover is defined by its projection into $\R^2$,
as well as a value $\{0,1\}$ indicating which copy it belongs to.
We obtain the following lemma:
\begin{lemma}[\cite{ChambersEFN23}]
\label{lemma:homology-cover-correctness}
Let $p=(p',b)\in(\R^2\setminus\{s,t\})\times\{0,1\}$ be a point in the homology cover of the plane
with respect to an $s-t$ path $\pi$.
Let $c^*$ be a path in the homology cover
from $p$ to $(p',1-b)$.
Let $c$ be the projection of $c^*$ into $\R^2$.
Then $c$ separates $s$ and $t$.
Moreover, a simple closed curve $c$ in $\R^2\setminus\{s,t\}$
separates $s$ and $t$ if and only if some path $c^*$ with this property exists.
\end{lemma}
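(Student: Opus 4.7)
My plan is to reduce the lemma to a parity argument on sheet transitions, using \cref{fact:homology-path-crossings} as the bridge between crossing parity with $\pi$ and separation of $s$ from $t$.

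The central observation about the construction of the homology cover is that every point away from the two glued copies of $\pi$ carries a well-defined \emph{sheet label} in $\{0,1\}$, and a continuous path in the cover can change its sheet label only by crossing the glued ``portal''. Portal crossings of a path in the cover are in bijection with crossings of $\pi$ by its projection into $\R^2$, which lets us transfer parities between the two spaces.

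For the forward direction, any path $c^*$ from $(p',b)$ to $(p',1-b)$ flips its sheet label an odd number of times, so the projection $c$ crosses $\pi$ an odd number of times. Since $c$ is a closed curve (both endpoints project to $p'$), \cref{fact:homology-path-crossings}---or, more precisely, the $\Zt$-intersection-number statement it encapsulates, which applies to arbitrary closed curves and not merely simple ones---implies that $c$ separates $s$ from $t$. For the reverse direction, given a simple closed curve $c$ separating $s$ and $t$, \cref{fact:homology-path-crossings} gives that $c$ meets $\pi$ in an odd number of crossings. I would then lift $c$ step by step: away from $\pi$ the lift is determined uniquely by the current sheet label, and at each crossing of $\pi$ the lift passes through the appropriate portal copy and flips its label. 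Starting from $(p',b)$ and using the odd parity, the lift closes up at $(p',1-b)$, yielding the desired $c^*$.

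The main obstacle will be handling degeneracies of the interaction of $c$ with $\pi$: tangential contact, pinch points where multiple local branches meet at a single point, or a non-simple projection $c$ arising from a non-simple path $c^*$. These can be resolved either by invoking the careful ``crossing'' convention from the footnote of \cref{fact:homology-path-crossings}, or by perturbing $\pi$ into transverse general position with $c$, which by that same fact does not affect the crossing parity and hence does not affect the conclusion.
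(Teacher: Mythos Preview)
The paper does not actually prove this lemma: it attributes the result to \cite{ChambersEFN23} and simply remarks that it is ``implied by standard properties of homology'', with a slightly more general graph-theoretic version appearing in that reference. So there is no proof in the paper to compare against.

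Your argument is the standard one and is correct. The reduction to sheet-label parity is exactly the content of the covering-space construction, and the bridge via \cref{fact:homology-path-crossings} is the right tool. Your explicit acknowledgment that the forward direction needs the mod-$2$ intersection-number statement for closed curves that need not be simple (since the projection of an arbitrary path $c^*$ can self-intersect) is a point the paper glosses over entirely, so in that respect your treatment is more careful than the paper's. The perturbation/transversality remark for handling degenerate contact with $\pi$ is also appropriate and matches the spirit of the footnote attached to \cref{fact:homology-path-crossings}.
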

This result implied by standard properties of homology,
and we will make heavy use of it.
Chambers et al.~\cite{ChambersEFN23} present a slightly more general version
in an approachable graph-theoretic form.
See \cref{fig:homology-examples-2} for an example of this lemma,
where
the ``starting point'' corresponds to $p$ or $p'$.
An equivalent form of this lemma would be to observe
that a connected curve can either map to one or two connected components
in the homology cover, depending on the homology class of the curve.

\begin{figure}
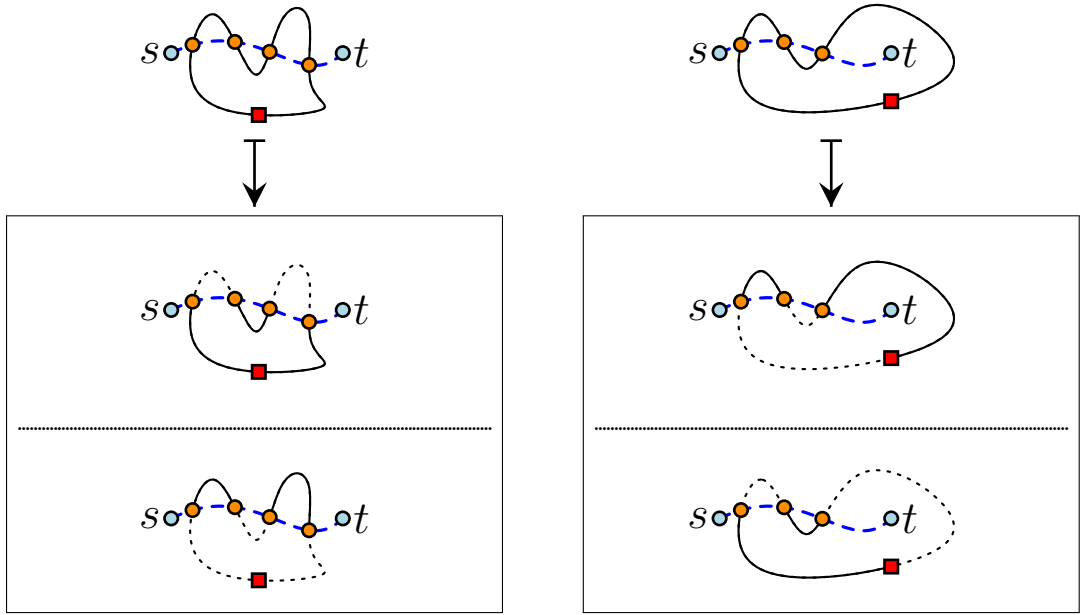

\centering
\begin{minipage}{0.45\textwidth}
  \centering
  \begin{tikzpicture}[node distance=1cm, every node/.style={inner sep=0pt}]
    \node (base) {\includegraphics[scale=1.0,page=2]{homology-examples}};
    \node (cover) [below=1cm of base] {%
      \fbox{%
        \begin{minipage}{\linewidth}
          \centering
          \includegraphics[scale=1.0,page=3]{homology-examples}\\
          \frequentedotfill\\
          \includegraphics[scale=1.0,page=4]{homology-examples}
        \end{minipage}%
      }%
    };
    \draw[|-,decoration={markings,mark=at position 1 with
    {\arrow[scale=2,>=stealth]{>}}},postaction={decorate}, line width=1pt] (base.south) -- ($ (cover.north) + (0,0.3em) $);
  \end{tikzpicture}
\end{minipage}
\hspace{3em}
\begin{minipage}{0.45\textwidth}
  \centering
  \begin{tikzpicture}[node distance=1cm, every node/.style={inner sep=0pt}]
    \node (base) {\includegraphics[scale=1.0,page=6]{homology-examples}};
    \node (cover) [below=1cm of base] {%
      \fbox{%
        \begin{minipage}{\linewidth}
          \centering
          \includegraphics[scale=1.0,page=7]{homology-examples}\\
          \frequentedotfill\\
          \includegraphics[scale=1.0,page=8]{homology-examples}
        \end{minipage}%
      }%
    };
    \draw[|-,decoration={markings,mark=at position 1 with
    {\arrow[scale=2,>=stealth]{>}}},postaction={decorate}, line width=1pt] (base.south) -- ($ (cover.north) + (0,0.3em) $);
  \end{tikzpicture}
\end{minipage}
\caption{Examples of how closed curves in each homology class map into the homology cover,
with a marked ``starting point'': Curves in one homology class map to two distinct closed curves (left),
while the other maps to one longer closed curve (right).}
\label{fig:homology-examples-2}
\end{figure}

\subsection{Homology and Homology Covers Among Obstacles}
\label{subsec:homology-obstacles}

For a set of obstacles $\cC$ with intersection graph $G$
and arrangement $D,\sigma$,
we would like to know which subsets of $\cC$
separate some points $s$ and $t$.

Fix any $s-t$ path $\pi$.
We know from \cref{fact:homology-path-crossings}
that a subset $C\subset\cC$ separates
$s$ and $t$ if and only if it covers a simple closed curve $c$
crossing $\pi$ an odd number of times.
In particular, any such curve $c$ must correspond exactly to a simple cycle $S$ in the arrangement $D$,
and the subset $C\subset\cC$ must correspond exactly to a subset $C'\subset\sigma$.
For a simple $s-t$ path $\pi$ within the dual graph $D^*$,
the homology class of $S$ can be determined by counting the number of edges
of $S$ present in the edges dual to $\pi$, modulo $2$.
This allows us to define a graph-theoretic form of the homology cover w.r.t. $\pi$
specific to the arrangement:
\begin{definition}
For a plane graph $D$
and a simple path $\pi$ through its dual graph,
the \defn{homology cover of $D$ w.r.t. $\pi$}
is a graph $\overline D$ with $2|V(D)|$ vertices
and $2|E(D)|$ edges, defined as follows:
For each vertex $v\in V(D)$,
create two corresponding vertices $v^-,v^+$ in $V(\overline D)$.
For each edge $uv\in E(D)$,
create two corresponding edges:
If $uv$ is dual to an edge in $\pi$,
create the edges $u^-v^+$ and $u^+v^-$.
Else, create the edges $u^-v^-$ and $u^+v^+$.
\end{definition}
We also obtain a simpler case of \cref{lemma:homology-cover-correctness}:
\begin{lemma}
For a plane graph $D$
and a simple dual path $\pi$,
the homology cover $\overline D$ of $D$ w.r.t. $\pi$
has the following property:
There is a path between $v^-,v^+\in V(\overline D)$
passing through the vertex sequence $v^-=u_1^{b_1},u_2^{b_2},\dots,u_k^{b_k}=v^+$,
where $b_i\in\{-,+\}$,
if and only if there is a simple cycle $c=(u_1,u_2,\dots,u_k)$ in $D$
passing through the dual edges of $\pi$
an odd number of times.
\end{lemma}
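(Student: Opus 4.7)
The plan is to exploit the 2-to-1 covering projection $p : \overline{D} \to D$ defined by $p(v^-) = p(v^+) = v$, which sends each lifted edge back to its image in $D$. The key structural observation is that by the construction of $\overline{D}$, a traversal of an edge in $\overline{D}$ either preserves the sign superscript (if the image edge is not dual to $\pi$) or flips it (if it is). Consequently, for any walk in $\overline{D}$ the ending sign equals the starting sign plus the parity of $\pi$-dual crossings of the projected walk, taken modulo $2$.

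For the backward direction, I begin with a simple cycle $(u_1, u_2, \dots, u_k)$ in $D$, with $u_1 = u_k = v$, crossing the dual edges of $\pi$ an odd number of times. I set $b_1 = {-}$ and define $b_{i+1}$ inductively by the rule: $b_{i+1} = b_i$ if the edge $u_i u_{i+1}$ is not dual to a $\pi$-edge, and $b_{i+1} \neq b_i$ otherwise. By the construction of the edge set of $\overline{D}$, each consecutive pair $u_i^{b_i}, u_{i+1}^{b_{i+1}}$ is joined by the (unique) appropriate lift of the edge $u_i u_{i+1}$. Since the total number of sign flips equals the number of $\pi$-dual crossings, which is odd, we obtain $b_k = {+}$. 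Simplicity of the lifted sequence follows from the distinctness of $u_1, \dots, u_{k-1}$ (as $(u_1, \dots, u_k)$ is a simple cycle) together with $u_1^{b_1} = v^- \neq v^+ = u_k^{b_k}$.

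For the forward direction, I project the path $u_1^{b_1}, u_2^{b_2}, \dots, u_k^{b_k}$ in $\overline{D}$ down to $(u_1, u_2, \dots, u_k)$ in $D$. Consecutive projected vertices are adjacent in $D$ by the definition of $\overline{D}$'s edges, and $u_1 = u_k = v$; thus $(u_1, \dots, u_k)$ is a closed walk through $v$. The number of $\pi$-dual crossings of this closed walk equals the number of sign flips along the lifted path, which equals $b_k - b_1 \equiv 1 \pmod 2$, so is odd. Since the original path in $\overline{D}$ is simple, the lifted vertices $u_i^{b_i}$ are distinct; combined with the fact that $u_1 = u_k = v$ but $b_1 \neq b_k$, this forces $u_1, \dots, u_{k-1}$ to be distinct as well, yielding a simple cycle.

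The main step to present carefully is the bookkeeping in the sign-tracking argument and the verification that the two cases of the edge-lifting rule in $\overline{D}$ match exactly the two rules for propagating the $b_i$. Once the covering-space perspective is set up, both directions are then essentially symmetric reversals of each other via $p$ and its inverse branches, and distinctness of the lifted (respectively projected) vertices follows from the distinctness on the other side together with the $v^- \neq v^+$ endpoint distinction.
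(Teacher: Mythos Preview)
Your parity-tracking via the covering projection is exactly the right mechanism, and your backward direction is correct; this is also how the paper views the lemma, which it does not prove separately but presents as the graph-theoretic specialization of \cref{lemma:homology-cover-correctness}.

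The forward direction, however, has a genuine gap. You assert that distinctness of the lifted vertices $u_i^{b_i}$ ``forces $u_1,\dots,u_{k-1}$ to be distinct as well,'' but this inference is false: a simple path in a double cover may visit both $w^-$ and $w^+$ for some $w\neq v$, and then $w$ repeats in the projection. Concretely, let $D$ be the square $a,b,c,d$ together with the diagonal $ac$, and let $\pi$ be the single dual edge crossing $ac$. Then
\[
b^-,\;c^-,\;a^+,\;d^+,\;c^+,\;b^+
\]
is a simple path in $\overline D$ (each listed vertex is distinct and each consecutive pair is adjacent), yet its projection $b,c,a,d,c,b$ is not a simple cycle in $D$. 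What the parity argument actually yields is only that the projected closed walk has odd crossing number with $\pi$; one can then decompose that walk into edge-disjoint simple cycles and conclude by additivity that \emph{some} simple cycle among them has odd crossing number, but that cycle need not have the same vertex sequence and need not even pass through $v$ (take $v$ a leaf attached to a separating cycle). This is not a defect you can patch locally: the lemma's phrasing with matching $u_i$'s is informal, and the paper's actual algorithmic correctness is established later via \cref{lemma:auxiliary-graph-useful}, whose proof works with the boundary of the $s$-component rather than relying on a sequence-preserving bijection here.
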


We will now discuss how to augment the homology cover to consider the obstacles $\cC$
(or more specifically, their representation in the arrangement, $\sigma$).
We will use an ``auxiliary graph''.
\begin{definition}
\label{def:auxiliary-graph}
Let $D,\sigma$ be an arrangement,
and let $\pi$ be a simple dual path between faces $s$ and $t$.
Let $\overline D$ denote the homology cover of $D$.
For each obstacle covering-set $\c'\in\sigma$,
fix a ``canonical'' vertex
$v_\c\in\c'\subset E(D)$.
We define the \defn{auxiliary graph} $H$ as a bipartite graph with vertex set $V(\overline D)\cup(\sigma\times\{-,+\}$.
We call the first part of the vertices the \defn{arrangement vertices}, and the second part the \defn{obstacle vertices}.
Consider a pair of vertices $v^b\in V(\overline D)$ (with $b\in\{-,+\}$) and $(\c',b')\in\sigma\times\{-,+\}$.
If $\c'$ contains an edge incident to $v$ in $D$,
there is a path from $v$ to $v_\c$
crossing the edges dual to $\pi$
an even number of times,
and $b=b'$,
then we add an edge between $v^b$ and $(\c',b')$.
Similarly, we also
add an edge of there is a path crossing edges dual to $\pi$
an odd number of times,
and $b\neq b'$.
Together, these cases are the full set of edges.
\end{definition}
This is somewhat similar to a construction of \cite{KumarLSS21},
but our graph is smaller.
This graph has one very important property:
\begin{lemma}
\label{lemma:auxiliary-graph-useful}
Let $D,\sigma$ be an arrangement,
let $\pi$ be a dual $s-t$ path,
and let $H$ be an auxiliary graph.
Let $P$ be a simple path in $H$
from $v^+$ to $v^-$,
for some vertex $v\in V(D)$,
\emph{or} a simple path from $(\c,-)$
to $(\c,+)$ for some obstacle $\c$.
In either case, the set of obstacles corresponding
to the obstacle vertices along $P$
separates $s$ and $t$.
Moreover, for any set of obstacles $C$ separating $s$ and $t$,
some path of each form exists
including a (possibly equal) subset of $C$.
\end{lemma}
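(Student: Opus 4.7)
\medskip
\noindent\textbf{Proof plan.}
The plan is to prove both directions by passing back and forth between simple paths in $H$ and closed walks in $D$ that lift, via the graph-theoretic homology cover, to open walks in $\overline D$ connecting the two copies $v^+$ and $v^-$ of some arrangement vertex. The combinatorial lemma for $\overline D$ (that a $v^+\!\!-\!\!v^-$ walk in $\overline D$ exists iff $D$ contains a closed walk through $v$ crossing the edges dual to $\pi$ an odd number of times) will be the bridge in both directions. I will treat the ``arrangement-vertex endpoints'' form first, and obtain the ``obstacle-vertex endpoints'' form at the very end by a minor modification.

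\medskip
\noindent\textbf{Forward direction.}
Given a simple path $P$ in $H$ from $v^+$ to $v^-$, write it as
\[
v^+ = u_0^{b_0},\ (\c_1,\beta_1),\ u_1^{b_1},\ (\c_2,\beta_2),\ \dots,\ (\c_k,\beta_k),\ u_k^{b_k}=v^-,
\]
since $H$ is bipartite. For each obstacle vertex $(\c_i,\beta_i)$ the two incident edges of $P$, namely $u_{i-1}^{b_{i-1}}(\c_i,\beta_i)$ and $u_i^{b_i}(\c_i,\beta_i)$, certify (by \cref{def:auxiliary-graph}) paths $\rho_i^{\mathrm{in}}$ from $u_{i-1}$ to $v_{\c_i}$ and $\rho_i^{\mathrm{out}}$ from $v_{\c_i}$ to $u_i$ lying entirely in the subgraph of $D$ induced by $\c_i$, whose numbers of crossings with the edges dual to $\pi$ are congruent modulo $2$ to $b_{i-1}\oplus\beta_i$ and $\beta_i\oplus b_i$ respectively. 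Concatenating $\rho_1^{\mathrm{in}}\rho_1^{\mathrm{out}}\rho_2^{\mathrm{in}}\rho_2^{\mathrm{out}}\cdots$ yields a closed walk $W$ in $D$ through $v$ whose total parity equals $b_0\oplus b_k = {+}\oplus{-} = 1$. By the combinatorial lemma for $\overline D$, $W$ lifts to a path from $v^+$ to $v^-$ in $\overline D$; equivalently the union of edges of $W$ contains a simple cycle of odd $\pi$-parity, and by \cref{fact:homology-path-crossings} this simple cycle separates $s$ and $t$. Since $W$ uses only edges in $\c_1\cup\dots\cup\c_k$, the obstacle vertices along $P$ produce a separating set, which is what we wanted.

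\medskip
\noindent\textbf{Backward direction.}
Conversely, let $C\subseteq\cC$ be any separating set. Then $\bigcup_{\c\in C}\c$ contains a simple cycle $c^{\star}$ in $D$ that crosses the edges dual to $\pi$ an odd number of times. Walking once around $c^{\star}$, mark a starting vertex $v$ and partition the traversal into maximal arcs $\alpha_1,\alpha_2,\dots,\alpha_k$ each lying inside a single obstacle $\c_{i(j)}\in C$; let $u_0=u_k=v$ and $u_j$ be the common vertex of $\alpha_j$ and $\alpha_{j+1}$. For each arc $\alpha_j$, inside its obstacle $\c_{i(j)}$ we may freely detour through the canonical vertex $v_{\c_{i(j)}}$ (since $\c_{i(j)}$ is connected in $D$), and by our standing assumption that no single obstacle separates $s$ and $t$, every $u$--$v_{\c_{i(j)}}$ path within $\c_{i(j)}$ has the same $\pi$-parity, which determines values $\beta_j,b_j\in\{-,+\}$ so that the edges
$u_{j-1}^{b_{j-1}}(\c_{i(j)},\beta_j)$ and $(\c_{i(j)},\beta_j)u_j^{b_j}$ are present in $H$ and the accumulated parity around the constructed walk in $H$ matches that of $c^{\star}$. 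Hence we obtain a walk in $H$ from $v^{+}$ to $v^{-}$ (the sign flip being precisely the odd parity of $c^{\star}$). Finally, any walk between two vertices contains a simple path between the same two vertices, and this simple path only drops obstacle vertices, so its associated obstacle set is a subset of $C$.

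\medskip
\noindent\textbf{Obstacle-endpoint form and main difficulty.}
To obtain the alternative form with endpoints $(\c,-)$ and $(\c,+)$, we simply rotate the starting point of the traversal of $c^{\star}$ to lie in the interior of an arc $\alpha_j$ and take $\c=\c_{i(j)}$; the same construction produces a simple $(\c,-)$--$(\c,+)$ path in $H$. The main obstacle in the argument is the bookkeeping of $\pi$-parities in the backward direction, in particular verifying that the assumption ``no single obstacle separates $s$ and $t$'' forces a \emph{unique} parity between any arrangement vertex of an obstacle and its canonical vertex, so that the signs $\beta_j$ and $b_j$ are well-defined and the total parity of the constructed walk in $H$ agrees with that of $c^{\star}$.
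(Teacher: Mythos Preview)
Your proof is correct and follows essentially the same route as the paper's: both directions pass between paths in $H$ and closed walks in $D$ of odd $\pi$-parity via the same telescoping parity computation, and the backward direction in both decomposes a separating cycle into single-obstacle arcs and assigns signs accordingly. The only small omission is that your ``obstacle-endpoint form'' paragraph treats only the backward direction; the forward direction for a $(\gamma,-)$--$(\gamma,+)$ path follows by the identical concatenation argument (the resulting closed walk now passes through the canonical vertex $v_\gamma$), which the paper handles instead by first reducing to the $v^+$--$v^-$ case using the symmetry of $H$.
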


\begin{proof}
In the forward direction:

For the case where we begin with a simple path $P$ from some $(\gamma,-)$ to $(\gamma,+)$,
we will reduce it to the case of a path from some $v^+$ to some $v^-$.
Consider the crossing point (arrangement) vertex $v^+$ encountered in the path (we use $+$ w.l.o.g.).
Then $(\gamma,+)$ must be incident to $v^-$,
and so there is a simple path $P'$ from $v^+$ to $v^-$
inducing the same set of obstacles.

Suppose now that there is a path $P=[v^+=v_1^{s_1},(\gamma_1,b_1),v_2^{s_2},(\gamma_2,b_2),v_3^{s_3},\dots,v^-]$.
Then, by the construction of $H$,
there exists a path from $v_i$ to $v_{\gamma_i}$ crossing $\pi$ $s_i+b_i$ times$\pmod 2$.
Similarly, there exists a path from $v_{\gamma_i}$ to $v_{i+1}$ crossing $\pi$ $b_i+s_{i+1}$ times$\pmod 2$
(we arbitrarily assign $-$ as $1$ and $+$ as $0$ for this purpose).
In total, this path crosses $\pi$ $0+2b_1+2s_2+\cdots+1=1\pmod 2$ times,
so we are done by
\cref{fact:homology-path-crossings},
since all these sub-paths are sub-paths of the corresponding set of obstacles
(and hence the full path is included in their union).\\

In the backwards direction:

Start with some set of obstacles $C$ that separates $s$ and $t$.
Consider the boundary of the region of $\R^2\setminus(\cup_{c\in C}c)$
containing $s$.
This boundary is a subset of $\cup_{\gamma\in C}\gamma$.
In particular, this boundary is a closed curve, and it also separates $s$ and $t$.
Since the curves are closed, the boundary has a finite (circular) sequence of obstacles
it uses.
Pick an arbitrary starting point to get the sequence $\gamma_1,\gamma_2,\dots,\gamma_k$.
These obstacles also have corresponding intersection points, which we label to get the sequence
$v_1,\gamma_1,v_2,\gamma_2,v_3,\dots,\gamma_k$.
The boundary inducing this sequence crosses $\pi$ an odd number of times by
\cref{fact:homology-path-crossings}.
We wish to show that
there is a path
$v_1^{s_1},(\gamma_1^,b_1),v_2^{s_2},(\gamma_2,b_2),v_3^{s_3},\dots,(\gamma_k,b_k),v_1^{s_{k+1}}$
in $H$.
We can actually determine the values $\{s_i\}$ independently from $\{b_i\}$:
Arbitrarily choose $s_1=+$.
Then, pick $s_{i+1}$ to be $s_i$ plus the number of crossings in the path from $v_i$ to $v_{i+1}$ along the boundary of the original region, modulo $2$.
This guarantees that $s_{k+1}=-$, since we know that this boundary crosses $\pi$ and odd number of times.
To pick $b_i$, simply check if there is a path from $v_{\gamma_i}$ to $s_i$ crossing $\pi$ an even number of times.
If so, pick $b_i=s_i$, else there is a path crossing $\pi$ and odd number of times, and we pick $b_i=1-s_i$.
In either case, there is a matching path from $v_{\gamma_i}$ to $v_{i+1}$ crossing $\pi$
$s_{i+1}-s_i$ times (modulo $2$), by taking the combined chosen paths from $v_{\gamma_i}$ to $v_i$ to $v_{i+1}$,
so the path must exist in $H$ as well.
\end{proof}

This will be enough properties of homology for one of our main results.
For the other, we will need an analogous result in a form of intersection graph:
\begin{definition}
Let $D,\sigma$ be an arrangement
of a set of obstacles $\cC$,
let $\pi$ be a dual $s-t$ path,
and let $H$ be an auxiliary graph.
The \defn{intersection graph in the homology cover} is a graph $\overline G$
obtained by replacing each vertex $v^b$
with a set of edges forming a clique of its neighbors.
\end{definition}
Under this definition,
\cref{lemma:auxiliary-graph-useful}
implies that a path between two vertices $(\c,-)$ and $(\c,+)$
through the intersection graph in the homology cover
corresponds to a set of obstacles separating $s$ and $t$,
and vice-versa.

\subsection{Shortest-Path Queries through the Intersection Graph in the Homology Cover}

Let $s,t$ be points in the plane, and let
$\cC$ be a set of (weighted) obstacles given as curves in $\R^2\setminus\{s,t\}$.
The following key lemma characterizes the usefulness of \afterreview{shortest paths} in the homology cover:
\begin{lemma}
\label{lemma:sp}
For a set of obstacles $\cC$,
a weight function $w:\cC\to\R$,
assign weights to all (directed) edges in the auxiliary graph $H$
to be the weight of the obstacle at the end (or $0$ if it is not an obstacle vertex).
Assign weights to all (directed) edges in the intersection graph in the homology cover
in the same manner.
Then
a minimum-weight subset $C$ of the obstacles separating $s$ and $t$
is given by a sequence of obstacles corresponding to vertices
along any shortest path of the form $(\gamma,-),\dots,(\gamma,+)$
for an obstacle $\gamma\in\cC$
through the auxiliary graph or the intersection graph in the homology cover,
\emph{and} by a sequence of obstacles
corresponding to vertices along any shortest path
of the form $v^+,\dots,v^-$
through the auxiliary graph.
\end{lemma}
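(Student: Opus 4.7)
My plan is to derive this lemma directly from \cref{lemma:auxiliary-graph-useful} via a shortcut argument that controls how many cover-copies of each obstacle are visited in a shortest path. Let $W^*_C$ denote the minimum weight of a separating subset of $\cC$, and let $W^*_P$ denote the minimum weight (over the choice of endpoints) of a simple path of a given specified form. The goal is to show $W^*_P = W^*_C$ and that the obstacles along any such shortest path form a minimum separating subset. I assume positive obstacle weights throughout (zero-weight obstacles can be included for free, and negative weights are not meaningful for this minimization).

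For $W^*_P \geq W^*_C$, the forward direction of \cref{lemma:auxiliary-graph-useful} tells us that the obstacles along any simple path $P$ of the specified form form a separating subset $C_P$. Since each $\gamma \in C_P$ contributes at least $w(\gamma)$ to $w(P)$ (through at least one of its two cover-copies being visited, and the edge-weight landing on an obstacle vertex being $w(\gamma)$), we get $w(P) \geq w(C_P) \geq W^*_C$.

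For $W^*_P \leq W^*_C$ in the $(\gamma_0,-) \to (\gamma_0,+)$ form (in both $\overline{G}$ and $H$), I would apply the following shortcut argument. Suppose, for contradiction, that a shortest such path $P^*$ visits both copies $(\gamma',b')$ and $(\gamma',1{-}b')$ of some obstacle $\gamma' \neq \gamma_0$. Then the subpath between these two copies is itself a valid path of the same form (from $(\gamma',-)$ to $(\gamma',+)$, up to direction), with weight at most $w(P^*)$; by minimality of $P^*$ over all such paths these weights are equal, which forces the prefix from $(\gamma_0,-)$ and the suffix to $(\gamma_0,+)$ to have zero total edge-weight. Under positive obstacle weights any edge whose target is an obstacle vertex has strictly positive weight, so the prefix and suffix must be trivial, forcing $(\gamma',b') = (\gamma_0,-)$ and hence $\gamma' = \gamma_0$, a contradiction. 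Thus $P^*$ visits each non-endpoint obstacle in only one cover-copy, so $w(P^*) = w(C_{P^*})$. Combining with the backward direction of \cref{lemma:auxiliary-graph-useful} applied to a minimum separating set $C^*$ (whose induced path can be iteratively shortcut to weight exactly $w(C^*)$) yields $W^*_P = W^*_C$ and $C_{P^*} = C^*$.

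The main obstacle is the $v^+ \to v^-$ form, since a subpath between two doubly-visited obstacle copies is of the other form and hence not a direct competitor. The analogous shortcut still applies when both copies $u^+$ and $u^-$ of an arrangement vertex are visited; for the remaining case of a doubly-visited obstacle, I would transform the $v^+ \to v^-$ path into a $(\gamma_0,-) \to (\gamma_0,+)$ competitor of the same weight using the canonical-point incidence edges guaranteed by \cref{def:auxiliary-graph} (a parity check confirms that the edges $v_{\gamma_0}^+ \to (\gamma_0,+)$ and $(\gamma_0,-) \to v_{\gamma_0}^-$ both exist, and the two appended edge-weights contribute exactly the single $w(\gamma_0)$ that distinguishes the two forms), reducing to the first form's analysis and concluding $W^*_P = W^*_C$ in both cases. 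Tracing the argument then shows that the obstacles along any shortest path of either form constitute a minimum separating subset.
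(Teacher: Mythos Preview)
Your proposal is correct and follows the same route as the paper: both arguments reduce the lemma to \cref{lemma:auxiliary-graph-useful} together with the observation that one may always pass to a path visiting each obstacle in at most one cover-copy (the paper's formal proof is a single sentence deferring to this correspondence, with the ``visit each obstacle once'' shortcut spelled out only in the informal \cref{sec:homology-informal}). Your write-up is simply a more explicit version of the same idea; in particular, your transformation between the $v^+\to v^-$ and $(\gamma_0,-)\to(\gamma_0,+)$ forms---drop the first arrangement vertex and append the opposite obstacle copy using the symmetry of the auxiliary-graph edges---is exactly what is needed to equate the two minimum path weights, and it works as you sketch.
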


\begin{proof}
    There is a direct correspondence between these paths and subsets by \cref{lemma:auxiliary-graph-useful}.
\end{proof}

The purpose of this lemma is as follows:
It reduces the problem of finding a \emph{global}
minimum separating set of obstacles $C\subset\cC$
to the minimum of \emph{local} shortest-path problems.
This is analogous to (but not quite the same as) constructions
in each of \cite{KumarLSS21} and \cite{CabelloG16,CabelloM18}.
In particular, \cite{KumarLSS21} also
reduced their formulation of the problem to a set of shortest-path queries
using a carefully constructed graph
(theirs is analogous to a hybrid of our auxiliary graph and our intersection graph in the homology cover),
while \cite{CabelloG16} and \cite{CabelloM18} each used an extra step
beyond shortest-path queries.
In particular, the extra step in \cite{CabelloM18} for unit disks is the bottleneck step,
so we will be able to obtain a simultaneous improvement and simplification of their result.
We are now ready to devise algorithms in all three paradigms,
with varying combinations of weighted/unweighted obstacles.

\section{Static Intersection in the Homology Cover}
\label{sec:static-intersection-hc}

In this section, we give the proof of
\cref{lemma:si-values-homology-cover},
and
a visualization of the algorithm for disks in \cref{fig:disk-algorithm}.

\begin{figure}
    \centering
    \begin{subfigure}[t]{0.47\textwidth}
        \centering
        \includegraphics[page=1,scale=0.45]{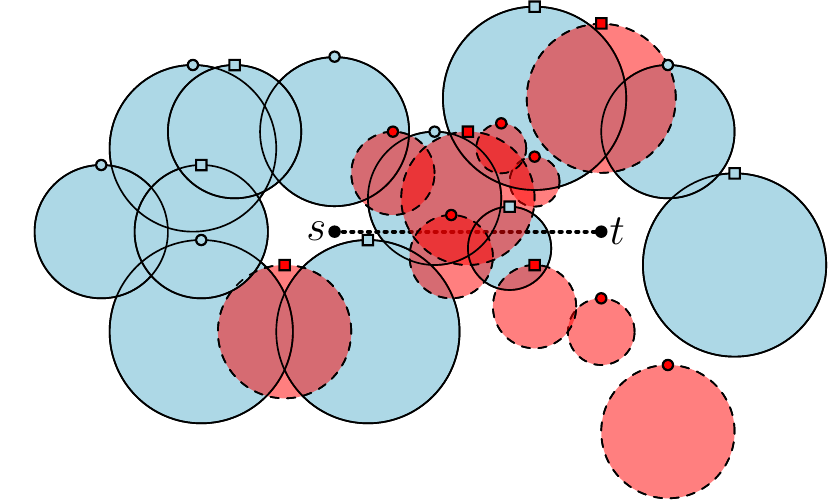}
        \caption{The input to the problem. The set $A$ is given as solid blue disks, and the set $B$ is given as dashed red disks.
        The canonical points are denoted with small filled squares or circles, depending on the indicator bit of the obstacle.}
    \end{subfigure}
    \hfill
    \begin{subfigure}[t]{0.47\textwidth}
        \centering
        \includegraphics[page=2,scale=0.45]{disk-algorithm}
        \caption{The restriction of the problem to one plane copy, given by the ``square indicators''.}
    \end{subfigure}
    \begin{subfigure}[t]{0.47\textwidth}
        \centering
        \includegraphics[page=3,scale=0.45]{disk-algorithm}
        \caption{The restriction of the problem to the other plane copy, given by the ``circle indicators''.}
    \end{subfigure}
    \hfill
    \begin{subfigure}[t]{0.47\textwidth}
        \centering
        \includegraphics[page=5,scale=0.45]{disk-algorithm}
        \caption{The restriction of the problem to the circle indicators plane exclusively above the line $l$
        (the sets $A''$ and $B''$).}
    \end{subfigure}
    \begin{subfigure}[t]{0.47\textwidth}
        \centering
        \includegraphics[page=7,scale=0.45]{disk-algorithm}
        \vspace{-4em}
        \caption{The union of the full disks in $\overline{A''}$.}
    \end{subfigure}
    \hfill
    \begin{subfigure}[t]{0.47\textwidth}
        \centering
        \includegraphics[page=8,scale=0.45]{disk-algorithm}
        \vspace{-4em}
        \caption{The union of the disks and clipped disks in $A''$, obtained by clipping the union of full disks in $\overline{A''}$.}
    \end{subfigure}
    \begin{subfigure}[t]{0.47\textwidth}
        \centering
        \includegraphics[page=10,scale=0.45]{disk-algorithm}
        \vspace{-4em}
        \caption{The queries in $B''$ and their answers via the union of the disks and clipped disks in $A''$.}
    \end{subfigure}
    \caption{A demonstration of the static intersection algorithm for disks in the homology cover.}
    \label{fig:disk-algorithm}
\end{figure}

\begin{proof}
Assume for simplicity that no pair of intersecting pair of obstacles intersects (exclusively) along the line passing through $s$ and $t$ --
since all curves are closed, this can be accomplished by a slight perturbation of $s$ and $t$.
We further assume for simplicity that $\overline{st}$ is along the $x$-axis of the plane.
In all cases, we start with two sets of obstacles $A$ and $B$,
and wish to compute, for each $b\in B$, whether $b$ intersects any element of $A$.
In all cases we will make use of the line segment from $s$ to $t$,
which we denote $\overline{st}$.
We will also make use of ``canonical'' points along each obstacle
(in the sense used by \cref{def:auxiliary-graph}),
which, for simplicity, we assume to be a point with the largest $y$-coordinate in each obstacle
(in some cases, there may be multiple such points, in which case we choose arbitrarily).
Each obstacle in the homology cover also has a parity value (or ``indicator bit'', previously denoted $+$ or $-$)
that essentially indicates which copy of the plane the canonical point
is present in.
For obstacles that do not pass through $\overline{st}$, this indicates
that the entire obstacle is fully contained in that copy of the plane.

For axis-aligned line segments and arbitrary line segments (along with their indicator bits),
we can handle the static intersection problem in the homology cover as follows:
Slice each line segment in each of $A$ and $B$ crossing $\overline{st}$.
That is, clip it both above and below to get two new line segments: One above $\overline{st}$ and one below it.
The one below it needs a new chosen canonical point: Choose the largest $y$-coordinate again,
and flip its indicator bit accordingly.
(inducing new canonical points with flipped indicator bits for the new segment parts below $\overline{st}$).
Call the new sets of segments $A'$ and $B'$:
A segment $b$ in $B$ intersects a segment $a$ in $A$
if and only if one of the (up to two) sub-segments of $b$ in $B'$
intersects one of the (up to two) sub-segments of $a$ in $A'$.
Then, if we cut the homology cover in two spaces according to its ``connection'' along $\overline{st}$,
each resulting segment gets mapped to exactly one copy of the plane.
Equivalently, we partition the resulting segments according to their indicator bits,
since now no sub-segment $b'$ in $B'$ can intersect a sub-segment $a'$ in $A'$
unless they have equal indicator bits.
These two cases can be solved
separately by the algorithms discussed by
Chan and Skrepetos~\cite{ChanS17, ChanS19},
and combined using an \texttt{OR} operation for each element in $B$ that was divided.
Note that solving the static intersection problem for $O(1)$-length polylines
is reduces to solving it for its individual line segments,
so the polyline results follow from the line segment results.

Recall that we have assumed no individual obstacle separates $s$ and $t$,
so no disk contains $s$ or $t$.
Hence, if a disk $D$ intersects $\overline{st}$,
$D\setminus\overline{st}$ also has exactly two connected components
(the top and bottom of the sliced disk).
We will call these ``sliced'' connected components \defn{clipped disks}.
Since we have assumed the intersections along $\overline{st}$ are not unique,
we can also use the closure of these disks.
Thus, following the ideas in the preceding paragraph, we can reduce the static intersection problem with general disks in the homology cover to two static intersection problems with disks and clipped disks \emph{in the plane},
where the clipped disks specifically have their flat boundary along $\overline{st}$.

Call the new sets in this planar static intersection problem $A'$ and $B'$.
Chan and Skrepetos gave an algorithm for solving the static intersection problem with disks in the plane~\cite{ChanS17, ChanS19},
but their algorithm does not immediately extend to clipped disks.
However, their algorithm for disks is quite simple: Create an additively-weighted Voronoi diagram
(equivalently, a Voronoi diagram of disks) for $A'$,
and then use point-location to check for intersections with elements in $B'$.
A natural first idea to extend this is to ask for a Voronoi diagram of disks and clipped disks,
or even just their boundaries.
Unfortunately, the disks in $A'$ are allowed to intersect, and existing algorithms for the Voronoi diagrams of $k$ \emph{disjoint} arcs on the plane take $O(k \log k)$ time~\cite{yap1987n,alt2005voronoi}.
The number of disjoint arcs in the arrangement of $k$ disks can be as high as $O(k^2)$ so we need a different approach for an efficient algorithm.

Our approach is as follows:
Let $l$ be the line through $s$ and $t$ (containing $\overline{st}$).
The sets $A'$ and $B'$ consist of three types of objects:
full disks, clipped disks whose flat boundary is their \emph{top}, along $\overline{st}$,
and clipped disks whose flat boundary is their $\emph{bottom}$, also along $\overline{st}$.
For all the full disks crossed by $l$,
slice them to turn them each into clipped disks whose flat boundaries are along $l$ (but not $\overline{st}$).
To solve the static intersection problem with $A'$ and $B'$,
it suffices to solve it for these further clipped disks.
Moreover, we can now partition all the disks and clipped disks into those \emph{above} $l$ and those \emph{below} $l$,
and handle the static intersection problems in each case \emph{separately}.
By symmetry, we need only devise an algorithm for the clipped disks and full disks lying above $l$.
Call the resulting sets $A''$ and $B''$.

For every disk and clipped disk above $l$ in the query set $B''$, we need to answer if it intersects \textit{some} disk in $A''$.
Let $\overline{A''}$ be the ``extended'' disks. That is, it includes all disks in $A''$, as well as the disks inducing each clipped disk in $A''$.
Our high-level approach will be as follows: First, we will take the union over the elements in $\overline{A''}$, and then clip them using $l$ to get the union over the elements in $A''$.
Finally, we will use point-location for each element in $B''$ with a careful argument to detect intersections.

We now fill in the details. 
The union of the disks in $\overline{A''}$ can be constructed in $O(n \log n)$ time using \textit{power diagrams}~\cite{aurenhammer1988improved}. The boundary of this union is known to have linear complexity~\cite{kedem1986union}.
We can also form the clipped union to retain only the boundary above $l$ in $O(n)$ time
with a traversal through the dual of the arrangement.
The boundary of the clipped union consists of arcs of disk boundaries in $A''$ and line segments along $l$.
With $O(n)$ preprocessing, we can perform point-location in $O(\log n)$ time on the clipped union~\cite{edelsbrunner1986optimal}.
Each element $b\in B''$ is either a disk above $l$, or a clipped disk above $l$ whose lower boundary is $l$ itself.
Denote its centre and radius (or the centre/radius of the disk that induced it if $b$ is a clipped disk) as $c$ and $r$, respectively.
Note that $c$ is inside $b$ if and only if $c$ is above $l$,
which is not always true.
Perform a point-location query from $c$ in the union of elements in $A''$.
If $c$ is inside the union of elements in $A''$
(which are always above $l$)
then clearly $b$ intersects $A''$.
Otherwise, $b$ intersects $A''$ if and only if $c$ has distance $\leq r$ to the union of elements in $A''$, and this distance can also be determined from point-location.
Overall, this algorithm runs in $O(n\log n)$ time.
\end{proof}

\section{Omitted Proofs for Biclique Covers}
\label{sec:appendix-biclique}

In this short section, we give the proofs of \cref{lemma:biclique-cover-to-vw-apsp}
and
\cref{lemma:biclique-covers-homology-cover}.

Below is the proof for \cref{lemma:biclique-cover-to-vw-apsp}:
\begin{proof}
Let the biclique cover of $G$ be denoted $(A_1,B_1),\dots(A_k,b_k)$.
We construct a vertex-weighted directed graph $H=(V\cup U,F)$ whose underlying undirected graph is bipartite, as follows:
\begin{itemize}
    \item The weights of the vertices in $V$ are retained.
    \item Two new weight-$0$ vertices $u_i,u_i'$ are created in $U$ for each biclique $(A_i,B_i)$.
    \item For each biclique $(A_i,B_i)$, and for each $a\in A_i$ create two new (directed) edges in $F$: $(a,u_i)$ and $(u_i',a)$.
    Similarly, for each $b\in B_i$, create edges $(u_i,b)$ and $(b,u_i')$.
\end{itemize}
In this construction, there is a bidirectional correspondence between edges in $G$ and sequential pairs of edges in $H$
centered on a vertex in $U$.
Hence, there is also a correspondence between \afterreview{shortest paths}, and said \afterreview{shortest paths} have identical weighting in both graphs.

Constructing $H$ takes $O(T(n)+S(n))$ time,
and then APSP on $H$ can be computed in $n\cdot S(n)\log(n)$ time.
\end{proof}

Below is the proof for
\cref{lemma:biclique-covers-homology-cover}:

\begin{proof}
We can apply the same reduction to the planar case of line segments as used in the proof of
\cref{lemma:si-values-homology-cover}.
Now, the intersection graph in the homology cover
is the union of two intersection graphs of line segments $O(n)$ in the plane.
In particular, each of the line segments in these intersection graphs is a sub-segment of some segment in $\cC$,
and the resulting intersection graphs are both subgraphs of $G$.
We can obtain biclique covers for each of these graphs, and take the union of the covers to obtain a cover of $\overline G$ of size $\widetilde{O}(n^{4/3})$ in the same time complexity.
If $G$ also has no $k$-clique, then the union of the two intersection graphs has no $2k$-clique
(and hence neither does $\overline G$),
so we can obtain a biclique cover of size $\widetilde{O}_k(n)$ in the same time complexity.
\end{proof}

\section{Lower Bound Details}
\label{sec:lb-details}

In this section, we discuss some background on fine-grained lower bounds,
and state their applications to $(s,t)$ point-separation that follow from
\cref{thm:k-cycle-reduction}.

\subsection{The APSP conjecture}

In \cref{subsec:apsp} we discussed APSP-hardness.
Among the known APSP-hard problems is the \defn{minimum-weight triangle problem}~\cite{WilliamsW10,WilliamsW18}
which asks for a minimum-weight triangle in an edge-weighted undirected $n$-vertex graph.
With this information, \cref{thm:k-cycle-reduction} implies the following results:
\begin{corollary}
\label{cor:lb-segments-apsp}
If the weighted $(s,t)$-point separation problem with $m$ line-segment obstacles
(or length-$3$-rectilinear-polylines)
can be solved in $O(m^{\frac32-\varepsilon})$ time
for some $\varepsilon>0$,
then there exists some $\varepsilon'>0$ such that
edge-weighted APSP on $n$ vertices
can be solved in $O(n^{3-\varepsilon'})$ time.
\end{corollary}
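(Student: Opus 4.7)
The plan is to reduce the minimum-weight triangle problem, which is APSP-complete, to the weighted $(s,t)$ point-separation problem via \cref{thm:k-cycle-reduction} with the fixed parameter $k = 3$. Since the minimum-weight triangle problem on $n$ vertices is equivalent (under subcubic reductions) to edge-weighted APSP on $n$ vertices by the results cited in \cref{subsec:apsp}, showing a truly subcubic algorithm for minimum-weight triangle is enough to conclude a truly subcubic algorithm for APSP.

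First I would take an instance of minimum-weight triangle: an edge-weighted directed graph $G$ on $n$ vertices with $m = \Theta(n^2)$ edges and maximum edge-weight $W$. I would assume without loss of generality that $G$ has no self-loops, so that every closed walk of length $3$ is a triangle and minimum-weight $3$-walks coincide with minimum-weight triangles. Applying \cref{thm:k-cycle-reduction} with $k = 3$ produces one of the obstacle families $\cC_i$ (either line segments or length-$3$ rectilinear polylines, depending on which variant of the corollary we are proving), with total size $2 \cdot 3 \cdot m + 6m = 12m = \Theta(n^2)$ obstacles; the construction takes time linear in this size.

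Next I would invoke the hypothetical $O(N^{3/2 - \varepsilon})$-time algorithm for $(s,t)$ point-separation on this instance with $N = \Theta(n^2)$ obstacles, obtaining the minimum weight $w^\star$ of a separating subset of $\cC_i$. By \cref{thm:k-cycle-reduction}, subtracting the fixed shift $(k + 6)W = 9W$ from $w^\star$ yields the minimum weight of a $3$-walk in $G$, and the required side condition $w \leq kW = 3W$ is automatically satisfied because the weight of any triangle is at most $3W$. This computes the minimum-weight triangle in time $O(n^2) + O(N^{3/2 - \varepsilon}) = O(n^{3 - 2\varepsilon})$.

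The last step is to chain this with APSP-completeness of minimum-weight triangle: a subcubic algorithm for minimum-weight triangle, via the subcubic equivalences of the APSP family~\cite{WilliamsW10,WilliamsW18}, implies a subcubic algorithm for edge-weighted APSP, so there exists some $\varepsilon' > 0$ with APSP solvable in $O(n^{3 - \varepsilon'})$ time. The main (and essentially the only) subtle point is checking that the obstacle count blows up by exactly the right polynomial factor $\Theta(m) = \Theta(n^2)$, so that an exponent of $3/2$ in $N$ converts to an exponent of $3$ in $n$; once this bookkeeping is in place the corollary is a direct combination of \cref{thm:k-cycle-reduction} with known APSP-hardness, and the same argument with $\cC_3$ in place of $\cC_1$ handles the length-$3$-rectilinear-polyline case without any additional work.
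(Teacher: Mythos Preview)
Your proposal is correct and follows essentially the same approach as the paper: the paper does not spell out a proof but simply notes that minimum-weight triangle is APSP-hard and that \cref{thm:k-cycle-reduction} (with $k=3$) then yields the corollary, which is exactly the reduction you carry out with the bookkeeping made explicit.
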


\begin{corollary}
\label{cor:lb-polylines}
If the weighted $(s,t)$-point separation problem with $N$ length-$2$-polyline obstacles
can be solved in $O(N^{\frac32-\varepsilon})$ time
for some $\varepsilon>0$,
even just for the case that the number of unique intersection points of the obstacles is $O\left(\sqrt{N}\right)$,
then there exists some $\varepsilon'>0$ such that
edge-weighted APSP on $n$ vertices
can be solved in $O(n^{3-\varepsilon'})$ time.
\end{corollary}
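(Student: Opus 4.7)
The plan is to combine \cref{thm:k-cycle-reduction} with a padding argument and a fine-grained equivalence chain. Since edge-weighted APSP on $n$ vertices is subcubic equivalent to minimum-weight triangle on an $n$-vertex dense directed graph, which is the minimum-weight $3$-walk problem in a graph with no self-loops, I would reduce from this latter formulation. First, I would apply \cref{thm:k-cycle-reduction} with $k=3$ and $m = \Theta(n^2)$ to produce the length-$2$ polyline obstacle set $\cC_2$, obtaining $N_0 = \Theta(n^2)$ obstacles and $I_0 = \Theta(n^2)$ unique intersection points, together with designated points $s$ and $t$.

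Next, I would enforce the $I = O(\sqrt{N})$ hypothesis by padding with $Q = \Theta(n^4)$ additional length-$2$ polyline obstacles, each placed in its own small pairwise-disjoint region of the plane far from the original construction and from $s, t$, and each assigned weight $0$ (or any weight larger than the realizable optimum). These padding polylines contribute no new unique intersection points and cannot belong to any non-trivial separating set, so the optimal separating-set weight is preserved while $N = \Theta(n^4)$ and $I = I_0 = \Theta(\sqrt{N})$, meeting the hypothesis of the corollary. Applying the hypothesized $O(N^{3/2-\varepsilon})$ algorithm on the padded instance then takes $O(n^{6-4\varepsilon})$ time and, by the correctness of \cref{thm:k-cycle-reduction}, recovers the minimum-weight triangle, thereby solving APSP within the same bound.

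The main obstacle is that this direct $k=3$ argument only yields a genuinely subcubic APSP algorithm when $\varepsilon > 3/4$. To cover all $\varepsilon > 0$, I would instead rerun the plan starting from minimum-weight $k$-cycle on a $k$-circle-layered dense graph with $v$ vertices for a sufficiently large constant $k$ depending on $\varepsilon$, so that the resulting $O(v^{6-4\varepsilon})$-time $k$-cycle algorithm beats the trivial $O(v^k)$ bound by a polynomial factor. The standard subcubic-equivalence chain (between min-weight $k$-cycle on $k$-circle-layered graphs, min-weight $k$-clique, and min-weight triangle via tensor-product / partitioned-enumeration reductions) then amplifies this speedup into a strictly subcubic triangle algorithm on $\Theta(v^{k/3})$ vertices, producing an APSP speedup $\varepsilon' > 0$ of the form $3(k - 6 + 4\varepsilon)/k$. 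Justifying this final equivalence chain carefully, and verifying that the padding blow-up and the choice of $k$ cooperate so that any fixed $\varepsilon > 0$ yields a strictly positive $\varepsilon'$, is the delicate piece.
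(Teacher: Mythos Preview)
Your padding idea and the observation that it only yields a subcubic conclusion for $\varepsilon>3/4$ are both sound, but they arise from taking the intersection-point count in \cref{thm:k-cycle-reduction} at face value. The whole purpose of the length-$2$ polyline construction $\cC_2$ (as opposed to $\cC_1$) is that the polylines are bent so that distinct edge-polylines meet \emph{only} at the $(k{+}1)n$ grid points, and the vertex-segment paths contribute only $O(kn)$ further joints; the bends themselves are not intersection points. Hence for a dense $n$-vertex minimum-weight-triangle instance with $k=3$ one already has $N=\Theta(n^2)$ obstacles and $I=\Theta(n)=\Theta(\sqrt N)$ unique intersection points, with no padding required. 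The hypothesised $O(N^{3/2-\varepsilon})$ algorithm then runs in $O(n^{3-2\varepsilon})$ time and solves minimum-weight triangle directly, giving $\varepsilon'=2\varepsilon$ for every $\varepsilon>0$. This is the argument the paper intends, and it is exactly why the corollary is said to ``essentially match'' the arrangement-model upper bound of \cref{thm:not-match-cg16-arrangement}.

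Independently of that, your fallback plan for small $\varepsilon$ contains a genuine gap. You claim an equivalence chain that converts an $O(v^{6-4\varepsilon})$-time min-weight $k$-cycle algorithm on dense $v$-vertex graphs into a truly subcubic triangle algorithm, with speedup $\varepsilon'=3(k-6+4\varepsilon)/k$. That formula presupposes a reduction from min-weight triangle on $N$ vertices to min-weight $k$-clique (or $k$-cycle) on $\Theta(N^{3/k})$ vertices. No such reduction is known; the standard tensor/partition reduction runs the other direction (min-weight $k$-clique on $v$ vertices to min-weight triangle on $v^{\lceil k/3\rceil}$ vertices). Consequently, an $O(v^{k-\delta})$ algorithm for min-weight $k$-cycle with $k>3$ is not known to yield truly subcubic APSP, and the argument as written does not establish the corollary for all $\varepsilon>0$.
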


This second case is particularly interesting in the context of our algorithm
for the arrangement model (\cref{thm:not-match-cg16-arrangement}), where it essentially matches.

\subsection{The $(2l+1)$-Clique Conjecture}

Under a different fine grained lower bound hypothesis,
we can obtain a stronger lower bound.
The $(2l+1)$-clique hypothesis
conjectures that
there is no $O(n^{2l+1-\varepsilon})$-time algorithm
for the minimum-weight $(2l+1)$-clique problem,
for any fixed $\varepsilon>0$.
Lincoln, Williams, and Williams
show that, if this hypothesis holds,
then there does not exist an $O(n^2+mn^{1-\epsilon})$-time
algorithm for shortest $(2l+1)$-cycle
in a directed graph with $m=\Theta\left(n^{1+\frac1l}\right)$ edges~\cite{lincoln2018tight}.
Importantly, their construction operates on a very specific form of graph in which the $k$-cycles coincide exactly with the $k$-walks (moreover, there are no cycles with fewer than $k$ vertices),
called a $k$-circle-layered graph
(as we mentioned before).
We omit the precise definition of this type since the coincidence of $k$-cycles and $k$-walks
is the only property we need for these graphs.

We will use this hypothesis and result here to present a stronger lower bound for
the $(s,t)$ point-separation problem with line segments:

\begin{theorem}
\label{thm:lb-segments-clique}
Assume the $(2l+1)$-clique hypothesis holds.
Then there does not exist
an $O(N^{2-\varepsilon})$-time algorithm
for the weighted $(s,t)$ point-separation problem
with $N$ line segment obstacles (or length-$3$-rectilinear-polyline obstacles),
for any fixed $\varepsilon>0$.
\end{theorem}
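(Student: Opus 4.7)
The plan is to reduce the minimum-weight $(2l+1)$-cycle problem to $(s,t)$ point-separation via Theorem~\ref{thm:k-cycle-reduction}, choosing $l$ large depending on $\varepsilon$ so as to defeat the Lincoln--Williams--Williams conditional lower bound. First I would fix any $\varepsilon > 0$ and pick a positive integer $l$ with $l > 1/\varepsilon - 1$, equivalently $\varepsilon(1 + 1/l) > 1/l$. This one choice, made at the very start, is the cornerstone of the whole argument.

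Next I would invoke the Lincoln--Williams--Williams result cited in the paper: under the $(2l+1)$-clique hypothesis, no algorithm solves minimum-weight $(2l+1)$-cycle on directed positively-edge-weighted graphs with $n$ vertices and $m = \Theta(n^{1+1/l})$ edges in $O(n^2 + mn^{1-\delta})$ time for any $\delta > 0$. Their construction is $(2l+1)$-circle-layered, so $(2l+1)$-cycles coincide with $(2l+1)$-walks there, and the bound transfers to minimum-weight $(2l+1)$-walk. I would then feed such an instance into Theorem~\ref{thm:k-cycle-reduction} with $k = 2l+1$, using either $\cC_1$ (line segments) or $\cC_3$ (length-$3$ rectilinear polylines), producing $N = (4l+8)m = \Theta(m) = \Theta(n^{1+1/l})$ obstacles in $O(N)$ time. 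The hypothesis $w \leq kW$ there holds trivially, since any $(2l+1)$-walk has weight at most $(2l+1)W$, and the minimum separating weight equals $w^* + (k+6)W$ (where $w^*$ is the minimum walk weight), so a single subtraction reads off $w^*$.

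Supposing for contradiction that weighted $(s,t)$ point-separation with line segment (respectively length-$3$ rectilinear polyline) obstacles admits an $O(N^{2-\varepsilon})$ algorithm, I would then obtain a minimum-weight $(2l+1)$-walk algorithm running in time $O(N^{2-\varepsilon}) = O(m^{2-\varepsilon}) = O(n^{(1+1/l)(2-\varepsilon)}) = O(n^{2 + 1/l - \Delta})$, where $\Delta := \varepsilon(1+1/l) - 1/l > 0$ by our choice of $l$. Since this is in particular $O(n^2 + m n^{1 - \Delta/2})$, it contradicts the LWW lower bound. The main obstacle I anticipate is not the reduction itself, which is essentially bookkeeping once Theorem~\ref{thm:k-cycle-reduction} is in place, but the quantifier interaction at the very start: the conclusion ``no $O(N^{2-\varepsilon})$ algorithm for any $\varepsilon > 0$'' is obtained by letting the parameter $l$ depend on $\varepsilon$, so one must confirm that the $(2l+1)$-clique hypothesis is being invoked in its standard ``for every $l$'' form. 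The only algebraic content is the exponent inequality $(1+1/l)(2-\varepsilon) < 2 + 1/l$, which is exactly equivalent to the chosen bound on $l$.
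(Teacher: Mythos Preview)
Your proposal is correct and follows essentially the same route as the paper: reduce min-weight $(2l+1)$-walk to point-separation via Theorem~\ref{thm:k-cycle-reduction}, invoke the Lincoln--Williams--Williams lower bound on $(2l+1)$-circle-layered graphs with $m=\Theta(n^{1+1/l})$, and choose $l$ large in terms of $\varepsilon$ to force a contradiction. The only cosmetic differences are that the paper picks $l\ge 2/\varepsilon$ and writes the running time as $O\bigl(mn^{(1+1/l)(1-\varepsilon)}\bigr)$ before simplifying, whereas you take the slightly sharper threshold $l>1/\varepsilon-1$ and work directly with the exponent $(1+1/l)(2-\varepsilon)$; both arrive at the same contradiction. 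Your explicit remark about the quantifier interaction (that the hypothesis must be read as holding for the specific $l$ you select, hence effectively for all $l$) is a point the paper leaves implicit.
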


\begin{proof}
We give a proof by contradiction.
Assume that $l$ is a constant.
Assume there exists an $O(N^{2-\varepsilon})$-time
algorithm for the $(s,t)$ point-separation
problem with $N$ line segment obstacles (or length-$3$-rectilinear-polyline obstacles)
and some fixed $\varepsilon>0$.
By \cref{thm:k-cycle-reduction},
there is an algorithm for minimum-weight $(2l+1)$-walk (over $m$ edges)
running in $O(m^{2-\varepsilon})$ time.
Pick some fixed $l\geq\frac2\varepsilon$,
and consider an instance of the minimum-weight $(2l+1)$-cycle problem
in a directed graph with $m=\Theta(n^{1+\frac1l})$ edges,
in a graph where the $(2l+1)$-cycles coincide exactly with the $(2l+1)$-walks.
Since $m=\Theta(n^{1+\frac1l})$,
we have an
$O\left(mn^{\left(1+\frac1l\right)(1-\varepsilon)}\right)$-time algorithm for
the min-weight $(2l+1)$-cycle problem in this graph.
Further simplifying:
$$\left(1+\frac1l\right)\left(1-\varepsilon\right)
\leq
\left(1+\frac1l\right)\left(1-\frac2l\right)
=
1+\frac1l-\frac2l-\frac2{l^2}
<
1-\frac1l,
$$
so
$$O\left(mn^{\left(1+\frac1l\right)(1-\varepsilon)}\right)
\subset
O\left(mn^{1-\frac1l}\right).$$
This contradicts the $(2l+1)$-clique hypothesis,
so we are done.
\end{proof}

Since the construction here is essentially the same as that of the lower bound via the APSP conjecture,
the same corollaries also extend:

\subsection{Unweighted Lower Bound}

Lastly, we can apply the same technique to reduce from directed $k$-walk
\emph{detection} (i.e. unweighted)
to the unweighted $(s,t)$ point-separation problem to arrive at the following theorem,
when combined with another result of 
Lincoln, Williams, and Williams~\cite{lincoln2018tight}:

\begin{theorem}
\label{thm:unweighted-lb}
If there exists $\varepsilon>0$ such that unweighted $(s,t)$ point-separation
can be solved in $O(N^{3/2-\varepsilon})$ time
for $N$ obstacles that are uniformly all line segments or length-$3$ rectilinear polylines,
then there exists $\varepsilon'>0$ such that max-$3$-SAT can be solved in $O(2^{(1-\varepsilon')n})$ time for $n$ variables.
\end{theorem}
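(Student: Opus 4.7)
The plan is to chain two reductions: first from max-$3$-SAT to unweighted $k$-cycle detection in sparse directed graphs via a SETH/max-SAT-style result of Lincoln, Vassilevska Williams, and Williams~\cite{lincoln2018tight}, and then from $k$-cycle detection to unweighted $(s,t)$ point-separation via the detection form of \cref{thm:k-cycle-reduction}. In the unweighted regime the $(k+6)W$ additive threshold from \cref{thm:k-cycle-reduction} collapses to a trivial additive constant, so what was stated as a weight comparison becomes a pure ``does a separating subset of the right size exist'' check.

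The reduction proceeds as follows. Start with a directed $n$-vertex, $m$-edge $k$-circle-layered graph produced by the Lincoln--Williams--Williams construction, in which every $k$-walk is automatically a simple $k$-cycle, and for which the conditional lower bound asserts that detecting a $k$-cycle requires $\Omega(m^{\alpha(k)})$ time with $\alpha(k)\to 2$ as $k\to\infty$ (taking $m=\Theta(n)$). Feed such a graph through \cref{thm:k-cycle-reduction} to obtain $N=O(km)$ obstacles, uniformly either line segments or length-$3$ rectilinear polylines, together with two points $s,t$, such that the graph has a $k$-walk if and only if some subset of the obstacles separates $s$ and $t$ with the prescribed size. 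A hypothetical $O(N^{3/2-\varepsilon})$-time algorithm for unweighted $(s,t)$ point-separation then decides $k$-cycle existence in time $O_k(m^{3/2-\varepsilon})$.

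Choosing the constant $k$ large enough so that $\alpha(k) > 3/2 - \varepsilon/2$ contradicts the conditional lower bound, refuting the underlying hypothesis. By the form of that hypothesis in \cite{lincoln2018tight}, its refutation directly yields an $O(2^{(1-\varepsilon')n})$-time algorithm for max-$3$-SAT for some $\varepsilon'>0$, which is the conclusion of the theorem. The whole argument is parametric in $\varepsilon$: we first receive $\varepsilon$ from the hypothesis, then invoke the LWW statement to fix $k$, and finally the bookkeeping $N=O(km)$ is harmless because $k$ is a constant.

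The main obstacle is verifying that \cref{thm:k-cycle-reduction} remains faithful as a pure \emph{detection} reduction. Concretely, one must check that in any separating subset of $\cC$, exactly one maximal column of vertex segments is forced to appear (accounting for the fixed $k+6$ additive cost), and that the remaining chosen obstacles must form one of the row-by-row edge-segment chains illustrated in \cref{fig:lower-bound-segments} and \cref{fig:lower-bound-polylines}, corresponding to an actual $k$-walk rather than a shorter closed structure. This ``walk, not shorter loop'' rigidity is precisely what the $k$-circle-layered property of the LWW graphs provides: there are no $k$-walks that fail to be simple $k$-cycles, so the combinatorial correspondence is tight and the reduction carries over verbatim to both the line-segment and length-$3$ rectilinear polyline obstacle classes.
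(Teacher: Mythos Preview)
Your proposal is correct and follows the same approach as the paper: compose \cref{thm:k-cycle-reduction} (in its detection form) with the Lincoln--Williams--Williams conditional hardness for directed $k$-cycle, then read off the max-$3$-SAT consequence. The paper's proof is terser---it simply cites \cite[Corollary~9.3]{lincoln2018tight} as a black box that turns an $O(m^{3/2-\varepsilon})$ $k$-cycle algorithm (on $k$-circle-layered graphs) into a sub-$2^n$ max-$3$-SAT algorithm---whereas you attempt to unpack that corollary into an explicit ``$\alpha(k)\to 2$, choose $k$ large'' argument. Your extra paragraph on why the unweighted reduction still forces a full column of vertex segments plus a row-by-row chain is a useful sanity check the paper leaves implicit. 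One caution: the specific parameterization you state ($m=\Theta(n)$ with $\alpha(k)\to 2$) should be checked against the actual statement of LWW Corollary~9.3 rather than reconstructed from memory; the paper sidesteps this by citing the corollary directly, and you may find it cleaner to do the same.
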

Note that there are at most $2^3\cdot\binom{n}{3}$ clauses in max-$3$-SAT, and polynomial factors in $n$ can be ``hidden'' by slightly decreasing $\varepsilon'$.
To the best of our knowledge, it is still true that no algorithm for max-$3$-SAT with this time complexity is known in the general case, so progress beyond this threshold on $(s,t)$ point-separation would imply progress on a much more fundamental problem.

\begin{proof}
Assume $k$ is a constant.
Apply \cref{thm:k-cycle-reduction}
to obtain a reduction to unweighted $(s,t)$ point-separation,
If we assume the existence
of an algorithm running in $O(N^{3/2-\varepsilon})$
time for the $(s,t)$ point-separation problem,
we obtain an algorithm for directed $k$-walk over $m$ edges
running in $O(m^{3/2-\varepsilon})$ time.
Lincoln, Williams, and Williams~\cite[Corollary 9.3]{lincoln2018tight}
show that such an algorithm for directed $k$-cycle in a
graph where $k$-walks and $k$-cycles coincide
also implies the algorithm for max-$3$-SAT
with the stated time complexity.
\end{proof}

}

\end{document}